\newtheorem{theorem}{Theorem}[section]
\definecolor{mygreen}{rgb}{0.1,0.4,0.2}
\newif\ifshowcomments
    \newcommand{\new}[1]{\textcolor{black}{#1}}
    \newcommand{\adri}[1]{\textcolor{olive}{[adri: #1]}}
    \newcommand{\nuria}[1]{\textcolor{magenta}{[nuria: #1]}}
    \newcommand{\georgina}[1]{\textcolor{purple}{[georgina: #1]}}
    \newcommand{\new}[1]{\textcolor{brown}{#1}}
    \newcommand{\adri}[1]{}
    \newcommand{\nuria}[1]{}
    \newcommand{\georgina}[1]{}
\newcommand{\ERG}{\texttt{ERG-Link}}
\DeclareMathOperator{\CT}{\mathsf{CT}}
\DeclareMathOperator{\Rd}{\mathcal{R}_{\text{diam}}}
\DeclareMathOperator{\Rg}{\mathsf{R_{tot}}}
\DeclareMathOperator{\Bt}{\mathsf{B_R}}
\DeclareMathOperator{\diag}{diag}
\DeclareMathOperator{\Tr}{Tr}
\DeclareMathOperator*{\argmin}{arg\,min}
\title{Structural Group Unfairness:\\ Measurement and Mitigation by means of the Effective Resistance}
\author {
    Adrian Arnaiz-Rodriguez\textsuperscript{\rm 1},
    Georgina Curto\textsuperscript{\rm 2},
    Nuria Oliver\textsuperscript{\rm 1}
}
\begin{document}

\maketitle

\begin{abstract}
Social networks contribute to the distribution of social capital, defined as the relationships, norms of trust and reciprocity within a community or society that facilitate cooperation and collective action. Therefore, better positioned members in a social network benefit from faster access to diverse information and higher influence on information dissemination. 
A variety of methods have been proposed in the literature to measure social capital at an individual level. However, there is a lack of methods to quantify social capital at a group level, which is particularly important when the groups are defined on the grounds of protected attributes.
To fill this gap, we propose to measure the social capital of a group of nodes by means of the effective resistance and emphasize the importance of considering the entire network topology. 
Grounded in spectral graph theory, we introduce three effective resistance-based measures of group social capital, namely \textit{group isolation}, \textit{group diameter} and \textit{group control}, where the groups are defined according to the value of a protected attribute. We denote the social capital disparity among different groups in a network as \emph{structural group unfairness}, and propose to mitigate it by means of a budgeted edge augmentation heuristic that systematically increases the social capital of the most disadvantaged group.
In experiments on real-world networks, we uncover significant levels of structural group unfairness when using gender as the protected attribute, with females being the most disadvantaged group in comparison to males. We also illustrate how our proposed edge augmentation approach is able to not only effectively mitigate the structural group unfairness but also increase the social capital of all groups in the network. 
\end{abstract}

\maketitle

\section{Introduction}

Online social networks play an important role in defining and sustaining the social fabric of human communities. They allow individuals to connect, interact and share information with one another over the internet. They have opened up new opportunities for personal and professional networking, entertainment and learning. However, the formation of social networks ---whether through organic growth or recommendations--- can create imbalances in network positions which condition the access to resources and information~\citep{burt2004}. These network inequalities have an impact on the social capital of its members, which exists in the relations among individuals~\citep{coleman1988}. Better positioned network members benefit from faster access to diverse information, higher influence on information dissemination and more control of the information flow~\citep{granovetter1983strength,Burt2000,gundougdu2019bridging}. In practical terms, this means that individuals with a strategic position in the network will have more influence over others, and better access to information and opportunities regarding jobs, health, education or finance. %

Furthermore, link recommendation algorithms that pervade social media platforms tend to connect  similar users, contributing to the homophily and clustering of the network~\citep{su2016effect}. These \emph{filter bubbles} limit the access to diverse individuals~\citep{granovetter1983strength}, exacerbate the isolation and polarization of groups, reduce the opportunities of innovation and aggravate the perpetuation of societal stereotypes~\citep{garimella2018political}. In sum, the topology of the network can lead to a vicious cycle where those who are disadvantaged accumulate fewer opportunities to improve their social capital~\citep{fish2019gaps}. 

A variety of graph intervention methods have been proposed in the literature to mitigate disparities in social capital at an individual level~\citep{bashardoust2022reducing}. However, there is a lack of methods that consider such disparities at a group level, which is particularly relevant when the groups correspond to socially vulnerable groups, \textit{i.e.}, those defined on the grounds of sex, race, color, language, religion, political or other opinion, national or social origin, association with a national minority, property, birth or other~\citep{UnitedNations1966}.
Focusing on the group level also supports the development of inclusive solutions at scale that benefit entire communities, promoting equity, diversity and the inclusion of disadvantaged groups. We denote the disparity in social capital among different groups in the network as \emph{structural group unfairness}.

We consider a setting where each node in the network is a source of unique information and, therefore, access to all nodes is equally important. In this context, information flow is an integral component of the social capital and a distance metric that quantifies the total information flow in the graph, considering high-order relations that expand beyond the immediate neighbors, is of utmost importance. 
We propose using the effective resistance to measure the overall information flow between pairs of nodes \new{(and thus the social capital)}, since it is a theoretically grounded continuous graph diffusion metric that considers both \emph{local} and \emph{global} properties of the network's topology~\citep{chung1997spectral}. In Section~\ref{sec:groupmetrics}, we introduce three measures of group social capital ---\emph{group isolation}, \emph{group diameter} and \emph{group control}--- based on the effective resistance where the groups are defined according to the value of a protected attribute of interest. Using these measures of social capital, we define three measures of structural group unfairness in Section~\ref{sec:disparitymetrics}, and frame the challenge of mitigating structural group unfairness as a budgeted edge augmentation task in Section~\ref{sec:alg}. %
This section also presents the Effective Resistance Group Link (\ERG{}{}) algorithm, a greedy edge augmentation algorithm that iteratively adds edges to the graph to increase the social capital of the most disadvantaged group. %
In experiments on real-world networks, described in Section~\ref{sec:experiments}, we uncover significant levels of structural group unfairness when using gender 
as the protected attribute, with females being the most disadvantaged group in comparison to males. We also illustrate how our approach is able to not only mitigate \new{disparities in group social capital}, but also increase the social capital of all the groups in the network. 

\section{Related Work}

\paragraph{Social Capital}
Social capital is as a multidimensional construct that has been extensively studied in sociology, political science, economics, and more recently, computational social science~~\citep{putnam2015}. It is defined as the networks, relationships, and norms of trust and reciprocity within a community or society that facilitate cooperation and collective action~\citep{coleman1988}. In simple terms, social capital is the value derived from connections between people. It can be measured and analyzed both at an individual and collective levels~\citep{borgatti1998network} and it has been characterized according to different criteria. Some authors propose three main dimensions of social capital, namely: structural, emphasizing the relationships among individuals, organizations and communities; cognitive, focusing on the shared values, norms and beliefs that bind members of a group or community; and relational, highlighting the intensity and quality of relationships, including reciprocity, trust and obligations among individuals~\citep{nahapiet1998}. Others have proposed the distinction between bonding, bridging, and linking social capital~\citep{szreter2004}. Bonding social capital captures the aspects of ``inward looking'' communities that reinforce exclusive identities and homogeneous groups~\citep{coleman1988}; bridging social capital refers to ``outward looking'' networks across different groups that do not necessarily share similar identities~\citep{burt2004,granovetter1983strength}; and linking social capital characterizes the trusting relationships and norms of respect across power or authority gradients~\citep{woolcock2001place}. The three forms are important for the well-being of individuals and communities: bonding social capital contributes to social cohesion and support; bridging social capital to mutual understanding, solidarity and respect; and linking social capital to mobilize political resources and power.  

\paragraph{Computational Models of Social Capital}

Network analysis offers a robust computational framework to examine and quantify social capital~\citep{coleman1988}. We consider a setting where all the nodes in the network may be sources of relevant information. As a consequence, access to all nodes ---not just the sources or seeds of information--- is equally important. In this context, \emph{information flow} is an integral component of the social capital, and a variety of methods have been proposed to characterize it, mainly through two concepts: centrality and criticality~\citep{borgatti2005}. 

Centrality measures the relative importance or prominence of a node in the network, quantifying its ability to reach the rest of nodes. Different approaches have been proposed in the literature to measure centrality, including the degree centrality, closeness, local clustering, the assortativity coefficient~\citep{newman2003mixing}, Katz centrality~\citep{katz1953new} and PageRank~\citep{page1998pagerank}.
Criticality reflects the node's level of influence or vulnerability within the network~\citep{tizghadam2009graph}. Nodes with high criticality are essential, such that their failure or disruption can have significant consequences, cascading effects or system-wide impact. Measures of criticality include effective size~\citep{burt2004}, redundancy~\citep{borgatti1998network} and shortest path betweenness~\citep{jackson2019human}.

However, previously proposed methods are insufficient to accurately quantify the overall information flow in the network for several reasons. First, they model the distance between nodes as the shortest path distance (geodesic distance~\citep{newman2003mixing}) which overlooks alternative routes and indirect connections that may exist between distant nodes, thereby underestimating the potential pathways for information diffusion, influence propagation~\citep{stephenson1989rethinking} or resource exchange. This myopic view can lead to oversimplified representations of network dynamics, ignoring the interplay between weak ties, bridge nodes and overlapping communities that facilitate connectivity and communication across disparate components in the network. Second, most of the proposed approaches only consider first-order ---direct and local--- relationships between nodes, relying on small neighborhoods of the graph. As a result, they ignore global structural information~\citep{karhadkar2023fosr}, such as the properties of the network topology and long-range interactions between nodes, which can lead to inaccurate insights on how information flows globally~\citep{rampavsek2022recipe}. Third, popular approaches to model information flow in a network, such as the Independent Cascade model~\citep{kempe2003maximizing} %
assume homogeneous, deterministic and instantaneous interactions between neighboring nodes which might lead to inaccurate predictions, biased estimations and misrepresentations of actual diffusion patterns observed in complex networks~\citep{tauch2015measuring}. 

Conversely, graph diffusion metrics, such as the \emph{effective resistance}~\citep{stephenson1989rethinking,klein1993resistance}, offer a principled approach to quantifying distances and interactions between nodes within a network, addressing the above limitations. The effective resistance accurately captures not only short-range but also long-range relationships between nodes~\citep{chung1997spectral} because it considers alternative pathways, including the network dynamics, and quantifies connectivity between distant nodes. Therefore, it constitutes a natural information distance metric between nodes in a graph~\citep{bozzo2013resistance,stephenson1989rethinking}. 
Previous work has theoretically formulated measures of node centrality and criticality based on effective resistances~\citep{brandes2005centrality,tizghadam2010betres,bozzo2013resistance,newman2005measure}, yet we are not aware of any work that has modeled the social capital of a group of nodes by means of the effective resistance. From a practical perspective, the concept of effective resistance has been used to measure polarization in social networks~\citep{hohmann2023polarizationCT} and to rank user-items relations in recommender systems~\citep{fouss2007random}.
In this paper, we propose quantifying the social capital of a group of nodes in the network by means of three measures derived from the effective resistance: the group isolation, group diameter and group control, explained in Section~\ref{sec:groupmetrics}. 

\paragraph{Fairness in Graphs} 
Networks are used for a variety of purposes, including decision-making on nodes, link prediction, node embedding learning, clustering and community detection, ranking, and influence maximization. Fairness has been extensively analyzed in these scenarios~\citep[and references therein]{saxena2022fairsna,dong2023fairness}. 

Regarding social capital, social status plays a role in defining the structure of a network~\citep{ball2013} and a node's position in a network is a form of social capital~\citep{Burt2000}. %
Thus, there are structural advantages in information flow depending on the position that a node occupies in the network. Prior work has studied fairness from the perspective of disparities in access to information by differently positioned nodes in the graph, particularly in the case of influence maximization, \textit{i.e.}, when a single piece of information is spread in the network~\citep{wang2022information}.
However, there is a scarcity of studies that model group fairness considering that all the nodes in the network are sources of information and therefore access to all the nodes is equally important~\citep{bashardoust2022reducing}. 
In this context and to the best of our knowledge, no previous research has considered fairness in graphs from a group perspective, when the groups are defined according to protected attributes ---such as gender, ethnicity, religion or socio-economic status. In this paper, we fill this gap by defining, measuring and mitigating \emph{structural group unfairness}, understood as disparities in social capital between different groups and where social capital is measured by information flow.

 \paragraph{Network Interventions to Mitigate Unfairness in Graphs} Network interventions draw upon social network theory and structural analysis to understand and address the underlying mechanisms of unfairness within social networks. Interventions to mitigate structural unfairness in a network may entail redesigning network structures~\citep{santos2021link,hohmann2023polarizationCT} or altering (adding and/or removing) edges~\citep{tong2012gelling} to eliminate discriminatory barriers, reduce homophily, and foster diversity within the networks~\citep{granovetter1983strength}. These interventions aim to enhance connectivity, promote inclusivity, and facilitate equitable access to resources, opportunities, and support networks~\citep{borgatti2005}.

When aiming to improve the social capital in a network, edge augmentation (\textit{i.e.}, adding edges) is considered to be the natural intervention to mitigate disparities~\citep{bashardoust2022reducing}. Several edge augmentation strategies have been proposed in the literature, such as connecting similar nodes to improve bonding social capital~\citep{zhang2011differentiated}, linking nodes with the highest product of eigenvector centralities~\citep{tong2012gelling} or creating edges between the most disadvantaged nodes and the central node~\citep{bashardoust2022reducing}. However, these strategies are defined for individual notions of social capital and they do not consider long-range interactions between nodes. 

\paragraph{Contributions} Given previous work, the main contributions of our work are: \\
(1) We propose three effective resistance-based measures of group social capital in social networks ---namely \emph{group isolation}, \emph{group diameter} and \emph{group control}--- that consider short- and long-range interactions between nodes; \\
(2) We define \emph{structural group unfairness} as a disparity in the values of such measures by different groups in the graph, where the groups are defined according to the values of a protected attribute. This approach is particularly relevant from a social perspective when the disadvantaged group in the network corresponds to a vulnerable social group; \\ 
(3) We propose \ERG{}{}, an effective resistance-based greedy edge augmentation algorithm that iteratively adds edges to the network to maximize the social capital of the most disadvantaged \emph{group}. We approach this objective by adding weak ties~\citep{granovetter1983strength}
between the disadvantaged group and the rest of the graph; \\
(4) In experiments on real-world networks, we uncover significant levels of group structural unfairness when using gender as a protected attribute, with females being the most disadvantaged group in comparison to males. We also illustrate how our approach is the most effective in reducing structural group unfairness when compared to the baselines.

\section{Measuring Structural Group Unfairness}

\new{First, we define how to measure information flow between nodes in a social network, which forms the theoretical foundation for the proposed measures of group social capital. Next, we introduce three metrics to quantify a group's social capital within a graph and define structural group unfairness as the disparity in these metrics across different groups. Finally, we propose a greedy graph intervention (edge augmentation) algorithm designed to mitigate structural group unfairness, \emph{i.e.}, disparities in group social capital.}

\subsection{Preliminaries} 

\subsubsection{Effective Resistance and Social Capital} \label{subsec:effres}
We focus on the structural dimension of social capital, which emphasizes the relationships among individuals or communities~\citep{nahapiet1998}, and propose to measure it as the information flow of a node in the network.
Such a measure is captured by the \emph{effective resistance}~\citep{doyle1984rayleighs, klein1993resistance} %
of the node.
Given nodes $u$ and $v$ in graph $G=\{\mathcal{V,E}\}$, where $\mathcal{V}$ is the set of nodes, ${\mathcal{E}=\{(u,v) \in \mathcal{V} \times \mathcal{V} : A_{uv}=1\}}$ is the set of edges and $\mathbf{A}$ is the graph's adjacency matrix, the effective resistance $R_{uv}$ between nodes $u$ and $v$ is a distance metric given by:
\begin{align}\label{eq:effres}
&R_{uv}= (\mathbf{e}_u-\mathbf{e}_v)\mathbf{L}^\dagger(\mathbf{e}_u-\mathbf{e}_v)^\top,
\end{align} 

\noindent where $\mathbf{e}_u$ is the unit vector with a unit value at $u$-th index and zero elsewhere; $\mathbf{L}^\dagger= \sum_{i>1}  \frac{1}{\lambda_i}\phi_i \phi_i^\top$ is the pseudo-inverse of the graph's Laplacian $\mathbf{L} = \mathbf{D}-\mathbf{A}=\mathbf{\Phi \Lambda \Phi}^\top$, with $\mathbf{D}$ the graph's degree matrix, $D_{u,u} = \sum_{j\in V} A_{u,v}$ and 0 elsewhere; and $\lambda_i$ the $i$-th smallest eigenvalue of $\mathbf{L}$ corresponding to the $\phi_i$ eigenvector. The complete matrix of all pairwise effective resistances in a graph, $\mathbf{R}$ is given by $\mathbf{R}=\mathbf{1} \diag(\mathbf{L}^\dagger)^\top + \diag(\mathbf{L}^\dagger)\mathbf{1}^\top- 2\mathbf{L}^\dagger$.

The effective resistance is a distance \emph{metric} since it satisfies the symmetry, non-negativity and triangle inequality conditions~\citep{ellens2011effective}. In addition, $R_{uv}$ is proportional to the commute times between $u$ and $v$, \textit{i.e.}, the expected number of steps in a random walk starting at $v$ to reach node $u$ and come back: $R_{uv} \propto \mathbb{E}_u[v] + \mathbb{E}_v[u]$, where $\mathbb{E}_u[v]$, $\mathbb{E}_v[u]$ are the expected number of steps that a random walker takes to go from $u$ to $v$ and from $v$ to $u$, respectively~\citep{chandra1989electrical, tetali1991effres}. A high value of $R_{uv}$ means that $u$ and $v$ generally struggle to visit each other in a random walk, \textit{i.e.}, nodes with high effective resistance between them are unlikely to exchange information. $R_{uv}$ can be expressed as 
\begin{equation*}
    R_{uv} = \sum_{i=0}^{\infty}\left(\frac{1}{d_u}(\mathbf{A}^i)_{uu} + \frac{1}{d_v}(\mathbf{A}^i)_{vv} - \frac{1}{\sqrt{d_u d_v}} 2(\mathbf{A}^i)_{uu}\right),
\end{equation*}
being $\mathbf{A}^k$ the matrix that defines the number paths of length $k$ between $u$ and $v$~\citep{black2023understanding}. Hence, it is able to capture both short- and long-range interactions between nodes in the graph.

The effective resistance has been characterized as the \emph{information distance} in a network~\citep{stephenson1989rethinking,bozzo2013resistance} as it quantifies the amount of effort (distance) required to transmit information between the nodes. The total effective resistance $\Rg$ of a graph~\citep{ellens2011effective} -- defined as the sum of all $R_{uv}$ ($\Rg=\mathbf{1R}\mathbf{1}^\top$) -- is therefore inversely proportional to the expected ease of information flow in the graph. 

The \textit{total effective resistance of node} $u$, $\Rg(u)$, is given by $\Rg(u) =\sum_{v\in \mathcal{V}} R_{uv}$, \textit{i.e.}, the sum of all the effective resistances between node $u$ and the rest of nodes in the network. The smaller the total effective resistance of a node, the larger its information flow. In other words, the effective resistance allows to identify which nodes in a graph have limited information flow (\textit{i.e.}, high effective resistance) and thus low social capital. Equivalent terms to denote the effective resistance in the literature include the current-flow closeness centrality~\citep{brandes2005centrality} and the information centrality~\citep{stephenson1989rethinking}. %

From a computational perspective, calculating $R_{uv}$ does not require hyper-parameter tuning and can be efficiently calculated, mitigating two significant drawbacks of other diffusion or learnable graph distances~\citep{perozzi2014deepwalk}. An overview of the theoretical properties of $R_{uv}$ are provided in Appendix~\ref{app:extrabackgroundCT}. 

\subsection{Effective Resistance-based Measures of Group Social Capital}
\label{sec:groupmetrics}
Based on the definition of effective resistance above, we propose three metrics that characterize the social capital of a group of nodes in a graph. %
In the following, a graph $G=\{\mathcal{V,E}\}$ is composed of a set of nodes $\mathcal{V}$ and edges $\mathcal{E}$; and $S_i$ is a group of nodes defined as a subset of $\mathcal{V}$, \textit{i.e.}, $S_i \subseteq \mathcal{V}$ with $|S_i|$ nodes.

\paragraph{\textbf{1. Group Isolation}} %
The isolation of a group $S_i$, $\Rg(S_i)$, is given by the average of the total effective resistances of all the nodes in group. $\Rg(S_i)$ is proportional to the expected information distance when sampling one node from group $S_i$ and another node at random. It can be interpreted as a proxy for the marginalization of a group from the perspective of information flow, such that the lower the $\Rg(S_i)$, the \new{more the information flows and thus the} less isolated the group $S_i$ is in the network. Therefore, reducing this measure for group $S_i$ would yield an increase in its social capital.
It is given by:
\begin{equation}\label{eq:totaleffrgroup}
\Rg(S_i) = \mathbb{E}_{u\sim S_i}[R_{tot}(u)]  = |\mathcal{V}|\mathbb{E}_{u\sim S_i,v\sim \mathcal{V}}\left[R_{uv}\right]
\end{equation}

\noindent where $\Rg(u)=\sum_{v\in \mathcal{V}} R_{uv}$ is the total effective resistance of node $u$.
Computing the expectation enables comparing groups of different sizes.  

Note that adding links between nodes with the highest $R_{uv}$ ---irrespective of which group they belong to--- has been found to reduce the total effective resistance of a graph~\citep{black2023understanding} and hence the isolation of all the graph's nodes. 

\paragraph{\textbf{2. Group Diameter}} %
The group diameter, $\Rd(S_i)$, measures the average of the maximum distance between any node in group $S_i$ and any node in the graph. A larger group diameter suggests that the nodes in group $S_i$ are distant from the rest of the graph, indicating potential challenges in information exchange with the nodes outside of $S_i$, and hence it can be interpreted as another measure of social capital. Low diameter is important to ensure diverse information dissemination (\emph{e.g.}, job announcements reaching a diverse pool of candidates) which depends on the entire network and not just on the average distance as \textit{group isolation}. This measure is based on $\Rd(G)$, which is the maximum effective resistance of the graph~\citep{chandra1989electrical} and closely related to the cover time. We define \textit{Group Diameter} as:
\begin{equation}\label{eq:diamreffgroup}
\Rd(S_i) = \mathbb{E}_{u \sim S_i} [\Rd(u)] =  \mathbb{E}_{u \sim S_i}[\max_{v \in \mathcal{V}}  R_{uv}]
\end{equation}
where $\Rd(u) = \max_{v\in \mathcal{V}} R_{uv}$ is the diameter of node $u$, \textit{i.e.}, the maximum $R_{uv}$ from $u$ to any other node in the graph.
$\Rd(S_i)$ gives an indication of the information flow gap between the group $S_i$ and the rest of the network~\citep{fish2019gaps}. Therefore, the larger the $\Rd(S_i)$, the lower the social capital of group $S_i$. 

\paragraph{\textbf{3. Group Control}} 
The aforementioned concepts measure the amount of information flow through a group of nodes in the graph. Another relevant variable to assess is the \textit{criticality} of a node for the diffusion of information in the graph, which in the literature has been measured as betweenness~\citep{freeman1977betweenness}, redundancy~\citep{borgatti1998network} or effective size~\citep{burt2004}.
Nodes with high levels of control serve as important connectors in the network, facilitating the flow of information and enabling communication between otherwise disconnected groups of nodes~\citep{burt2004}.

The control of a node can be computed by restricting the summation of a node's total effective resistance to its direct neighbors. Thus, it is expressed as $\Bt(u) = \sum_{v \in \mathcal{N}(u)} R_{uv}$,
where $\mathcal{N}(u) = \{v: (u,v)\in \mathcal{E}\}$ are the neighbors of $u$. The larger the $\Bt(u)$, the more control a node has in the network's information flow and hence the larger its social capital. The node control of a node is bounded by $1\leq\Bt(u)\leq d_u$, being $d_u$ the number of neighbors of node $u$ (see Theorem~\ref{app:thm:nodecontrolbounds}). $\Bt(u)$ is theoretically related to the current-flow betweenness~\citep{newman2005measure}, the node's information bottleneck~\citep{arnaiz2022diffwire}, and the curvature of the node~\citep{devriendt2022discrete}.

We define the group control or group betweenness $\Bt(S_i)$ as the average of the controls of all the nodes in $S_i$, \emph{i.e.}:
\begin{equation}\label{eq:groupcontrol}
    \Bt(S_i) = \mathbb{E}_{u \sim S_i} [\Bt(u)]= \frac{1}{|S_i|}\sum_{u \in S_i} \Bt(u),
\end{equation}
and is bounded by $1\leq \Bt(S_i) \leq \text{vol}(S_i)/|S_i|$, being $\text{vol}(S_i)/|S_i|$ the average degree of the group $S_i$ (see Theorem~\ref{app:thm:groupcontrolbounds}).
Note that the sum of all $R_{uv}$ for all nodes in a graph is constant at $|\mathcal{V}|-1$ and it is independent of the number of edges~\citep{klein1993resistance}. If an edge is added, removed or modified in the graph, all $R_{uv}$ are updated accordingly such that their sum remains constant.
The sum and the average control of all nodes in the graph are also constant with values $\sum_{u\in\mathcal{V}} \Bt(u) = 2|V|-2$ and $\mathbb{E}_{u\sim \mathcal{V}}[\Bt(u)] = 2-\frac{2}{|\mathcal{V}|}$, respectively, independently of the number of edges (see Appendix~\ref {app:sec:groupcontrol} for more details). Consequently, the control of a node or group of nodes is distributed among the nodes in the network and cannot be optimized for every node/group in the graph by adding more edges: if a node or group of nodes increase their control over the information flow in the graph, they must do so at the cost of reducing the control of other nodes. %
\new{Figure \ref{fig:metrics-example} illustrates the three proposed measures of individual and group social capital.}

\begin{figure}[h]
    \centering
    \includegraphics[width=\linewidth]{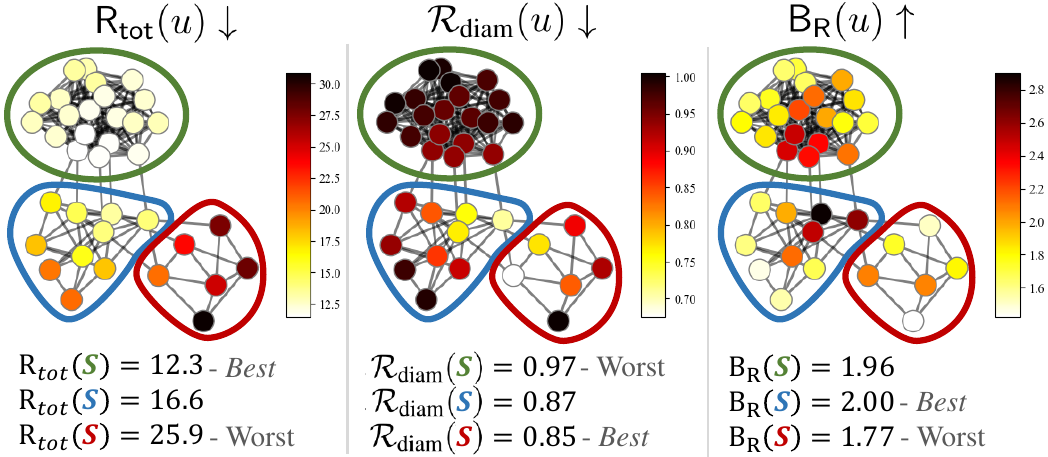}
    \caption{Illustration of the three proposed group social capital metrics on the same graph. The color of the nodes corresponds to $\Rg(u)$, $\Rd(u)$ and $\Bt(u)$, respectively. The nodes are grouped according to three different values of the protected attribute $S$ indicated as green, blue and red.}
    \label{fig:metrics-example}
\end{figure}

\subsection{Structural Group Unfairness}
\label{sec:disparitymetrics}
 
To study disparities in the distribution of social capital in the network, we define groups of nodes in the network $S_i$ according to the values of a sensitive attribute $i \in SA=\{sa_1, sa_2, \dots, sa_{|SA|}\}$, which is a categorical variable with $|SA|$ possible values referring to a socially relevant concept, such as sex, age, gender, religion or race. We denote the value of the sensitive attribute of a node $v$ as $SA(v)$. For instance, if $SA$ is sex with three possible values, $SA$=\{male, female, non-binary\}, the groups $S_{\text{male}}$, $S_{\text{female}}$ and $S_{\text{non-binary}}$ are the set of nodes whose sex is labeled as male, female and non-binary, respectively.

We define the \emph{structural group unfairness} in a network as the disparity in information flow between the nodes belonging to different groups in the network. Since we have defined the groups in terms of protected attributes, the structural group unfairness is socially relevant as it informs about potential disparities in information flow (and hence social capital) between a vulnerable group and the rest of the network. We present here three metrics to characterize the structural group unfairness, namely \textit{isolation disparity}, \textit{diameter disparity} and \textit{control disparity}. 

\paragraph{\textbf{1. Group Isolation Disparity}} 
Ideally, every group in the network should have the same levels of information flow and hence the same ---and low--- levels of group isolation, namely: 
\begin{equation}\label{eq:isodisp}
    \Rg(S_i) = \Rg(S_j), \:\forall\: i,j \in SA.
\end{equation}
Deviations from equality lead to isolation disparity $\Delta\Rg$, which is defined as 
the maximum over all groups in the graph of the differences in group isolation: $\Delta\Rg= \max_{i,j \in SA} |\Rg(S_i)-\Rg(S_j)|$.

Reducing the isolation disparity contributes to increasing the social capital of the most disadvantaged group and equalizes the information flow between the groups in the network.

\paragraph{\textbf{2. Group Diameter Disparity}} 
Ideally, every social group in the network should have the same --- and low --- group diameter: 
\begin{equation}
    \Rd(S_i) = \Rd(S_j), \:\forall\: i,j \in SA.
\end{equation}
Any deviations from equality lead to diameter disparity, $\Delta\Rd$, defined as the maximum over all groups in the graph of the differences in group diameter: $\Delta\Rd = \max_{i,j \in SA} |\Rd(S_i)-\Rd(S_j)|$. 

Achieving equal diameter entails equalizing the worst-case scenario in information flowing to the entire network from the perspective of any group of nodes in the graph. By promoting equal group diameter, we generate a fairer information-sharing environment.

\paragraph{\textbf{3. Group Control Disparity}} 
By striving for equalized control in all groups in the network, no particular group would dominate or be marginalized from the perspective of their control of information flow in the network: 
\begin{equation}
    \Bt(S_i) = \Bt(S_j) = 2-\frac{2}{|\mathcal{V}|}, \:\forall\: i,j \in SA.
\end{equation}
Then, control disparity, $\Delta\Bt$, is defined as the maximum over all groups in the graph of the differences in group control: $\Delta\Bt= \max_{i,j \in SA} |\Bt(S_i)-\Bt(S_j)|$.

Control is a bounded resource to be distributed among the groups of nodes in the graph with an expected value of $2-\frac{2}{|\mathcal{V}|}$. Hence, reducing the control disparity entails a redistribution of the control in all the groups in the graph converging to $\Bt(S_i) = 2-\frac{2}{|\mathcal{V}|}, \:\forall\: i \in SA$, and hence leading to a more equitable allocation of the control that different groups play regarding the information flow in the network. 

\subsection{Structural Group Unfairness Mitigation}
\label{sec:alg} 

\paragraph{Edge Augmentation} Edge augmentation has been proposed as the natural intervention to mitigate information flow disparities in a network where all the nodes are sources of unique pieces of information~\citep{bashardoust2022reducing}.
According to Rayleigh's monotonicity principle \citep{doyle1984rayleighs}, %
adding edges to a graph always improves information flow, which is the aim of our intervention. Socially, edge addition enhances the nodes' social capital since information reaches a larger audience. Moreover, note that edge deletion is not a suitable intervention as it could lead to a disruption of social dynamics by breaking existing connections between individuals, which is undesirable \citep{jackson2019human}.

\begin{figure}[ht]
    \centering
\includegraphics[width=\linewidth]{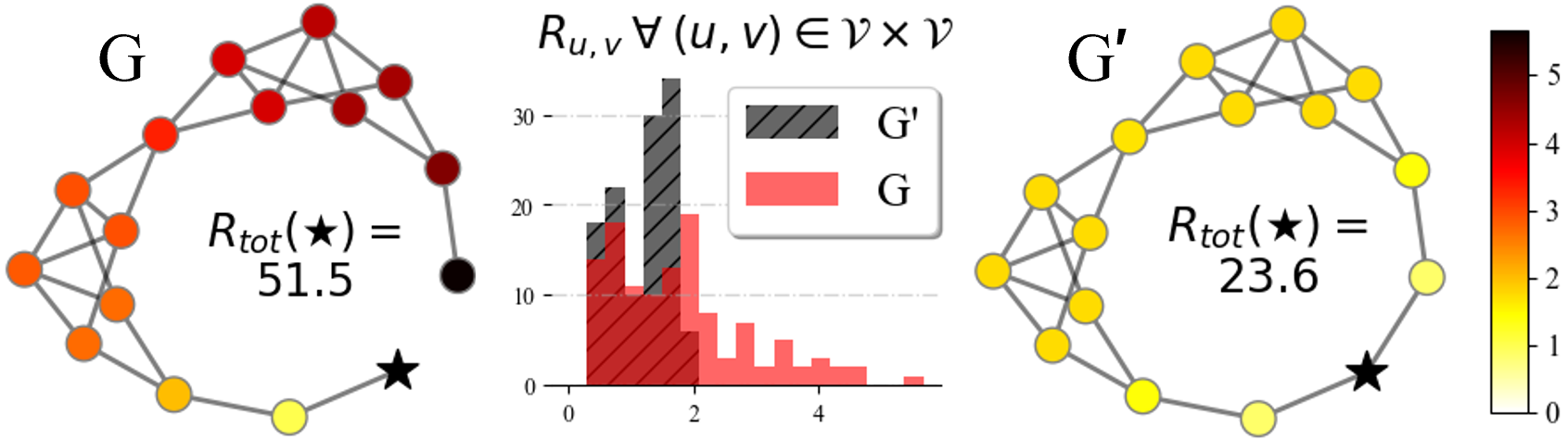}
    \caption{Illustration of the impact of adding one edge on the information flow of $G$. 
    Note how all the effective resistances between the star node and the rest of nodes in the network ($R_{\bigstar,v}$) decrease in $G'$ even if there is no change in the geodesic distance between them.}
    \label{fig:ruvexample}
\end{figure}

Regarding which structural group unfairness measure to optimize, we argue that we should primarily focus on improving the isolation disparity ($\Delta R_{tot}$) of the most isolated group in the graph. 
Note that mitigating isolation will also yield an improvement in the diameter and control disparities, as illustrated in our experiments. 
The reduction of $\Delta R_{tot}$ entails creating edges between distant nodes, \textit{i.e.}, fostering the creation of weak ties. Granoveter's work~\citep{granovetter1983strength} provides evidence that information spreads more effectively through weak ties than through strong ties because weak ties give peripheral nodes more visibility in the network, which leads to a decrease in group isolation and diameter. Adding weak ties reduces discontinuities in the information flow, increases redundancies in the paths between nodes and improves the control of peripheral nodes while reducing the control of dominant ones~\citep{burt2004}.

Previous work has proposed connecting peripheral isolated nodes (with high isolation and low control) to salient nodes (with low isolation and high control)~\citep{tong2012gelling}. However, these solutions lead to a \textit{rich-get-richer} phenomenon that benefits the best connected nodes and potentially increases disparities in information access and control~\citep{burt1999leaders}. Therefore, we advocate creating edges between the most distant nodes in the network without necessarily connecting them to a central node. Note that adding edges between the nodes with maximum $R_{uv}$ theoretically leads to a minimization of $\Rg$ and $\Rd$ for all nodes in the graph while balancing $\Bt$. Hence, the choice of $R_{tot}(S)$. %

\paragraph{Problem Definition} 
We consider a budgeted edge augmentation intervention: given a maximum number $B$ of allowed new connections to be created in the graph, we %
aim to identify the $B$ new edges $\mathcal{E}'$ to be added to the graph $G$ 
that would maximally reduce the group isolation disparity of the most disadvantaged group in the graph. This leads to a new graph $G'$ with lower levels of structural group unfairness: 

\begin{align}
    G' = \argmin_{G'=\mathcal{(V,E')}}& \quad \mathbb{E}_{u,v\thicksim V\times V}\left[R_{uv}\right]\\
    \textrm{s.t.}&\quad |\mathcal{E}'\setminus \mathcal{E}| = B \text{ and } 
     \mathcal{E} \subset \mathcal{E}' \nonumber
\end{align}

\paragraph{Algorithm} To tackle the problem above, we introduce \texttt{ERG-Link}, a greedy algorithm that adds edges between the nodes with the largest effective resistance between them, where at least one of the nodes belongs to the most isolated group as per Sec.~\ref{sec:groupmetrics}, and groups in the graph are defined on the grounds of a protected attribute. Note that this strategy also reduces the isolation (total effective resistance) of the entire graph~\citep{ghosh2008minimizing}.

Algorithm~\ref{alg:erp} outlines the main steps of \ERG{}{}. Given a graph $G=(V,E)$, a protected attribute $S$ and a total budget $B$ of new edges to add, the group isolation $R_{tot}(S)$ is computed for each group according to $S$. The most disadvantaged group $S_d$ is identified as the group with the largest $R_{tot}(S)$. Then, $R_{uv}$ $\forall (u,v) \in \mathcal{V} \times \mathcal{V}$ is computed, and a ranking of all potential new edges in the graph is created from the highest to the lowest values of effective resistance. In each iteration, \ERG{}{} adds the new edge to the graph that yields the largest improvement in the information flow of $S_d$, \textit{i.e.}, the edge that connects the two nodes with largest effective resistance between them where at least one of the nodes belongs to the disadvantaged group $S_d$. See~\citet{ghosh2008minimizing,black2023understanding} for a proof that such an edge is the one that maximally improves the information flow in the graph. 

\SetKwRepeat{Repeat}{Repeat}{Until}{}
\begin{algorithm}
  \caption{\ERG}\label{alg:erp}
  \KwData{Graph $G=(\mathcal{V},\mathcal{E})$, %
  a protected attribute $SA$, 
  budget $B$ of total number of edges to add}
  \KwResult{New Graph $G'=(\mathcal{V}', \mathcal{E}')$ with $B$ new edges}

  \BlankLine
  $\mathbf{L} = \mathbf{D}-\mathbf{A}$\;
  $S_d = \operatorname*{argmax}_{S_i \forall i \in SA} \Rg(S_i)$ %
  
  \Repeat{$|\mathcal{E'}\setminus\mathcal{E}|=B$}{
  $\mathbf{L}^\dagger = \sum_{i>0} \frac{1}{\lambda_i} \phi_i \phi_i^\top = \left(\mathbf{L}+\frac{\mathbf{1}\mathbf{1}^\top}{n}\right)^{-1} - \frac{\mathbf{1}\mathbf{1}^\top}{n}$%
  
  $\mathbf{R}=\mathbf{1} \diag(\mathbf{L}^\dagger)^\top + \diag(\mathbf{L}^\dagger)\mathbf{1}^\top- 2\mathbf{L}^\dagger$ %
  
  $C=\{(u,v) \mid u \in S_d \text{ or } v \in S_d, (u,v) \notin \mathcal{E}'\}$ %
  
  $\mathcal{E}' = \mathcal{E}'\cup \arg \max_{(u,v)\in C} R_{uv}$ %
  
  $\mathbf{L} = \mathbf{L} + (\mathbf{e}_u-\mathbf{e}_v)(\mathbf{e}_u-\mathbf{e}_v)^\top$%
}
  \Return $G'$\;
\end{algorithm}

\ERG{} leverages \emph{Rayleigh's monotonicity principle}~\citep{doyle1984rayleighs,ellens2011effective}, according to which the total effective resistance of a graph can only decrease when new edges are added to it, as illustrated in Figure~\ref{fig:ruvexample}. Therefore, creating an edge between nodes with maximum $R_{uv}$ where one of the nodes belongs to the disadvantaged group not only improves the information flow between the two nodes (increasing their social capital) but it also improves the information flow of the entire graph.

Note that the addition of each new edge changes all the pairwise information distances between nodes in the graph, requiring the re-computation of all distances (effective resistances) in each iteration. Therefore, it is not feasible to perform this type of edge augmentation by means of Independent Cascade distance estimation~\citep{bashardoust2022reducing, kempe2003maximizing}, random-walk embeddings~\citep{perozzi2014deepwalk} or GNNs~\citep{wu2022graph} since these methods require training expensive neural networks or running complex simulations for the estimation of the distances in each iteration. In addition, they only capture short-range interactions between nodes. 
Conversely, the effective resistance captures both short and long-range interactions between nodes in the graph and it is efficient to update. While it requires the computation of the Laplacian pseudo-inverse, Woodbury's formula~\citep{black2023understanding} can be used to avoid recomputing $\mathbf{L}^\dagger$ in line 3 of Alg.~\ref{alg:erp}, as reflected in Alg.~\ref{alg:erp-efficient}.

\section{Experiments}\label{sec:experiments}

\subsection{Datasets and Set-up} 

To empirically evaluate \ERG{}, we tackle the challenge of mitigating \new{group social capital disparities} in three real-world networks (school and online social networks), where the nodes are users and the edges correspond to connections between them, \textit{i.e.}, friendships. The three datasets are commonly used in the graph fairness literature, namely: 
\\
(1) The Facebook dataset~\citep{leskovec2012learning}, a dense graph of 1,034 Facebook users ($|\mathcal{V}|$) and 26,749 edges ($|\mathcal{E}|$). It corresponds to a large ego-network where nodes are connected if they are friends in the social network; \\
(2) The UNC28 dataset~\citep{red2011comparing}, consisting of a 2005 snapshot from the Facebook network of the university of North Carolina ($|\mathcal{V}|$=3985,$|\mathcal{E}|$=65287); \\
(3) The Google+ dataset~\citep{leskovec2012learning}, an ego-network of G+, the social network developed by Google, with 3,508 nodes ($|\mathcal{V}|$) and 253,930 edges ($|\mathcal{E}|$). 

Gender is the protected attribute in all networks with two possible values $SA=$\{male, female\}. We select the largest connected component for all the datasets. The original values of group social capital per gender are depicted in Table~\ref{tab:socialcapital}. As seen in the Table, our study unveils that the disadvantaged group according to the three defined measures  corresponds to females in the three datasets. %
\new{We provide a detailed breakdown of the statistics and characteristics of the datasets in Appendix~\ref{app:data-stats}.}

\begin{table}[ht]
\centering
\begin{small}
\begin{tabular}{lrrr}%
\toprule
$G$ & $R_{tot}\downarrow$ & $\mathcal{R}_{diam}\downarrow$ & $\mathsf{B_R}\uparrow$ \\
\midrule 
Facebook (female) & 221.4 & 2.29 & 1.93\\
Facebook (male) & \textbf{179.8} & \textbf{2.25} & \textbf{2.03} \\
\midrule 
UNC28 (female) & 608.6 & 2.11 & 1.99 \\
UNC28 (male)  & \textbf{586.3} & 2.11 & \textbf{2.00} \\
\midrule 
Google+ (female) & 564.1 & 1.31 & 1.81 \\
Google+ (male) & \textbf{287.7} & \textbf{1.24} & \textbf{2.32} \\
\bottomrule
\end{tabular}
\end{small}
\caption{\new{Group social capital metrics in the original graphs for each of the groups.}} %
\label{tab:socialcapital}\end{table}

\begin{table*}[th]
\centering
\begin{subtable}[ht]{0.32\textwidth}
\caption{Facebook ($B$=50)}
\label{tab:fb50fair}
    \centering
    \resizebox{\linewidth}{!}{
    \begin{tabular}{lrrr}
            \toprule
             & $\Delta\Rg$ & $\Delta\Rd$ & $\Delta \Bt$ \\
            $G$ (original) & 41.62  & 0.042 & 0.107 \\
            \midrule
            Random  & 38.7  & 0.039 & 0.108 \\
            SDRF & 41.6 &  0.042 & 0.106 \\
            FOSR & 34.5 & 0.027 & 0.109 \\
            DW    & 36.3 & 0.031 & 0.104  \\
            Cos   & 28.7 & 0.029 & 0.120  \\
            ERG   & \textbf{10.3} & \textbf{0.009} & \textbf{0.098}  \\
            \midrule 
            S-ERG   & 41.6 & 0.042 & 0.107  \\
            \bottomrule
    \end{tabular}
    }
\end{subtable}
\begin{subtable}[ht]{0.32\textwidth}
\centering
    \caption{UNC28 ($B$=5000)}
    \label{tab:unc5000fair} 
    \resizebox{\linewidth}{!}{
        \begin{tabular}{lrrr}
            \toprule
             & $\Delta\Rg$ & $\Delta\Rd$ & $\Delta \Bt$ \\
            $G$ (original) & 22.4 & 0.006 & 0.009 \\
            \midrule
            Random & 19.8 & 0.005 & 0.014 \\
            SDRF & 22.2 & 0.006 & 0.007 \\     
            FOSR & 19.7 & 0.005 & 0.017 \\
            DW & 22.2 & 0.006 & 0.004 \\
            Cos & 19.1 & 0.005 & 0.102 \\
            ERG & \textbf{8.8} & \textbf{0.002} & \textbf{0.003} \\
            \midrule 
            S-ERG & 22.3 & 0.006 & 0.004 \\
            \bottomrule
            \end{tabular}
        }
\end{subtable}
\begin{subtable}[ht]{0.312\textwidth}
    \centering
    \caption{Google+ ($B$=5000)}
    \label{tab:google5000fair} 
    \resizebox{\linewidth}{!}{
        \begin{tabular}{lrrr}
        \toprule
         & $\Delta\Rg$ & $\Delta\Rd$ & $\Delta \Bt$ \\
        $G$ (original) & 276.4 & 0.078 & 0.51\\
        \midrule
        Random & 129.4  & 0.037 & 0.47  \\
        SDRF & 276.1 & 0.079 & 0.52 \\
        FOSR & 240.7 & 0.068 & 0.50 \\
        DW & 274.1  & 0.078 & 0.51  \\
        Cos & 86.8  & 0.025 & 0.47  \\
        ERG & \textbf{37.1} & \textbf{0.011} & \textbf{0.29}  \\
        \midrule 
        S-ERG & 276.4 & 0.079 & 0.52  \\
        \bottomrule
        \end{tabular}
            }
\end{subtable}
\caption{Structural group unfairness, \new{i.e. differences in group social capital between males and females}, before and after the graph interventions. Best result highlighted in bold.}
\label{tab:expfairmetrics}
\end{table*}

\subsection{Baselines}
We compare edge augmentation by means of \ERG{}{} with five baselines:
\\
(1) \textit{Random}, which adds edges at random to the graph. \\
(2) \textit{DW}, which adds edges with the lowest dot product similarity of \emph{DeepWalk} embeddings~\citep{perozzi2014deepwalk}. 
\\
(3) \textit{Cos}, a greedy algorithm that adds edges with the lowest cosine similarity of the rows of the adjacency matrix~\citep{saxena2022fairsna}. This is an example of a classic method based on neighborhood similarity.\\
(4) \textit{SDRF}, an edge augmentation method originally proposed to mitigate over-squashing \citep{topping2022understanding}. It identifies the edge with the minimum Ricci curvature and adds the edges that maximally improve the Ricci curvature of that edge. The Ricci curvature of an edge is related to $R_{uv}$ and $\Bt(u)$ as per Eq.~\ref{app:eq:linkcurvature}.\\
(5) \textit{FOSR}, a graph rewiring algorithm to increase the spectral gap ($\lambda_2$) and hence avoid over-squashing \citep{karhadkar2023fosr}. 

Note that \textit{DW} and \textit{Cos} correspond to an algorithm similar to Alg.~\ref{alg:erp} (lines 2 and 6 remain the same) with one difference: instead of using the effective resistances to quantify the distances between nodes, they consider DW or cosine distances, respectively.
As previously noted, we do not include any Independent Cascade distance estimation, random-walk embedding or GNN-based method as baselines because they require training a neural network or a simulation to estimate the pairwise distances in the graph, which is computationally unfeasible in our task as it would entail retraining the neural network every time a new edge is added~\citep{wu2022graph}. 

\subsection{Experimental Methodology}
We set a budget $B$ of 5,000 new edges to be added to the UNC28 and Google+ datasets, which corresponds to approximately 0.05\% of the number of all potential edges in the graph. We also run experiments with a maximum of 50 new edges for the Facebook dataset to evaluate the performance of the algorithms with extremely low budgets.
We compute the social capital for each group (male and female) and the structural group unfairness on both the original and the augmented graphs (after all edges have been added) based on the defined measures. Moreover, we compute them at each step of edge addition to shed light on the evolution of the structural group unfairness as new edges are added.

\begin{figure*}[th]
\begin{subfigure}{0.21\textwidth}
    \centering
\includegraphics[width=\textwidth]{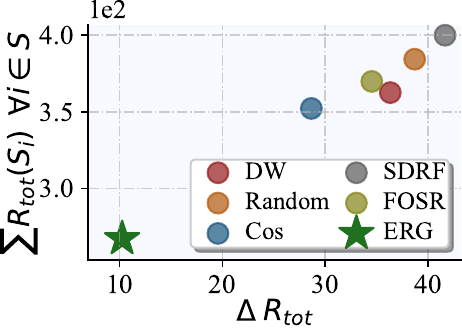}
\end{subfigure}
\hfill
\begin{subfigure}{0.21\textwidth}
    \centering
\includegraphics[width=\textwidth]{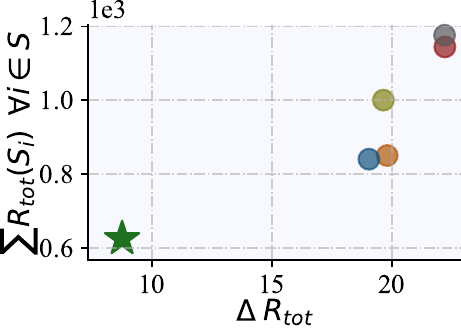}
\end{subfigure}
\hfill
\begin{subfigure}{0.21\textwidth}
    \centering
\includegraphics[width=\textwidth]{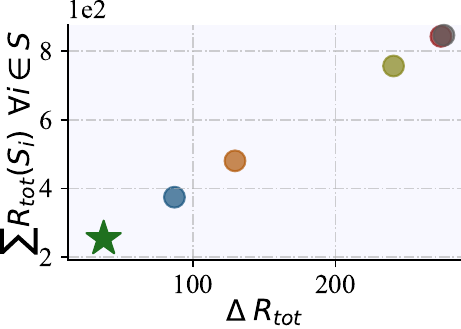}
\end{subfigure}
\\
\begin{subfigure}{0.21\textwidth}
    \centering
    \includegraphics[width=\textwidth]{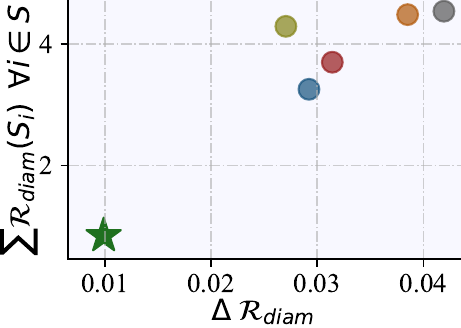}
\caption{Facebook}
\end{subfigure}
\hfill
\begin{subfigure}{0.21\textwidth}
    \centering
\includegraphics[width=\textwidth]{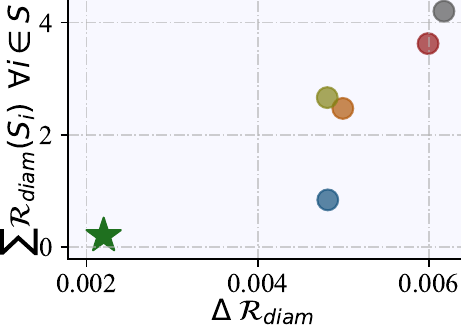}
\caption{UNC}
\end{subfigure}
\hfill
\begin{subfigure}{0.2\textwidth}
    \centering    \includegraphics[width=\textwidth]{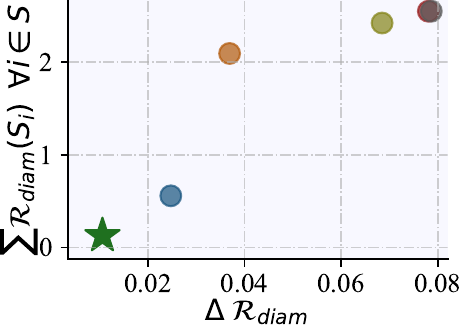}
    \caption{Google}
\end{subfigure}
\caption{Pareto front of the structural group unfairness (X-axis) vs the sum of the group's isolation of all the groups (Y-axis) using $\Rg$ and $\Rd$ (denoted by X-axis' label). Best results correspond to the bottom left corner of the graph.}
\label{fig:paretomain}
\end{figure*}

\begin{figure*}[th]
    \begin{subfigure}{0.27\textwidth}
    \centering
    \includegraphics[width=\linewidth]{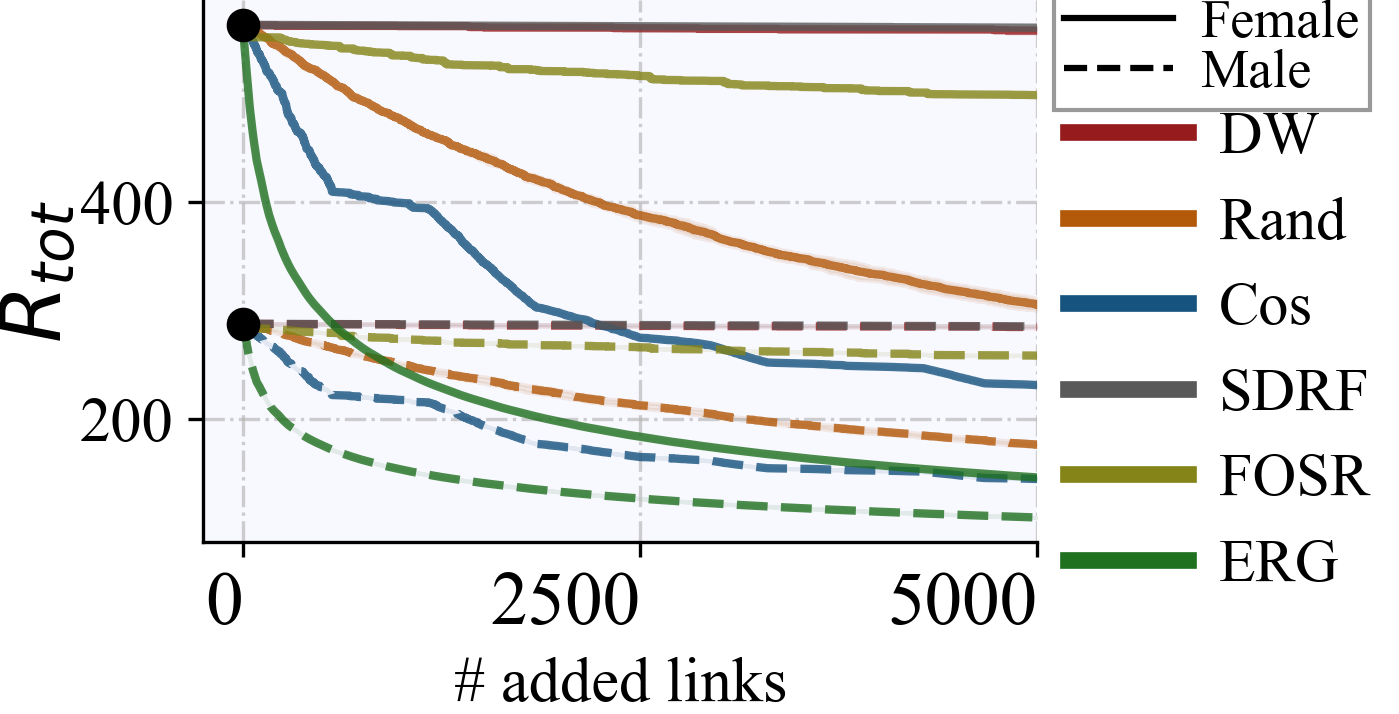}
    \end{subfigure}
    \hspace{.85in}
    \begin{subfigure}{0.2\textwidth}
    \centering
    \includegraphics[width=\linewidth]{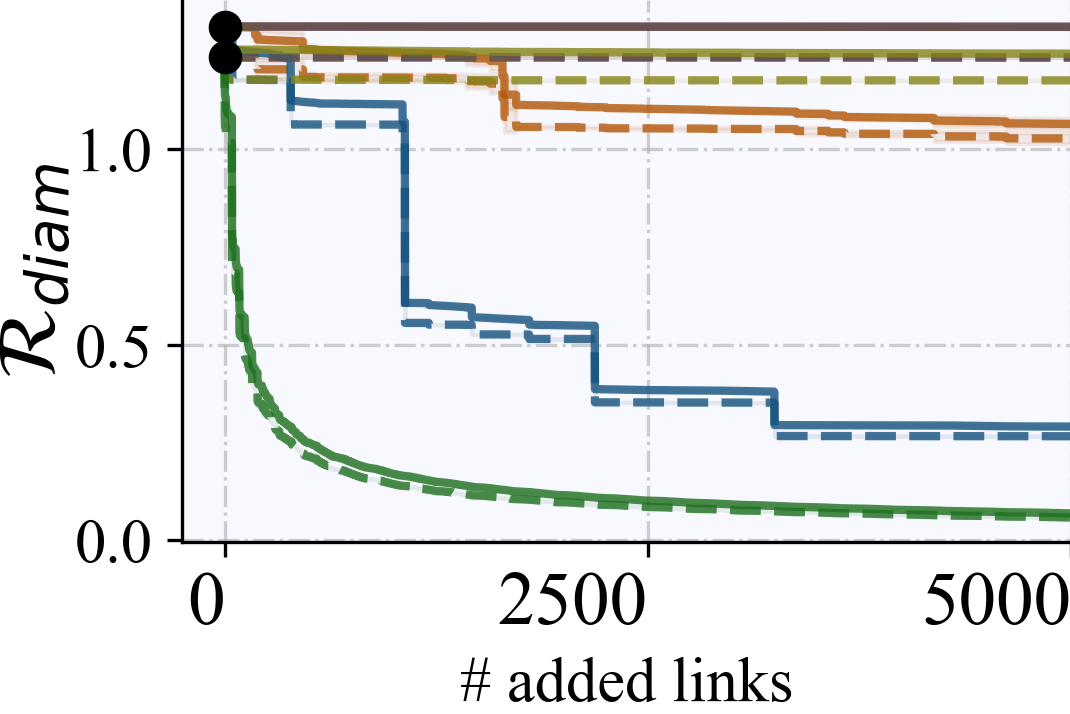}
    \end{subfigure}
    \hfill
    \begin{subfigure}{0.2\textwidth}
    \includegraphics[width=\linewidth]{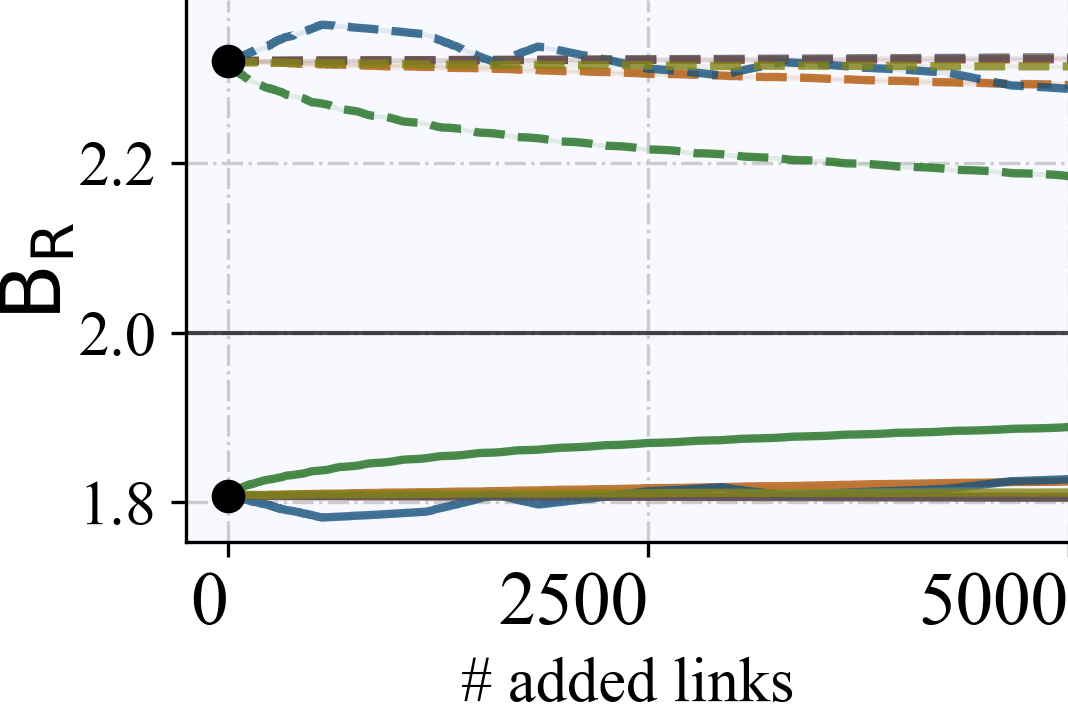}
    \end{subfigure}
    \\
    \begin{subfigure}{0.2\textwidth}
    \centering
    \includegraphics[width=\linewidth]{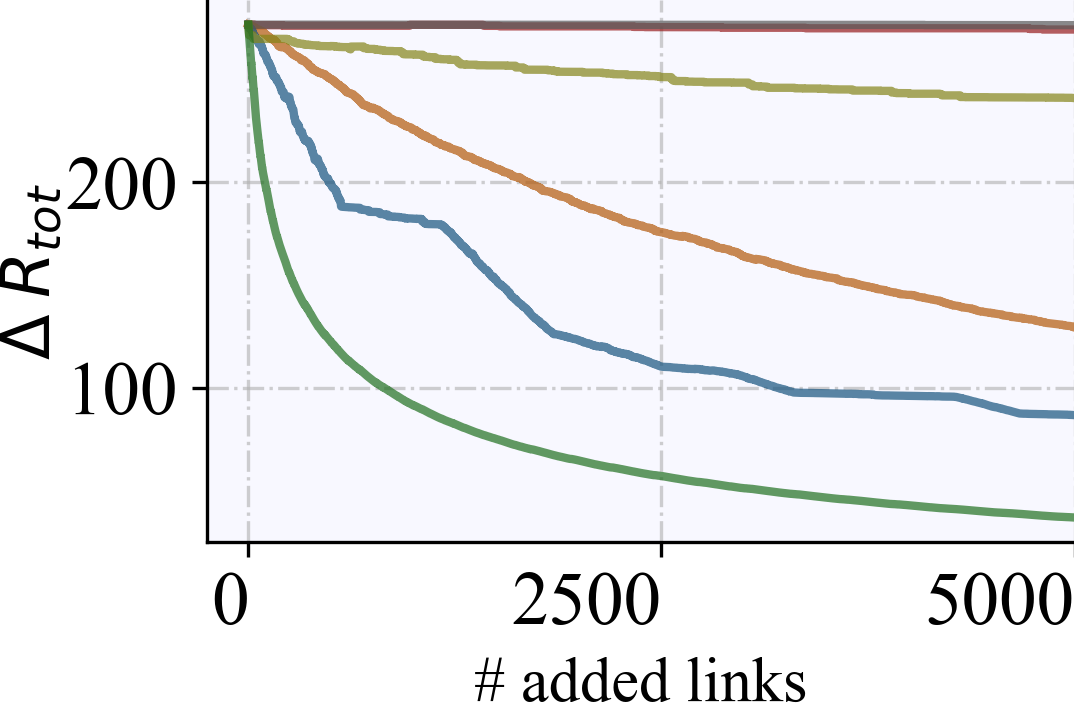}
    \end{subfigure}
    \hfill
    \begin{subfigure}{0.2\textwidth}
    \centering
    \includegraphics[width=\linewidth]{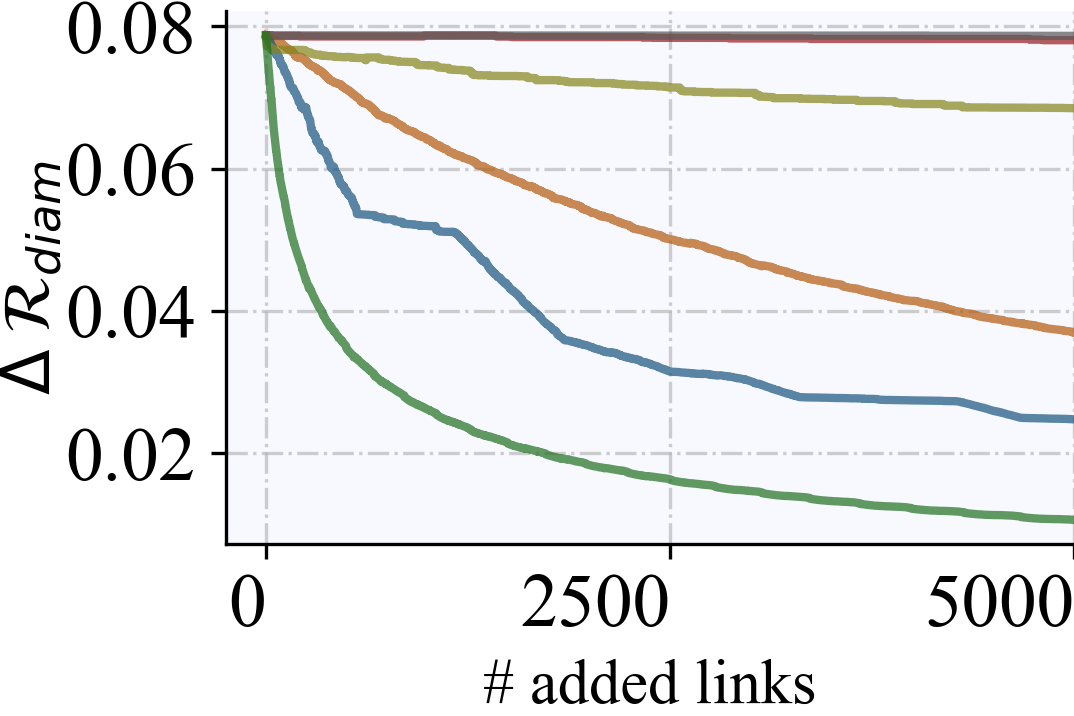}
    \end{subfigure}
    \hfill
    \begin{subfigure}{0.2\textwidth}
    \centering
    \includegraphics[width=\linewidth]{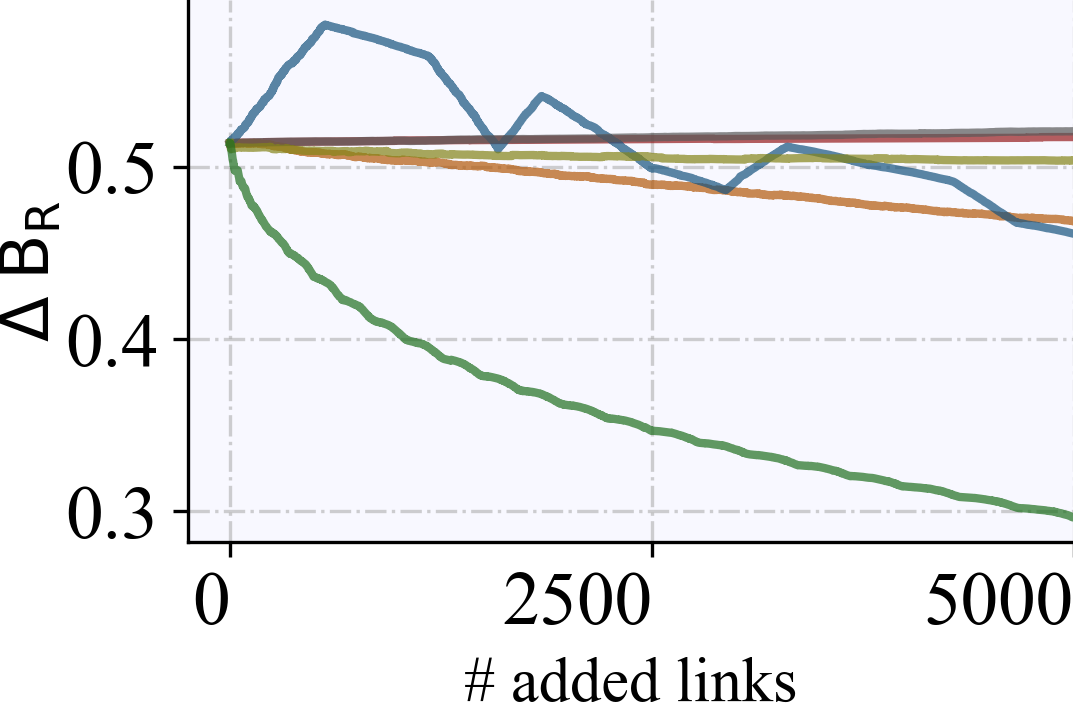}
    \end{subfigure}
    \caption{Evolution of the group metrics (top-row) and structural group unfairness metrics (bottom-row) as the number of added edges increases on the Google+ dataset with a total budget of 5,000 new edges.}
    \label{fig:googleresevolution}
\end{figure*}

\subsection{Structural Group Unfairness Mitigation}

Table~\ref{tab:expfairmetrics} depicts the three  structural group unfairness measures on the original graph $G$ and after adding 50 edges to the Facebook dataset and 5,000 edges to the UNC28 and Google+ datasets. 
The groups are defined based on gender (male, female) and the disadvantaged group are females according to the three structural group unfairness measures, as depicted in Table~\ref{tab:socialcapital} and top row of Table~\ref{tab:expfairmetrics}. The disparities in social capital between groups are particularly large in the Google+ network with a $\Delta\Rg$ of 276.4, meaning that females have significantly lower levels of information flow than males in this network. $\Delta\Bt$ also shows a difference of 0.51 on the control of the network, which is a large difference given that the values of $\Bt$ are in the range $[0, 2-2/|\mathcal{V}|]$. 

Regarding the results of the graph intervention algorithms, we observe how edge augmentation via \ERG{} outperforms all the baseline methods on the three datasets in terms of reducing disparities in group social capital. Interestingly, the larger the unfairness in the original graph, the larger the improvement after the intervention with \ERG{}. For example, the isolation disparity, $\Delta\Rg$,  %
improves by \textbf{75\%}, \textbf{61\%} and \textbf{87\%} after \ERG{}'s intervention in the Facebook, UNC28 and Google+ networks, respectively. A similar behavior is observed for the other structural group unfairness measures.

For illustration purposes, we also report the results of applying Algorithm~\ref{alg:erp} but where the added edges connect the nodes with the smallest ---instead of the largest--- $R_{uv}$, similar to how link recommendation algorithms work. We refer to this method as the ``Strong'' version and denote it with an ``S-''. %
As expected, strong edges do not improve the structural group unfairness since such methods are not designed to improve the information flow in the most isolated nodes in the graph, but to connect nodes that are already structurally close to each other (foster strong ties).

Note how the edge augmentation when using DW and Cos distances or via the curvature (SDRF) and $\lambda_2$ (FOSR) methods yields graphs with significant levels of structural group unfairness. Furthermore, in the case of using DW distances, the improvement in performance worsens as the graph gets larger (Google+). 
While using Cos distance for edge augmentation improves $\Delta\Rg$ and $\Delta\Rd$, it is unable to always improve $\Delta\Bt$ due to the inherently more intricate nature of control disparity optimization. Unlike $\Delta\Rg$ and $\Delta\Rd$, minimizing $\Delta\Bt$ requires the precise identification of network gaps which are difficult to detect using a cosine similarity distance. 

Over-squashing oriented methods (SDRF and FOSR) aim to improve the information flow by focusing only on the information bottlenecks of the network, which leads to a small effect on improving the overall information flow and reducing disparities. SDRF identifies where to add the edges partly based on the control $\Bt(u)$ (Eq.~\ref{app:eq:linkcurvature}) rather than based on the diameter $\Rd$ as \ERG{} does. FOSR's goal is based solely on increasing the spectral bottleneck,---i.e. $\lambda_2$---, while \ERG{} considers the entire spectrum, leading to a better characterization and improvement of the information flow.

For completeness, the social capital metrics for each group in the three data sets are shown in Tab.~\ref{tab:groupimprovement} in Appendix~\ref{app:exp:gorupmetrics}. Finally, edge augmentation with \ERG{} is not only more effective in mitigating the group structural unfairness in the graph, but also more computationally efficient than most of the baselines as shown in Appendix~\ref{app:computation}.

\subsection{Overall Social Capital Improvement} 

Figures~\ref{fig:paretomain} and~\ref{fig:googleresevolution} illustrate how \ERG{} is more effective than the baseline methods not only to reduce the group unfairness in social capital ($\Delta\Rg$ and $\Delta\Rd$), but also to improve the overall social capital measures of group isolation $R_{tot}(S_i)$ and group diameter $R_{diam}(S_i)$ for all the groups in the graph.
For each dataset, the structural group unfairness metric ($\Delta\Rg$ or $\Delta\Rd$) is shown on the x-axis of the graphs and the global improvement (sum of group isolation for all groups or sum of diameters for all groups) on the y-axis. 
In both axes, the lower the values, the better.
Edge augmentation via \ERG{} clearly outperforms any other graph intervention strategy, providing evidence that it not only reduces inequalities in social capital between groups, but also improves the social capital of all groups.

\subsection{Evolution of Structural Group Unfairness Throughout the Interventions}
\label{subsec:evolution}

Figure~\ref{fig:googleresevolution} illustrates the evolution of the structural group unfairness metrics for the Google+ dataset as new edges are added to the network with a total budget of 5,000 edges. As seen in the Figure, edge augmentation via \ERG{} quickly mitigates the group isolation and diameter disparities, even after the addition of a small number of edges.  
Furthermore, edge augmentation via \ERG{} exhibits a smoother and more consistent reduction in control disparity ($\Delta\Bt$), in contrast to the stair-step behavior observed when adding edges using the baseline methods.

Note that $\Bt$ is a finite resource to be allocated among the groups and cannot be globally maximized. Hence, the right-most graphs show how $\Delta\Bt$ is improved by decreasing $\Bt$ of the group with the highest initial $\Bt$ and increasing it otherwise (top-right). The goal is to converge both to the optimal bound of $2-2/|\mathcal{V}|$ as indicated by the black line.

Figure~\ref{app:evolutionfigs} depicts the evolution of the structural group unfairness and  group social capital metrics when the budget $B$ is very small (50 edges) and large (5,000 edges) to study the efficacy of the interventions. Figure~\ref{fig:violinfb5000} in Appendix~\ref{app:evolutionfigs} shows the distribution of the nodes' social capital metrics in each of the groups in the original graph and the resulting graphs after edge augmentation.
The results are consistent: for a fixed budget, edge augmentation via \ERG{} yields the best results both for the mitigation of the structural group unfairness and the increase in the overall social capital of all the groups in the graph. Additionally, for a fixed structural group unfairness mitigation goal, edge augmentation via \ERG{} achieves it with a significantly smaller budget of added edges than any of the baseline methods. Finally, Appendix~\ref{app:sec:rewiredgraphs} contains examples of synthetic graphs.

\new{\section{Discussion}}

\new{The proposed \ERG{} algorithm may be seen as an \textit{algorithmic reparative} social tool~\cite{Davis2021}, \emph{i.e.}, a computational approach designed to address and repair social inequalities, injustices or biases, in our case, in the context of social networks. Our method contributes to the use of social networks, seen as socio-technical systems~\cite{baxter2011}, for social good purposes. While an extensive body of work has been dedicated to the identification and mitigation of algorithmic bias \cite{danks2017,Barocas2016a}, %
this paper aims to trigger a deeper reflection and further discussion about the principles of algorithmic fairness and justice in relation social networks, their ultimate goals and their intricate business models of data trading among platform owners, data brokers, service and advertising companies~\cite{Srnicek2016}.} %
\new{There is not a universal interpretation of fairness~\citep{Curto2023}, yet the distributive paradigm of fairness~\citep{Rawls1971} dominates the political and philosophical discussions of social justice today. As a result, most discussions and implementations of fairness focus on how outcomes or benefits are distributed across different socially relevant groups.
Nevertheless, there are alternative notions of fairness that remain unexplored in social networks. For example, a prioritarian conception of fairness~\citep{Coeckelbergh2022} would give priority to the individuals belonging to a vulnerable social group. Moreover, prominent voices in social philosophy have long defended that fairness can be understood not only as the redistribution of individual social capital, but also as the social recognition of identity groups~\citep{Fraser2003}. In this paper, we focus on group fairness and combine both a prioritarian and a Rawlsian approach since the proposed intervention aims to mitigate the social capital gap for the most disadvantaged group while increasing the social capital of all groups. Given the importance of social networks in the definition of the social fabric, the ultimate goal of our work is to spur a reflection on the potential use of social networks as \textit{reparative} tools for social inequality~\citep{Davis2021}.} 

\new{While one could argue that defining groups in terms of protected attributes can be considered a form of discrimination, our approach does not aim to systematically increase the social capital of a pre-defined vulnerable social group, but to detect the social group or groups that is structurally disadvantaged in the network and implement a mechanism at scale that benefits an entire community. This type of mitigation of structural unfairness is particularly relevant since the potential disadvantages suffered by groups in social networks can add up to already existing social conditions that contribute to systemic injustice~\citep{Zalta2011}}.

\new{The implications of our work go beyond the mitigation of disparities in group social capital and information silos within social networks. The proposed metrics and graph intervention algorithm can also promote innovation and social mobility, since new connections between individuals across different groups are created while prioritizing the information flow gain to/from the most structurally disadvantaged group, and thus strengthening the influence of individuals that suffer from systemic injustice. From a practical perspective, we envision our proposal as a complement to existing graph interventions in a hybrid setup that combines the addition of edges connecting nodes with both small (strong ties) and large (weak ties) effective resistances. \ERG{} provides incentives to users in terms of increasing their social capital~\cite{bashardoust2022reducing} and opportunities for innovation by connecting distant nodes~\cite{Rajkumar2022}. Therefore the deployment of \ERG{} implies a more \textit{socially responsible} way to manage connections.} %
\new{Note that the aim of this proposal is not to present a universal solution to group social capital disparities in social networks, but a contribution to spur further reflections regarding the societal impact of algorithms in relation to their purpose and use.}

\new{Finally, \ERG{}'s objectives are well aligned with the Digital Services Act (DSA)\footnote{\url{https://www.eu-digital-services-act.com/Digital_Services_Act_Article_34.html}} in Europe, which aims to regulate online platforms and services, enhancing the transparency, accountability and safety of social platforms. The DSA imposes strict responsibilities to very large online platforms (those with more than 45 million EU users), given their significant societal impact. These platforms must conduct regular risk assessments, reduce systemic risks --such as the spread of harmful content or the manipulation of democratic processes-- and submit to independent audits. In particular, Article 34 of the DSA states that ``providers of very large online platforms need to assess their risks to society and adapt their algorithms if necessary''  and Article 38 demands that ``providers of very large online platforms and of very large online search engines that use recommender systems shall provide at least one option for each of their recommender systems which is not based on profiling''. The solution offered by \ERG{}, where new edges are added with the goal of mitigating disparities in social capital across groups while increasing everyone's social capital, could be of great interest to providers of very large platforms in the context of this regulation.}

\section{Conclusion and Future Work} 

\new{In this paper, we have presented a novel method, based on the effective resistance, to measure and mitigate group social capital disparities within a social network, where the groups are defined according to the values of a protected attribute. Grounded in spectral graph theory, we have introduced three measures of group social capital based on the effective resistance and we have proposed to mitigate disparities in group social capital by means of \ERG{}, a budgeted edge augmentation algorithm that systematically increases the social capital of the most disadvantaged group in the network and hence reduces the disparities in group social capital. In extensive experiments with three benchmark graph datasets, we illustrate how \ERG{} is the most effective method to decrease disparities in group social capital when compared to five baselines.}%

In future work, we plan to 
explore alternative non-greedy edge augmentation algorithms to mitigate structural group unfairness using our effective resistance-based measures and we would like to incorporate additional node features corresponding to the characteristics of the individuals in the network. We also maintain ongoing discussions with several community organizations to define a project with participants that arrive in another country as refugees, for whom information access and connection with the local population of the hosting country are \new{of paramount importance}.

\bibliography{ICSWM24}

\appendix

\section{Effective Resistance and information flow}
\label{app:extrabackgroundCT}

\subsection{Graph Diffusion Measures}

\paragraph{Discrete Information Propagation} Information propagation in networks has been widely studied~\citep{kempe2003maximizing}, prominently by means of graph diffusion and Random Walks methods. A Random Walk (RW) on a graph is a Markov chain that starts at a given node $i$, and moves randomly to another node from its neighborhood with probability $1/D_{i,i}$. The RW transition probability matrix is given by $\mathbf{P}=\mathbf{D}^{-1}\mathbf{A}$ and defines the discrete probability of a random walker to move from node $u$ to node $v$. $\mathbf{P}^k$ is the $k$-th power of the transition matrix $P$: the entry $(P^k)_{ij}$ denotes the probability of transitioning from node $i$ to node $j$ in exactly $k$ steps. The graph's diffusion matrix is defined as $\mathbf{T} = \sum_{k=0}^\infty \theta_k \mathbf{P}^k$, and it represents the cumulative effect of multiple steps of a random walk on the graph. Each entry $T_{ij}$ of $\mathbf{T}$ corresponds to the probability of transitioning from node $i$ to node $j$ over an infinite number of steps. 
$\theta_k$ is known as the teleport probability at step $k$ in the random walk. It quantifies the likelihood that, at each step, the random walker will teleport to a random node instead of following an edge. Thus, the sequence $\{\theta_k\}$ is a series of teleport probabilities over the steps. The resulting $T$ captures the cumulative probabilities of transitioning between nodes over an infinite number of steps in the random walk, such that  the probability of co-occurrence of two nodes on a random walk corresponds to the probability of information flowing between these two nodes.

However, this approach to assess information flow between nodes in a graph has several limitations. First, it requires considering all the potential paths in a graph, which might not be computationally feasible for large graphs. To overcome this issue, a value of $k$ is typically chosen, which limits the power of the method. Second, the teleport probabilities, $\theta_k$, need to be defined for each $k$-hop. Several methods have studied how to approximate it, such as Independent Cascade~\citep{kempe2003maximizing}, Katz~\citep{katz1953new}, SIR or PageRank~\citep{page1998pagerank}. Independent Cascade or SIR methods~\citep{kempe2003maximizing} are based on infection models, where they sample guided random walks and, thus, usually rely on expensive Monte Carlo simulations leading to a sub-optimal probability of transition, unable to consider the topology of the entire graph.

\paragraph{Graph Continuous Diffusion Metrics} Graph continuous diffusion metrics ---such as the Heat kernel distance,~\citep{coifman2005geometric}, effective resistance (or \emph{commute times} distance)~\citep{klein1993resistance, fouss2007random} or the bi-harmonic distance~\citep{lipman2010biharmonic}--- arise as a generalization of random walk metrics. Their mathematical foundations allow for a better characterization of the information flow and an intuitive interpretation of the diffusion processes in a network. %

Diffusion metrics define distances based on fine-grained nuances of the topology of the graph that are not captured by simple geodesic distances. When two nodes can be reached by many paths, they should be \emph{closer} than when they can be reached only by few paths of equal length. When two nodes can be reached by a set of edge-independent paths, they are \emph{closer} than when they are reached by redundant paths. Similarly, when two nodes are separated by a shorter path, they are \emph{closer} than when they are separated by a longer path~\citep{bozzo2013resistance}.

In addition, these metrics provide a node embedding, \emph{i.e.}, a numerical representation of each node in the graph that reflects its importance in the process of information diffusion. These embeddings capture the global structure of the network because they incorporate both the local and global geometry of the graph. 

The continuous diffusion metrics can be computed using the pseudo-inverse (or Green’s function) of the combinatorial graph Laplacian $\mathbf{L}=\mathbf{D}-\mathbf{A}$, or the normalized Laplacian $\mathcal{L} = \mathbf{D}^{-1/2}\mathbf{L}\mathbf{D}^{-1/2}$~\citep{fouss2007random}. The pseudo-inverse, denoted as $\mathbf{L}^+$ is computed using the spectral decomposition: $\mathbf{L}^+ = \sum_{i\ge 0} \lambda_i^{-1}\phi \phi^\top$ where $\mathbf{L} = \mathbf{\Phi \Lambda \Phi}^\top$, where $\lambda_i$ is the $i$-smallest eigenvalue of the Laplacian corresponding to the $\phi_i$ eigenvector.

In this paper, we use the effective resistance, which is a continuous diffusion metric.

\subsection{Effective Resistance and Commute Times}
\label{app:ct}

The Commute Time (CT)~\citep{Lovasz1996}, $\CT(u,v)$, is the expected number of steps that a random walker needs to go from node $u$ to $v$ and come back to $u$. The Effective Resistance, $R_{uv}$, is the Commute Time divided by the volume of the graph~\citep{klein1993resistance}. In addition to providing a distance for all pairs of nodes ---whether connected or not--- $R_{uv}$ may be viewed as an indicator of the criticality or importance of the edges in the flow of information throughout the network~\citep{tizghadam2009graph}. 

Intuitively, this distance captures how structurally similar and connected are two nodes in a graph. If two nodes are structurally similar to each other, then the effective resistance between them will be small. Conversely, if two nodes are weakly or not connected, then their effective resistance will be large. In addition, we can define a commute time embedding (CTE, $\mathbf{Z} = \sqrt{vol(G)}\mathbf{\Lambda^{-1/2}}\mathbf{\Phi}^\top$) of the nodes in the graph ---similar to the idea of the \emph{node's access signature} in~\citet{bashardoust2022reducing}---, where the Euclidean distance in such an embedding corresponds exactly to the commute times $\CT(u,v)=||Z_{u,:} - Z_{v,:}||^2=\mathbb{E}_u[v]+\mathbb{E}_v[u]=2|\mathcal{E}|R_{uv}$. This distance is upper bounded by the geodesic distance, with equality in the case of the graph being a tree. 

Note that the effective resistance does not rely on any parameter and it is an accurate metric to measure the graph's information flow~\citep{,chandra1989electrical,ghosh2008minimizing,arnaiz2022diffwire}, as explained next. Finally, note that $R_{uv}$ can be computed in a spectral manner:
\begin{equation}\label{eq:effresspec}
R_{uv} =\sum_{i>0} \frac{1}{\lambda_i} \left(\phi_i(u)-\phi_i(v)\right)^2
\end{equation}
where it become explicit that $R_{uv}$ depends on all the eigenvectors of the Laplacian, leading to a better characterization of the information flow.

\paragraph{Information Flow in a Graph}
The graph's information flow is given by the graph's conductance, which is measured leveraging the Cheeger Constant, $h_G$, of a graph~\citep{chung1997spectral}:
\begin{equation}\label{eq:cheeger}
    h_G=\min_{H\subseteq V} \frac{|\{e=(u,v): u\in S, v\in \bar{H}\}|}{\min (vol(H), vol(\bar{H}))}
\end{equation}
The larger $h_G$, the harder it is to disconnect the graph into separate communities. Therefore, to increase the information flow in the network, one could add edges to the original graph $G$, creating a new graph $G'$, such that $h_{G'} > h_G$.
In addition, by virtue of the Cheeger Inequality, $h_G$ is bounded by the smallest non-zero eigenvalue of $\mathbf{L}$ defined as $\lambda_2$:
\begin{equation}\label{cheegerineq}
    2h_G \leq \lambda_2 < \frac{h_G^2}{2}
\end{equation}
Finally, the $\CT$ is bounded by $\lambda_2$ as per the Lov\'{a}sz Bound~\citep{Lovasz1996}  
\begin{equation}\label{lovasz}
    \left| \frac{\CT(u,v)}{\text{vol}(G)}-\left(\frac{1}{d_u} + \frac{1}{d_v}\right)\right|\le \frac{1}{\lambda_2}\frac{2}{d_{min}}
\end{equation}

where vol($G$) is the volume of $G$, \emph{i.e.}, the sum of the degrees of the all nodes in the graph; $d_u$, $d_v$ are the degrees of nodes $u$ and $v$, respectively; and $d_{min}$ is the minimum degree in the graph.  

Therefore, a graph's information flow is bounded by $\lambda_2$ which is bounded by $h_G$. The intuition is that graphs with large $\lambda_2 \propto h_G$ have short $\CT$ distances and thus they have better information flow. Edge augmentation in a graph would lead to a new graph $G'$ where $h_{G'} > h_G$, with smaller $\CT$ distances and therefore better information flow. 

The effective resistance is also related to other ways of computing the information flow between two nodes in a graph, such as the Jacobian~\citep{topping2022understanding,digiovanni2023over,black2023understanding}.

\subsection{Measures Derived from Effective Resistance}
\label{app:relatedmetrics}

The group social capital measures that we propose in this paper (Sections~\ref{sec:groupmetrics} and~\ref{sec:disparitymetrics}) are grounded in previous metrics from the literature.

\paragraph{Total Effective Resistance}
$\Rg$~\citep{ellens2011effective} is the sum of all effective distances in the graph. A lower value of $\Rg$ indicates ease of signal propagation across the entire network and hence larger information flow. $\Rg$ is given by:
\begin{align}\label{eq:totaleffrg}
\Rg = \Rg(\mathcal{V}) &= \frac{1}{2} \mathbf{1}^\top \mathbf{R1}= \frac{1}{2}\sum_{(v,u) \in V} R_{uv}\\
&= n \sum_2^{n} \frac{1}{\lambda_n} = n \Tr(\mathbf{L}^\dagger)
\end{align}
The minimum $\Rg=|V|-1 = n-1$ is achieved in a fully connected graph. Conversely, the maximum $\Rg$ is achieved on a path graph (or linear graph) and $\Rg=\sum_i^{n-1}i = \frac{1}{2}(n(n-1))$. Therefore, $\Rg$ is ---for connected graphs--- in the range $[n-1, \frac{n(n-1)}{2}]$

Additionally, since the distance between $u$ and $v$ is the Euclidean distance in the embedding $\mathbf{Z}$,  %
$\Rg$ can be obtained as follows~\citep{ghosh2008minimizing}: 
\begin{equation}
    \Rg = \sum_{(v,u) \in V} ||Z_{u:} - Z_{v:}||^2 = n \sum_{u \in V} ||Z_{u:}||^2
\end{equation}

Similarly to $R_{uv}$, $\Rg$ is theoretically related to the connectivity of the graph defined by it smallest non-zero eigenvalue.~\citet{ellens2011effective} demonstrated the relation between $\Rg$ and $\lambda_2$:  
\begin{equation}\label{eq:diameffrg}
\frac{n}{\lambda_2} < \Rg \leq\frac{n(n-1)}{\lambda_2}.
\end{equation}

\paragraph{Resistance Diameter} The proposed group resistance diameter is based on the  resistance diameter of a graph $\Rd$, which is the maximum effective resistance on the graph~\citep{chandra1989electrical, qiu2007clustering}:
\begin{equation}
    \Rd = \max_{u,v \in V} R_{uv}
\end{equation}
$\Rd \propto \lambda_2$~\citep{chung1997spectral, chandra1989electrical}, since 
\begin{equation}\label{eq:appRdcheegerineq}
    \frac{1}{n\lambda_2} \leq \Rd \leq \frac{2}{\lambda_2}
\end{equation}
and specifically~\citep{qiu2007clustering, arnaiz2022diffwire}:
\begin{equation}\label{eq:appRdCheegerineq2}
    h_G\le \frac{\alpha^{\epsilon}}{\sqrt{\Rd\cdot \epsilon}} vol(S)^{\epsilon-1/2}, 
\end{equation}
By~\citep{qiu2007clustering} we know that
\begin{equation}
\Rd \leq \frac{1}{\lambda_2} \:\:\text{and}\:\: \Rd \leq \frac{1}{h_G^2}
\end{equation}

In addition, $\Rd$ is related to the \emph{cover time} of the graph, which is the expected time required for a random walk to visit every node at least once, \textit{i.e.}, the expected time for a piece of information to reach the entire network. $\Rd$ can be used to estimate the cover time of the graph, as per~\citep{chandra1989electrical}:
\begin{equation}\label{eq:appRdcover}
m\Rd \le \text{cover time} \le O(m\Rd \log n)
\end{equation}

\paragraph{Resistance Betweenness and Curvature}\label{app:sec:resistancebetweeness}
As the effective resistance is an information distance, this metric can be used to propose alternative betweenness or criticality metrics to the shortest path betweenness~\citep{newman2005measure}. In the literature, several effective resistance-based measures have been proposed to determine a node's criticality, such as: the current flow betweenness~\citep{tizghadam2009graph,tizghadam2010betres,bozzo2013resistance,brandes2005centrality,newman2005measure}, the resistance curvature of a node~\citep{topping2022understanding,devriendt2022discrete}, and the information bottleneck property of a node~\citep{arnaiz2022diffwire,karhadkar2023fosr,black2023understanding}. Note that the last two definitions are mathematically connected~\citep{devriendt2022discrete}.

In this work, we focus on the information bottleneck which has a close connection with the resistance curvature of a node. 
The resistance curvature of a node~\citep{devriendt2022discrete} is expressed as 
\begin{equation}
    p_u = 1 - \frac{1}{2} \sum_{v \in \mathcal{N}(u)} R_{uv}.
\end{equation}
Therefore, it fulfills the following equality:    
\begin{align}
p_u &= 1 - \frac{1}{2} \sum_{v \in \mathcal{N}(u)} R_{uv} =  1 - \frac{1}{2} \Bt(u)\nonumber \\
&\rightarrow \Bt(u)= -2(p_u-1).
\end{align}
Although the definition of a node's resistance curvature involves computing the sum of the effective resistances between the node and its neighboring nodes, the overall structure of the graph affects all $R_{uv}$'s and, consequently, the value of $p_i$ and $\Bt(u)$. The curvature of a node is bounded by $1-d_u/2\leq p_u \leq 1/2$.

In addition to the node resistance curvature, the link resistance curvature is defined in \citet{devriendt2022discrete} as
\begin{equation}
    \kappa_{uv} = \frac{2(p_u+p_j)}{R_{uv}}
\end{equation}
and writing it in terms of the proposed metrics in this work it will translate in:
\begin{align}\label{app:eq:linkcurvature}
    \kappa_{uv} &= \frac{2(p_u+p_j)}{R_{uv}} \nonumber\\
    &=\frac{2\left((1-\frac{1}{2}\Bt(u))+(1-\frac{1}{2}\Bt(v))\right)}{R_{uv}}\nonumber\\
    &=\frac{4-\Bt(u)-\Bt(v)}{R_{uv}}
\end{align}
Additionally, this is the value that the \textit{SDRF} algorithm~\citep{topping2022understanding} will use to identify the link to add links around. SDRF identifies the link with \textit{lowest} $\kappa_{uv}$ and adds an edge between a pair the neighbors of the endpoints of that edge that mostly improves the information bottleneck.

\section{Group Social Capital Metrics and Edge Augmentation Algorithm}

\subsection{Group Social Capital Metrics}

\subsubsection{Group Isolation and Isolation Disparity} Group Isolation is based on the previously explained notion of total effective resistance of a graph and its close connection with the current flow closeness centrality~\citep{newman2005measure}. 

We propose to define the isolation of a node as its total effective resistance $\Rg(u) = \sum_{v\in \mathcal{V}} R_{uv}$. This centrality metric based on all $R_{uv}$ considers all the eigenvector for its computation, since $R_{uv}$ can be defined using the whole spectrum of the graph. We proceed to derive $\Rg(u)$ in terms of the pseudo-inverse of the graph:
\begin{align}
    \Rg(i)=& \sum_{v}^N R_{uv} = \sum_{u}^N L^+_{uu} + L^+_{vv} - 2L^+_{uv} \nonumber\\
    =& N L^+_{uu} + \Tr(\mathbf{L}^+) - 2\sum_v^N L^+_{uv} \nonumber\\ 
    =& N L^+_{uu} + \Tr(\mathbf{L}^+)
\end{align}
using \citet[Theorem 4.2]{ellens2011effective} and \citet[Eq. 15] {bozzo2013resistance}. Here, $ L^+_{uu}$ indicates how close node $u$ is on expectancy to all the nodes, and a global component, $\Tr(\mathbf{L}^+)$, proportional to the average pairwise distance in the network, which is denotes of how large is the network overall, independently of the particular node $u$.

Based on the introduced $\Rg(u)$, we propose group isolation which is obtained as the expectation in $\Rg(u)$ for all the nodes in the group:
\begin{align}\label{app:eq:rtotsexp}
    \Rg(S_i) &= \mathbb{E}_{u\sim S_i}\left[\Rg(u)\right] = \left|\mathcal{V}\right|\mathbb{E}_{u\sim S_i}\left[\frac{1}{|\mathcal{V}|} \sum_{v \in \mathcal{V}} R_{uv}\right] \nonumber\\
    &=
    |\mathcal{V}| \mathbb{E}_{u\sim S_i}\left[\mathbb{E}_{v\sim \mathcal{V}}\left[R_{uv}\right]\right] \nonumber\\
    &= |\mathcal{V}|\mathbb{E}_{u\sim S_i,v\sim \mathcal{V}}\left[R_{uv}\right]
\end{align}

 $\Rg(S_i)$ is computed as the expectation of $\Rg(u)$ for all nodes in group $S_i$:
\begin{align}\label{app:eq:rtotnewexp}
    \Rg(S_i) &= |\mathcal{V}|\mathbb{E}_{u\sim S_i,v\sim \mathcal{V}}\left[R_{uv}\right]
    \\&=
    |\mathcal{V}| \frac{1}{|S_i|}\sum_{u\in S_i}\frac{1}{|\mathcal{V}|}\sum_{v\in \mathcal{V}} R_{uv}
    \nonumber\\
    &=
    \frac{1}{|S_i|}\sum_{u\in S_i}\sum_{v\in \mathcal{V}} R_{uv} =
    \frac{1}{|S_i|} \sum_{u\in S_i} \Rg(u) 
    \nonumber\\&= \mathbb{E}_{u\sim S_i}\left[\Rg(u)\right]\nonumber
\end{align}

Finally, the cumulative group isolation across all groups in the graph fulfills the following equality with the total effective resistance of the graph.
\begin{align}
 \frac{1}{2} \sum_{i \in SA} \left|S_i\right| \Rg(S_i)
&= \frac{1}{2} \sum_{i \in SA} \left|S_i\right| \frac{1}{|S_i|}\sum_{u\in S_i}\sum_{v\in \mathcal{V}} R_{uv} 
\nonumber\\
    &= \frac{1}{2} \sum_{v \in \mathcal{V}}\sum_{v\in \mathcal{V}} R_{uv} = \Rg = n \Tr(\mathbf{L}\dagger)
\end{align}
Therefore, since $\Rg$ is related to the information flow, $\Rg(S_i)$ measures the ease of information flow through group $S_i$ in the graph. %

The optimal --- yet extreme --- scenario of maximum information flow in a graph is such where all nodes in the graph are connected. Hence, $G$ will be a fully connected graph. In this scenario, the total effective resistance of the graph reaches its minimum $n-1$, where $n=|\mathcal{V}|$. The total effective resistance of all existing edges in the graph is $n-1$, and given that we have all connections, $\Rg$ also sums up to $n-1$. Therefore, every pair of nodes will be separated by $ R_{uv} = 2/n$. In this scenario, $\Rg(u)=(2/n)(n-1) = 2-2/n$ for all nodes in the graph. It leads to an group isolation of $\Rg(S_i) = 2-2/n$ for all the different groups in the graph and therefore a Isolation disparity $\Delta\Rg=0$.

Regarding group isolation disparity, Equation~\ref{eq:isodisp} can be redefined using Equation~\ref{app:eq:rtotnewexp}. It is equivalent to the equality for all groups of the mean $R_{uv}$ between of the nodes in the group and all nodes in the graph:
\begin{align}
    \Rg(S_i) &= \Rg(S_j), \forall i,j \in SA\\
    \mathbb{E}_{u\sim S_i,v\sim \mathcal{V}}\left[R_{uv}\right] &= \mathbb{E}_{u\sim S_j,v\sim \mathcal{V}}\left[R_{uv}\right], \forall i,j \in SA\times SA\nonumber
\end{align}

\subsubsection{Group Control and Control Disparity} \label{app:sec:groupcontrol}

We provide the proof for the bounds associated with the proposed control metrics. %

\paragraph{Bounds of $\Bt(u)$ and $\Bt(S_i)$} We show the proof of the bounds of the node and group control. 
\begin{theorem}\label{app:thm:nodecontrolbounds}
    The control of a node is bounded by $1\leq\Bt(u)\leq d_u$, being $d_u$ the degree of node $u$. Equality on the upper bound holds when all the edges are cut edges, \emph{i.e.,} edges that if removed the graph would become disconnected.
\end{theorem}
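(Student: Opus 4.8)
The plan is to reduce both inequalities to a single elementary fact about the effective resistance measured across an individual edge, and then to treat the equality clause separately, since that is the only part not already implied by the material stated earlier. Concretely, I would first record that the two inequalities are essentially immediate from the resistance--curvature identity established above, and then give a direct, self-contained argument for the upper bound that simultaneously delivers the cut-edge characterization.

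For the inequalities themselves, recall the relation $\Bt(u) = -2(p_u-1) = 2(1-p_u)$ together with the stated curvature bounds $1-d_u/2 \le p_u \le 1/2$. Substituting $p_u \le 1/2$ gives $\Bt(u) = 2(1-p_u) \ge 2(1-\tfrac12) = 1$, and substituting $p_u \ge 1 - d_u/2$ gives $\Bt(u) = 2(1-p_u) \le 2\cdot(d_u/2) = d_u$. This already yields $1 \le \Bt(u) \le d_u$; what is left is to pin down when the upper bound is attained.

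For that I would prove the key lemma: for every edge $(u,v)\in\mathcal{E}$ one has $0 < R_{uv} \le 1$, with $R_{uv}=1$ if and only if $(u,v)$ is a cut edge. The inequality is an instance of Rayleigh's monotonicity principle (already invoked in the main text): deleting every edge except $(u,v)$ can only raise the effective resistance, and the single remaining unit resistor has resistance exactly $1$, so $R_{uv}\le 1$ in $G$. Reading the network as the unit resistor $(u,v)$ placed in parallel with the rest of $G$, the parallel law shows the resistance drops strictly below $1$ precisely when the remainder carries current, i.e. precisely when an alternative $u$--$v$ path exists; equivalently, $R_{uv}=1$ exactly when $(u,v)$ is a bridge whose removal disconnects $u$ from $v$. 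Summing the lemma over the $d_u$ neighbours gives $\Bt(u)=\sum_{v\in\mathcal{N}(u)}R_{uv}\le d_u$, and since each summand is at most $1$, equality forces $R_{uv}=1$ for every $v\in\mathcal{N}(u)$, i.e. every edge incident to $u$ is a cut edge, as claimed.

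The lower bound can also be recovered without the curvature detour, which is worth noting for self-containedness: the escape probability of a random walk started at $u$ equals $\tfrac{1}{d_u R_{uv}}\le 1$, so $\sum_{v\sim u} R_{uv}^{-1} = d_u\sum_{v\sim u}\tfrac{1}{d_u R_{uv}} \le d_u^2$, and Cauchy--Schwarz applied to $\sum_{v\sim u} R_{uv}$ and $\sum_{v\sim u} R_{uv}^{-1}$ gives $\Bt(u)\ge d_u^2/\sum_{v\sim u}R_{uv}^{-1}\ge 1$. The main obstacle is the equality clause rather than the bounds: establishing $R_{uv}\le 1$ is routine, but locating the equality case at the bridge condition needs the parallel-law/flow argument above instead of the spectral formula for $R_{uv}$, and one must read ``all edges are cut edges'' as ``all edges incident to $u$'', since $\Bt(u)$ depends only on that local star.
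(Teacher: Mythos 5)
Your derivation of the two inequalities is exactly the paper's argument: substitute the curvature bounds $1-d_u/2 \le p_u \le 1/2$ of Devriendt--Lambiotte into the identity $\Bt(u) = 2(1-p_u)$ and rearrange. Where you genuinely diverge is in everything after that, and the divergence is to your credit: the paper asserts the equality clause (upper bound attained iff all edges incident to $u$ are cut edges) but never proves it, whereas you supply the missing step via the elementary lemma that $0 < R_{uv}\le 1$ for every edge $(u,v)\in\mathcal{E}$, with $R_{uv}=1$ precisely when $(u,v)$ is a bridge. That lemma, proved by Rayleigh monotonicity and the parallel-resistor law, immediately gives $\Bt(u)=\sum_{v\in\mathcal{N}(u)}R_{uv}\le d_u$ with the correct equality characterization, and it makes the upper bound self-contained rather than inherited from the curvature literature. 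Your alternative lower-bound argument (escape probability $\tfrac{1}{d_u R_{uv}}\le 1$ plus Cauchy--Schwarz) is also valid, though more machinery than needed; a one-line version is that shorting $\mathcal{V}\setminus\{u\}$ to a single node gives $R_{uv}\ge 1/d_u$ for each neighbor, whence $\Bt(u)\ge d_u\cdot(1/d_u)=1$. Your reading of ``all the edges are cut edges'' as ``all edges incident to $u$'' is the correct interpretation of the theorem statement, since $\Bt(u)$ only involves the star at $u$. One cosmetic note: the paper's displayed chain contains a sign typo ($1-\tfrac12\Bt(u)\le -\tfrac12$ should read $\le \tfrac12$); your version avoids this.
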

\begin{proof}
The resistance curvature of a node is known to be bounded by $1-d_u/2 \leq p_u \leq 1/2$~\citep{devriendt2022discrete}, and the relation between the curvature and group control is given by the equality $p_u = 1-\frac{1}{2}\Bt(u)$. Therefore, we obtain the bounds as
\begin{align}
1-\frac{d_u}{2}&\leq p_u \leq \frac{1}{2} \rightarrow\nonumber\\ \nonumber
1-\frac{d_u}{2}&\leq 1-\frac{1}{2}\Bt(u) \leq -\frac{1}{2} \rightarrow\\ \nonumber
-d_u&\leq -\Bt(u)\leq -1 \rightarrow \\\nonumber
1&\leq \Bt(u)\leq d_u
\end{align}
\end{proof}

\begin{theorem}\label{app:thm:groupcontrolbounds}
    The control of a group is bounded by $1\leq\Bt(S_i)\leq \frac{\text{vol}(S_i)}{|S_i|}$, being vol$(S_i)$ the sum of the degrees of node $u$. Thus, $\text{vol}(S_i)/|S_i|$ is the average degree of all the nodes in $S_i$. Equality on the upper bound holds when the subgraph with all nodes of $S_i$ and their neighbors is a tree graph, \textit{i.e.}, a connected acyclic undirected graph.
\end{theorem}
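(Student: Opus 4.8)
The plan is to derive the group bounds directly from the node-level bounds already established in \cref{app:thm:nodecontrolbounds}, namely $1 \leq \Bt(u) \leq d_u$ for every node $u$, simply by averaging over the members of $S_i$. Since $\Bt(S_i) = \frac{1}{|S_i|}\sum_{u \in S_i}\Bt(u)$ is an arithmetic mean, both inequalities pass through the averaging operation. For the lower bound, I would sum $\Bt(u) \geq 1$ over all $u \in S_i$ and divide by $|S_i|$ to obtain $\Bt(S_i) \geq 1$. For the upper bound, summing $\Bt(u) \leq d_u$ gives $\sum_{u \in S_i}\Bt(u) \leq \sum_{u \in S_i} d_u = \text{vol}(S_i)$, and dividing by $|S_i|$ produces $\Bt(S_i) \leq \text{vol}(S_i)/|S_i|$, the average degree of the group. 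This reduces the entire proof to invoking the previous theorem.

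Next I would handle the equality characterization. Because the group upper bound is a sum of the per-node upper bounds, equality holds in the group bound if and only if $\Bt(u) = d_u$ for \emph{every} $u \in S_i$ simultaneously. By the equality clause of \cref{app:thm:nodecontrolbounds}, $\Bt(u) = d_u$ exactly when every edge incident to $u$ is a cut edge, in which case each such edge contributes $R_{uv} = 1$ to the sum $\Bt(u) = \sum_{v \in \mathcal{N}(u)} R_{uv}$. I would then verify that the stated structural hypothesis---that the subgraph spanned by $S_i$ together with all of its neighbors is a tree---guarantees this: in a tree every edge is a bridge, so each edge incident to a node of $S_i$ is a cut edge, forcing $R_{uv} = 1$ on all of them and hence $\Bt(u) = d_u$ for each $u \in S_i$.

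The main obstacle is the justification that a cut (bridge) edge $(u,v)$ has effective resistance exactly $R_{uv} = 1$, which is the one step not purely mechanical. This follows from the electrical interpretation of effective resistance: when $(u,v)$ is a bridge, deleting it disconnects $u$ from $v$, so any unit current injected at $u$ and extracted at $v$ must flow entirely through that single unit-resistance edge, giving $R_{uv} = 1$. I would state this fact explicitly (or cite it alongside Rayleigh monotonicity) to close the equality argument; the remainder is the routine averaging described above.
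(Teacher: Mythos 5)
Your averaging argument for the two inequalities is exactly the paper's proof: it takes the node-level bounds $1\leq\Bt(u)\leq d_u$ from \cref{app:thm:nodecontrolbounds} and passes them through the expectation $\Bt(S_i)=\mathbb{E}_{u\sim S_i}[\Bt(u)]$, yielding $1\leq\Bt(S_i)\leq \mathbb{E}_{u\sim S_i}[d_u]=\text{vol}(S_i)/|S_i|$. Nothing more is done in the paper for the main claim, so on that front you match it precisely.

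Where you go beyond the paper is the equality characterization, which the paper states but never proves. Your chain (equality in the average iff $\Bt(u)=d_u$ for all $u\in S_i$, iff every incident edge has $R_{uv}=1$, iff every incident edge is a bridge, with the bridge-implies-$R_{uv}=1$ step justified electrically) is sound, but the final link to the theorem's stated hypothesis has a subtle gap: the subgraph spanned by $S_i$ and its neighbors being a tree does \emph{not} by itself make those edges bridges in the full graph $G$, since two neighbors of some $u\in S_i$ could be joined by a path running entirely outside that subgraph, short-circuiting the edge and forcing $R_{uv}<1$. You would need the stronger assumption that this subgraph is (for instance) a union of connected components of $G$, or directly that all edges incident to $S_i$ are cut edges of $G$. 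This looseness originates in the theorem statement itself rather than in your reasoning, but since you chose to verify the equality clause you should either flag it or restate the condition in terms of cut edges of $G$.
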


\begin{proof}
The control of a group is defined as $\Bt(S_i) = \mathbb{E}_{u\thicksim S_i}[\Bt(u)]$ and using Theorem~\ref{app:thm:nodecontrolbounds}, we derive the bounds of $\Bt(S_i)$ as follows:
\begin{align}
1&\leq \Bt(u)\leq d_u \rightarrow \nonumber\\
1&\leq \mathbb{E}_{u\thicksim S_i}[\Bt(u)]\leq \mathbb{E}_{u\thicksim S_i}[d_u] \nonumber\\
1&\leq\Bt(S_i)\leq \frac{\text{vol}(S_i)}{|S_i|}\nonumber\\
\end{align}
\end{proof}

\paragraph{Control as an Allocation Problem} %
Group control is ($\Bt(S_i)$) a limited resource to be distributed in the network. %
The sum of $R_{uv}$ for every edge always equals to $|\mathcal{V}|-1$~\citep{ellens2011effective}:
$$
\sum_{(u,v)\in\mathcal{E}} R_{uv} = |V|-1.
$$
Therefore, the sum of the node controls in the graph is defined as
$$
\sum_{u\in\mathcal{V}} \Bt(u) = \sum_{u\in\mathcal{V}} \sum_{v\in\mathcal{N}(u)} R_{uv} = 2\times \sum_{(u,v)\in\mathcal{E}} R_{uv} = 2|V|-2,
$$
and the expectation as:
$$
\mathbb{E}_{u\sim V}[\Bt(u)] = 2-\frac{2}{|V|}
$$
independently of the number of edges (density) of the graph. 

As a consequence of the definition of group control (Equation~\ref{eq:groupcontrol}), 
the weighted sum of group control for all groups in the graph and the weighted mean also remain constant at:
$$
\sum_{S_i \in V} |S_i|\times \Bt(S_i)= 2|V|-2
$$
and
$$
\frac{1}{|V|}\sum_{S_i \in V} |S_i|\times \Bt(S_i)= 2-\frac{2}{|V|}.
$$ 

\subsection{Group Isolation vs Group Information Share}

The group isolation $\Rg(S_i)$ is given by Equation \ref{eq:totaleffrgroup} and it is proportional to the expected information distance when sampling one node from group $S_i$ and another node at random. Thus, $\Rg(S_i)$ quantifies the information access of a group of nodes in the network. 

Conversely, the amount of information that a group of nodes $S_i$ \emph{shares} with the rest of the network is given by the expected information distance between a random node $v$ from the network and a node in group $S_i$. We define this metric as the \textit{Group Information Share} ($R_{\text{From-}S_i}(V)$).

Mathematically, both concepts are are equivalent as $\Rg(S_i)$ not only quantifies how isolated a group is (in terms of information access), but also how much it contributes to the information flow that reaches other nodes in the network. Therefore, the disparity in information access between groups and the disparity in information sharing between groups on a randomly selected node in the network are also equivalent. 

\paragraph{Mathematical Equivalence}

The Group Isolation is defined as the expected total effective resistance (or information distance, $R_{uv}$) between a randomly chosen node in group $S_i$ to all other nodes in the network:
\begin{equation}
\Rg(S_i) = \mathbb{E}_{u \sim S_i}\left[R_{\text{tot}}(u)\right] = |\mathcal{V}|\mathbb{E}_{u \sim S_i}[\mathbb{E}_{v \sim \mathcal{V}}[R_{uv}]],
\end{equation} where $R_{\text{tot}}(u)$ represents the total effective resistance from node $u$ to all other nodes.

Similarly, the Group Information Share measures the expected information distance from a randomly selected node in the network to a group $S_i$:
\begin{equation}
R_{\text{From-}S_i}(V) = \mathbb{E}_{v \sim V}[R_{\text{From-}S_i}(v)] = |\mathcal{V}|\mathbb{E}_{v \sim \mathcal{V}}[\mathbb{E}_{u \sim S_i}[R_{uv}]]
\end{equation}

Thus, the expected information distance when a random node is sampled from group $S_i$ is the same as when a random node is sampled from the network and connected to group $S_i$:
\begin{align}
|\mathcal{V}|\mathbb{E}_{u \sim S_i}[\mathbb{E}_{v \sim \mathcal{V}}[R_{uv}]] &= |\mathcal{V}|\mathbb{E}_{v \sim \mathcal{V}}[\mathbb{E}_{u \sim S_i}[R_{uv}]] \\
    \Rg(S_i) &= R_{\text{From-}S_i}(V)
\end{align}

\paragraph{Group Isolation Disparity}

In terms of disparity between groups, the Group Isolation Disparity is defined as the difference in isolation between two groups $S_i$ and $S_j$:
\begin{equation}
\Delta \Rg = \Rg(S_i) - \Rg(S_j)
\end{equation}

Similarly, the Group Information Share Disparity measures the disparity in information contribution between groups:
\begin{equation}
\Delta R_{\text{From-}S}(V) = R_{\text{From-}S_i}(V) - R_{\text{From-}S_j}(V)
\end{equation}

Since $\Rg(S_i) = R_{\text{From-}S_i}(V)$, it follows that:
\begin{equation}
\Delta \Rg = \Delta R_{\text{From-}S_i}(V)
\end{equation}

Thus, the disparities in Group Isolation and in Group Information Share are the same, highlighting the dual role of the proposed metric in capturing both access to information from the network and information sharing with the rest of the network.

\subsection{Efficient Version of ERG}
Algorithm~\ref{alg:erp-efficient} shows an efficient manner to update the pseudo-inverse of the Laplacian after adding one edge to the graph. Therefore, we avoid the computation of $\mathbf{L}^\dagger$ after every edge addition. $\mathbf{L}^\dagger$ is easily updated using the Woodbury's formula which is based on the values of $\mathbf{L}^\dagger$ (see~\citet{black2023understanding} for a proof).

\SetKwRepeat{Repeat}{Repeat}{Until}{}
\begin{algorithm}
  \caption{ERG-Link}\label{alg:erp-efficient}
  \KwData{Graph $G=(\mathcal{V},\mathcal{E})$, %
  a protected attribute $SA$, 
  budget $B$ of total number of edges to add}
  \KwResult{Rewired Graph $G'=(\mathcal{V}', \mathcal{E}')$}

  \BlankLine
  $\mathbf{L} = \mathbf{D}-\mathbf{A}$\;
  $S_d = \operatorname*{argmax}_{S_i \forall i \in SA} \Rg(S_i)$ \;
  $\mathbf{L}^\dagger = \sum_{i>0} \frac{1}{\lambda_i} \phi_i \phi_i^\top = \left(\mathbf{L}+\frac{\mathbf{1}\mathbf{1}^\top}{n}\right)^{-1} - \frac{\mathbf{1}\mathbf{1}^\top}{n}$ \;
  
  \Repeat{$|\mathcal{E'}\setminus\mathcal{E}|=B$}{
  $\mathbf{R}=\mathbf{1} \diag(\mathbf{L}^\dagger)^\top + \diag(\mathbf{L}^\dagger)\mathbf{1}^\top- 2\mathbf{L}^\dagger$ \;
  $C=\{(u,v) \mid u \in S_d \text{ or } v \in S_d, (u,v) \notin E\}$ \tcp*{Select edge candidates}
  $\mathcal{E}' = \mathcal{E}'\cup \arg \max_{(u,v)\in C} R_{uv}$ \;
  
  \BlankLine
  \tcp{Fast update of $\mathbf{L}$ and $\mathbf{L}^\dagger$}
  $\mathbf{L}= \mathbf{L}+(\mathbf{e}_u-\mathbf{e}_v)(\mathbf{e}_u-\mathbf{e}_v)^\top$\;
  $\mathbf{L}^\dagger = \mathbf{L}^\dagger - \frac{1}{1+R_{uv}} \times (\mathbf{L}^\dagger_{u,:}-\mathbf{L}^\dagger_{v,:})\otimes (\mathbf{L}^\dagger_{u,:}-\mathbf{L}^\dagger_{v,:})$ \tcp*{updated by Woodbury}

}

  \BlankLine
  \Return $G'$\;
\end{algorithm}

\section{\new{Dataset Statistics}}
\label{app:data-stats}

\begin{figure*}[t]
    \begin{subfigure}{0.33\textwidth}
    \centering
    \includegraphics[width=\linewidth]{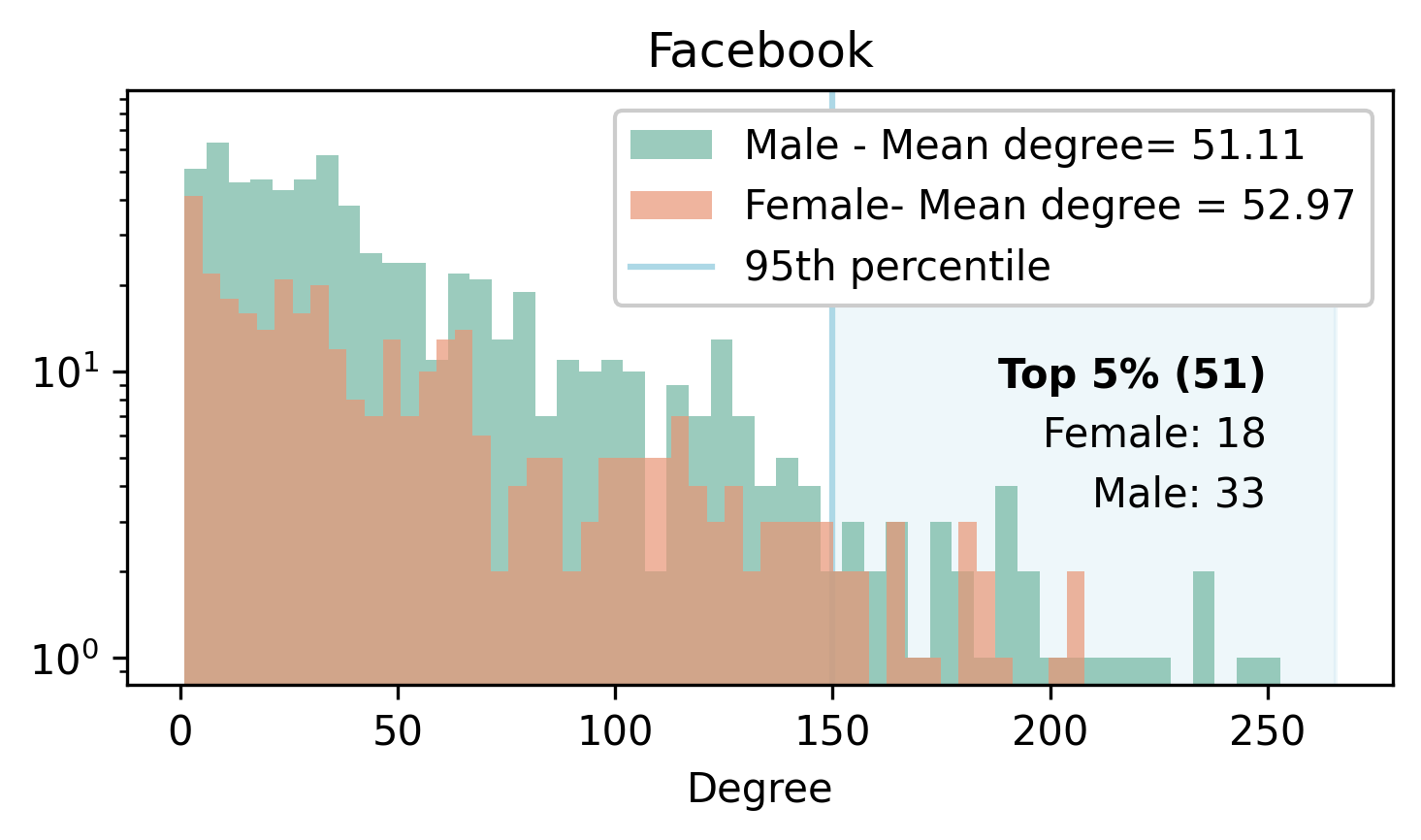}
    \end{subfigure}
    \hfill
    \begin{subfigure}{0.33\textwidth}
    \centering
    \includegraphics[width=\linewidth]{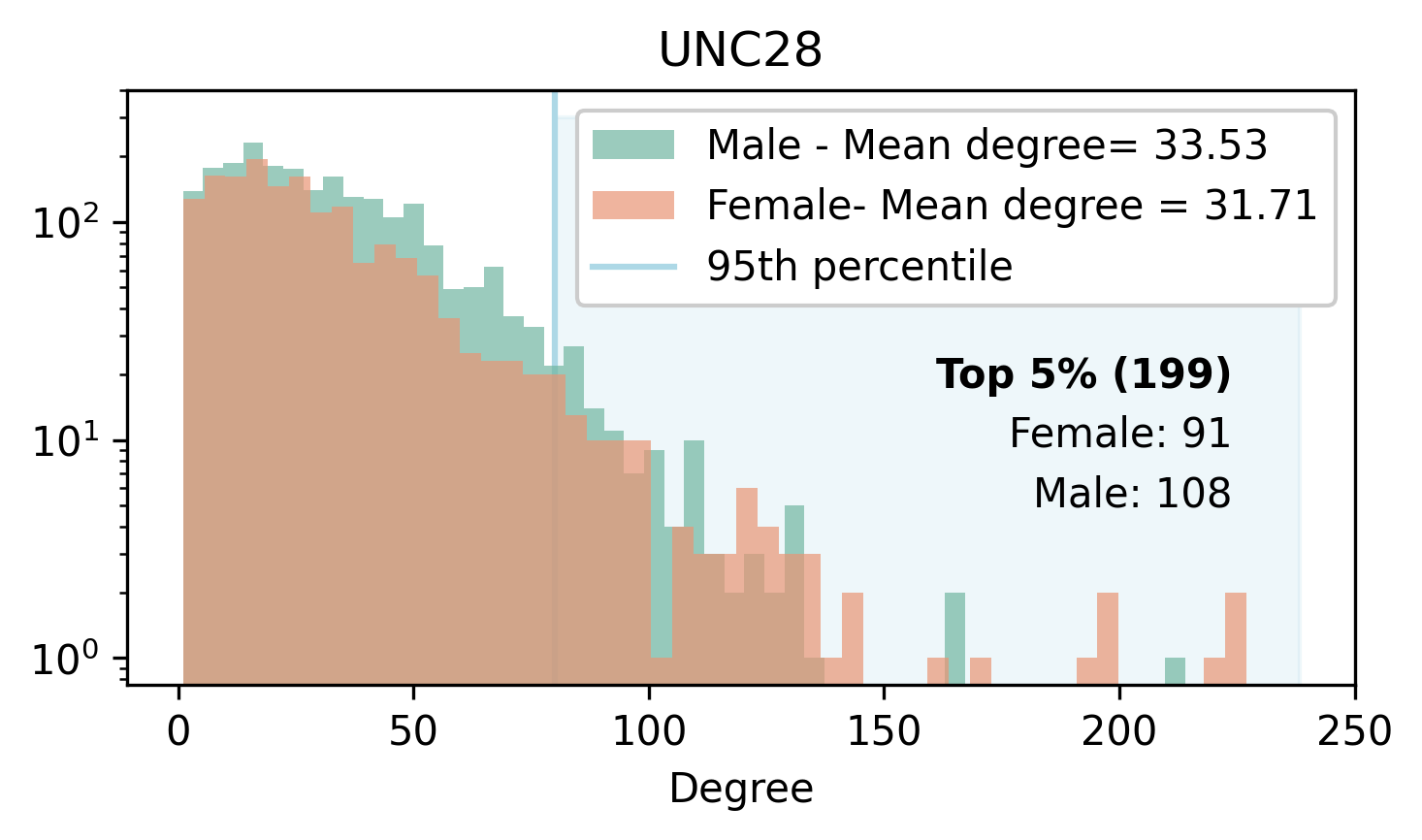}
    \end{subfigure}
    \hfill
    \begin{subfigure}{0.33\textwidth}
    \centering
    \includegraphics[width=\linewidth]{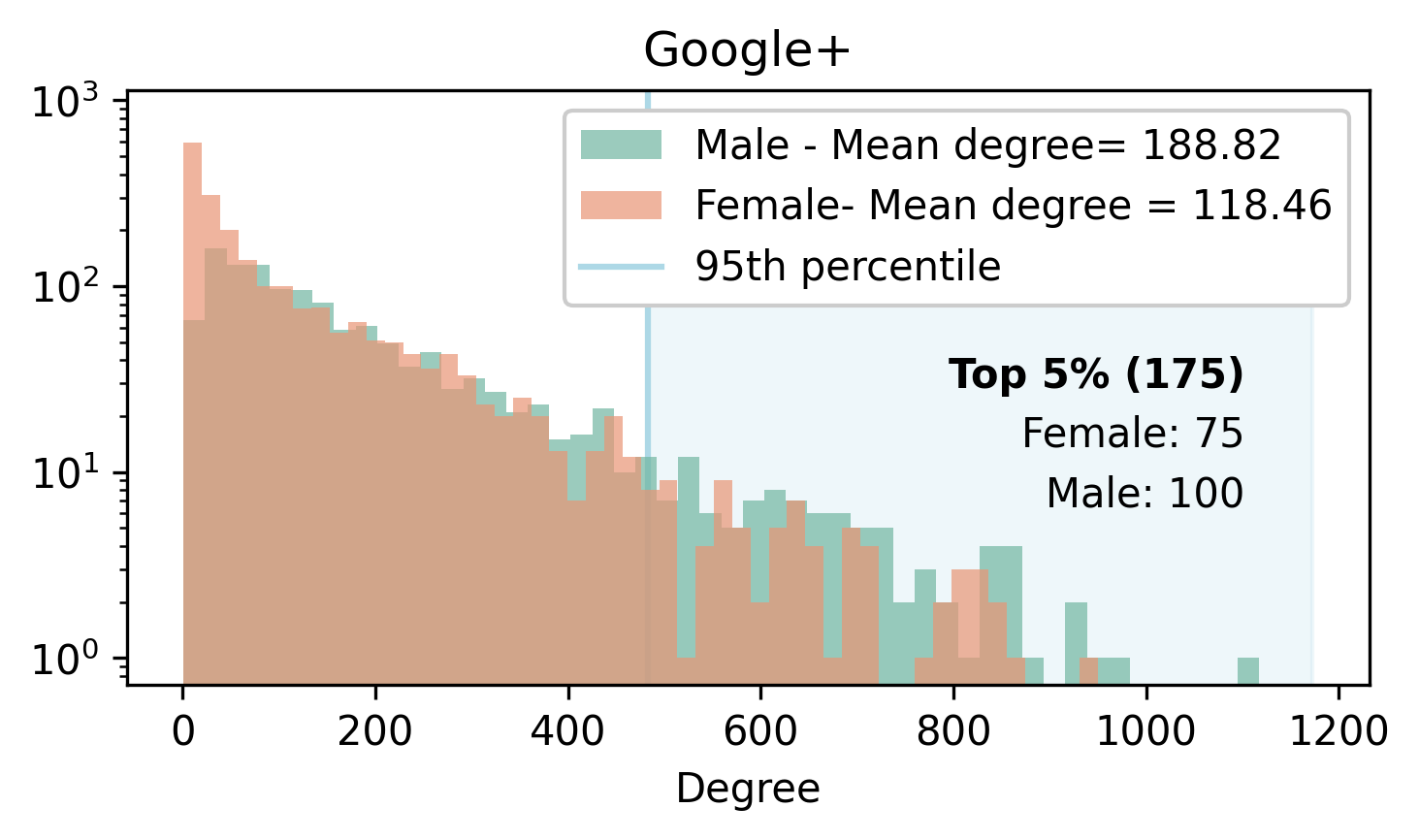}
    \end{subfigure}    
    \caption{Original degree distribution per dataset and gender. We also include the average degree per group and the composition of the top 5\% of nodes.}
    \label{fig:data-degree-dist}
\end{figure*}
\begin{figure*}[ht]
    \begin{subfigure}{0.33\textwidth}
    \centering
    \includegraphics[width=\linewidth]{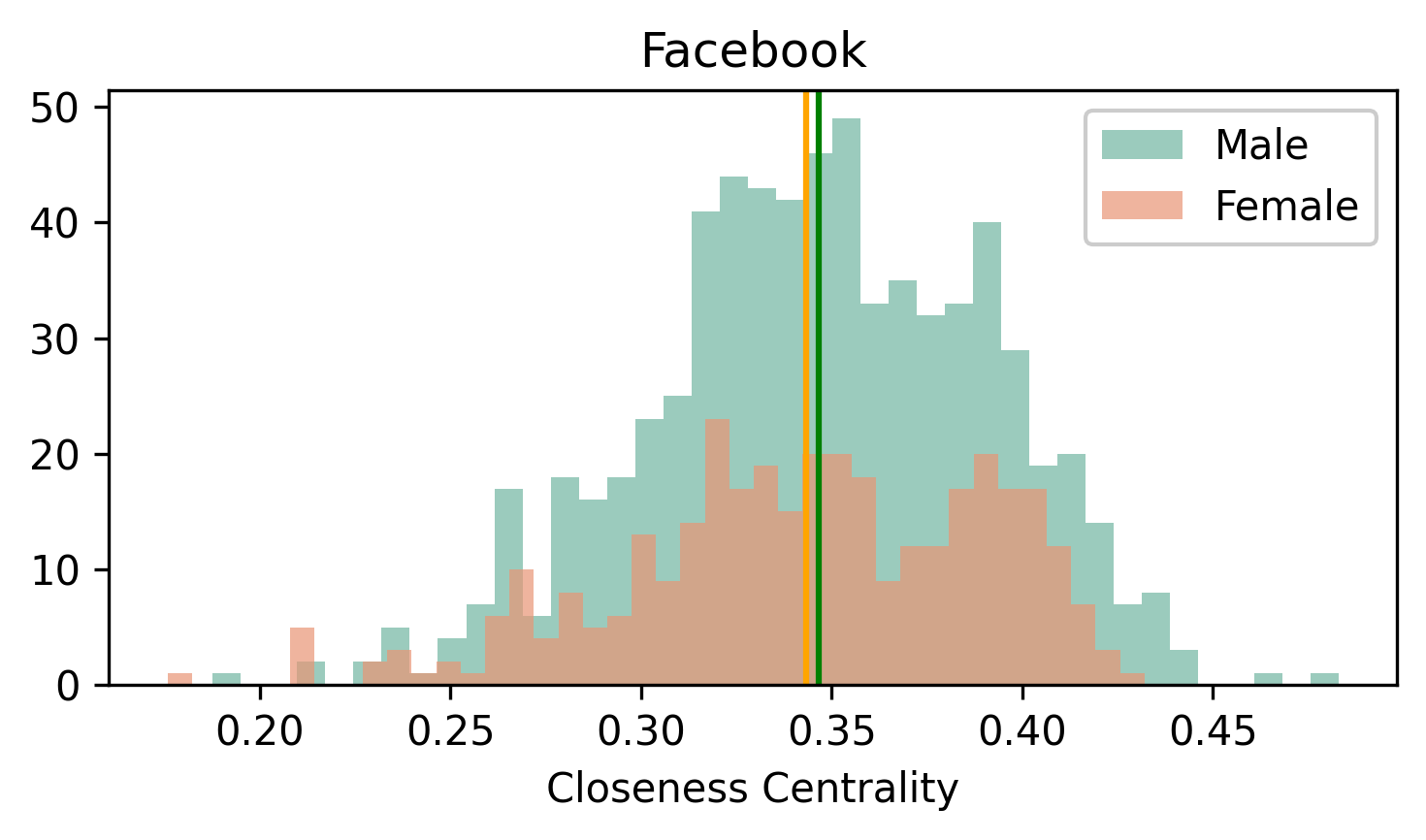}
    \end{subfigure}
    \hfill
    \begin{subfigure}{0.33\textwidth}
    \centering
    \includegraphics[width=\linewidth]{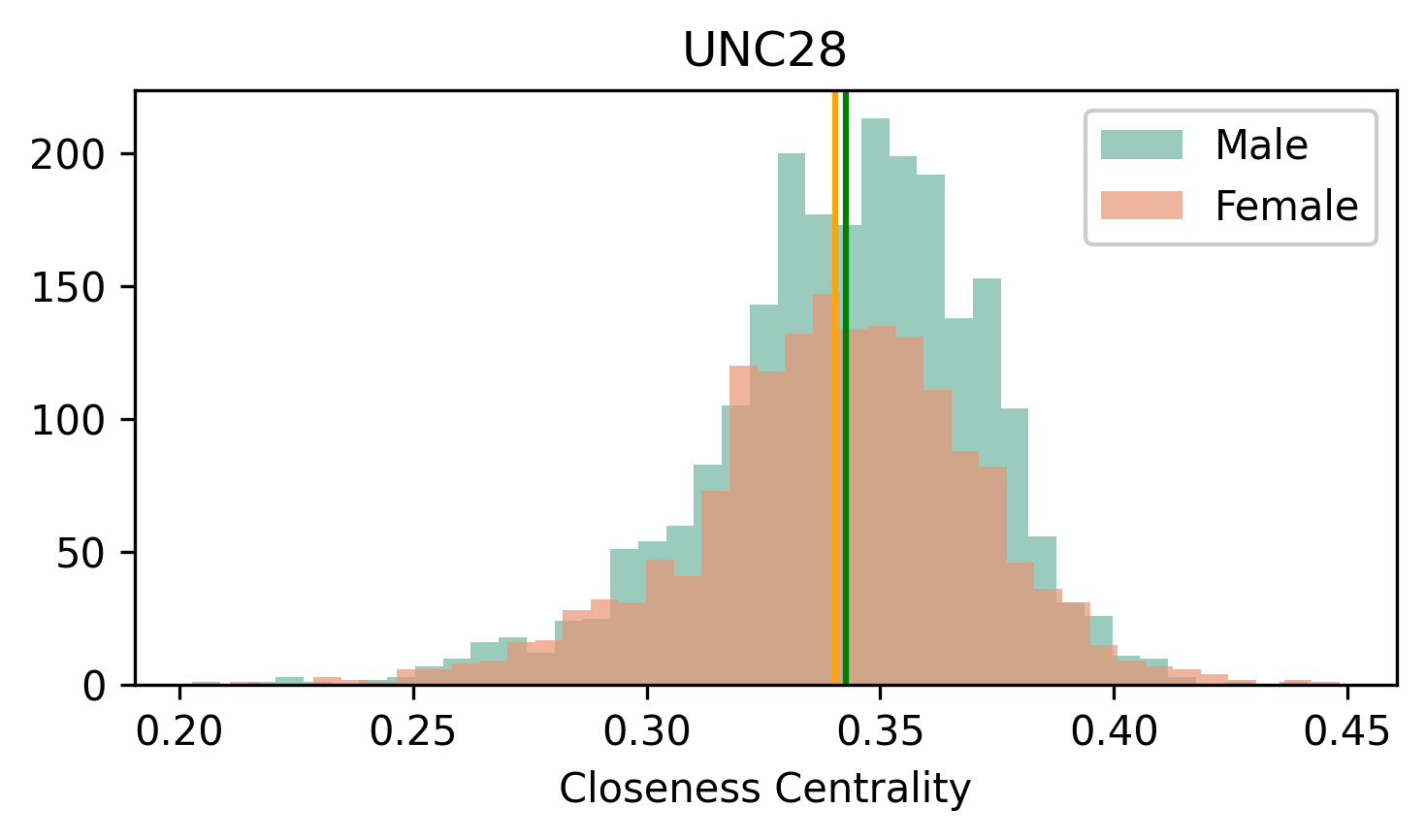}
    \end{subfigure}
    \hfill
    \begin{subfigure}{0.33\textwidth}
    \centering
    \includegraphics[width=\linewidth]{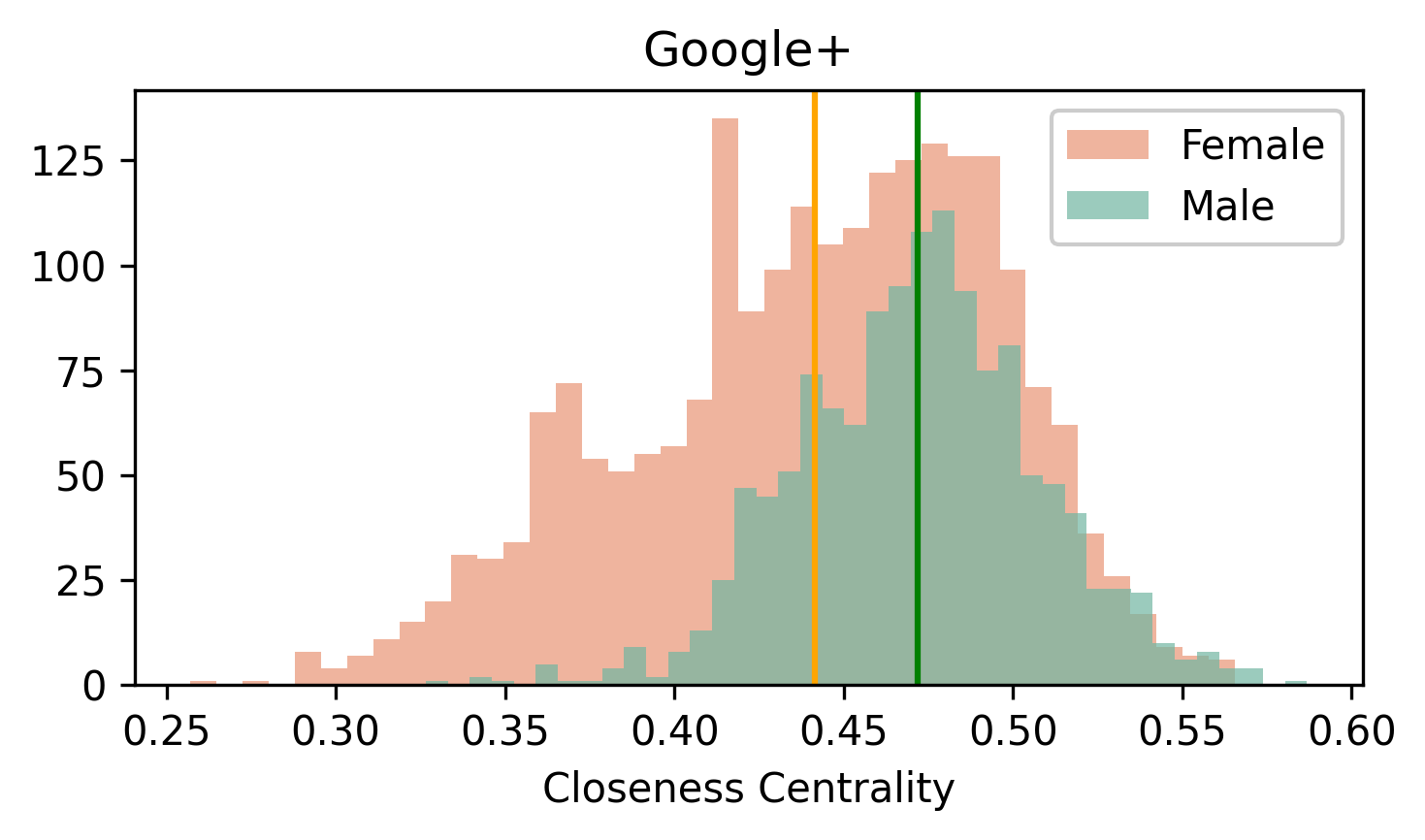}
    \end{subfigure}

    \begin{subfigure}{0.33\textwidth}
    \centering
    \includegraphics[width=\linewidth]{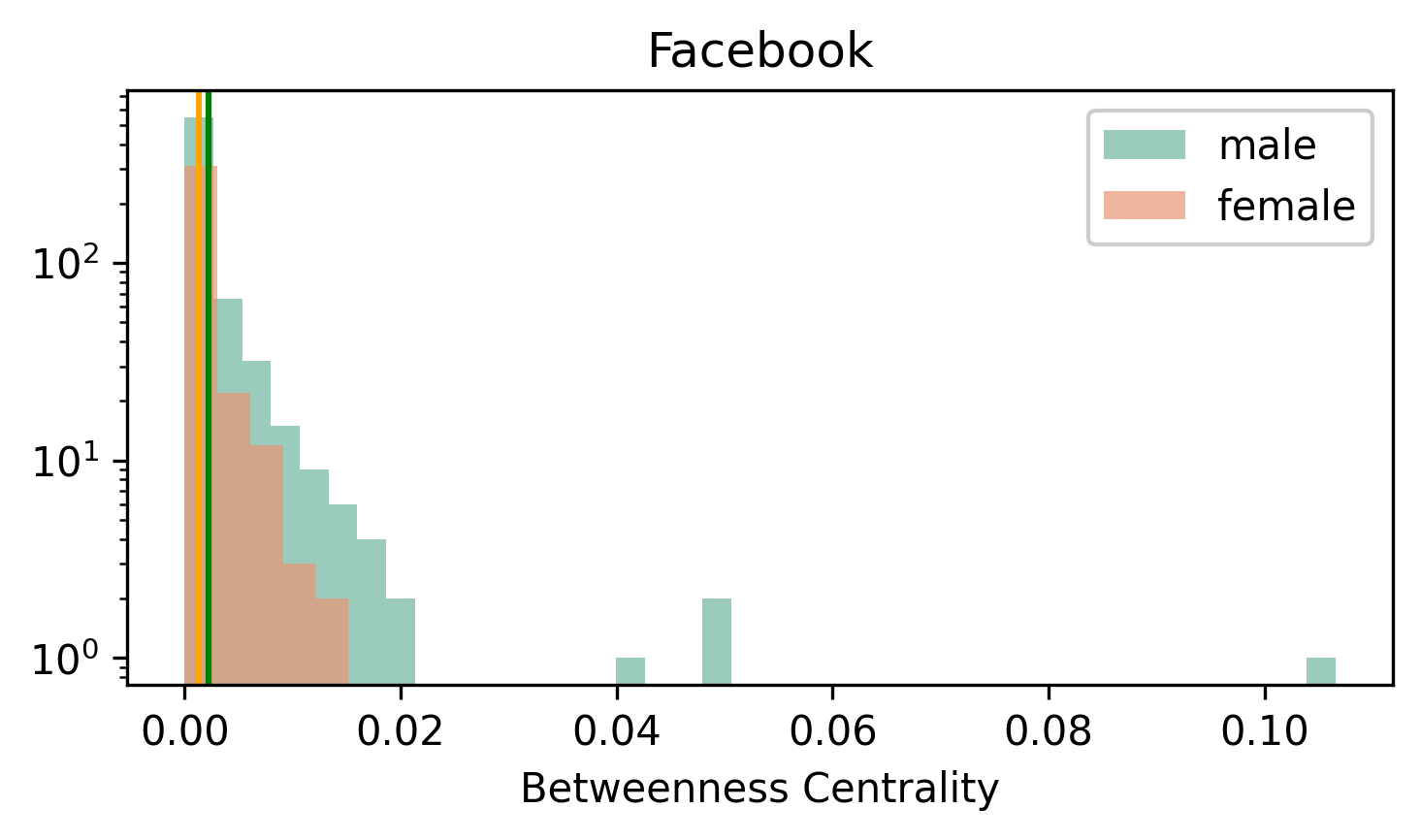}
    \end{subfigure}
    \hfill
    \begin{subfigure}{0.33\textwidth}
    \centering
    \includegraphics[width=\linewidth]{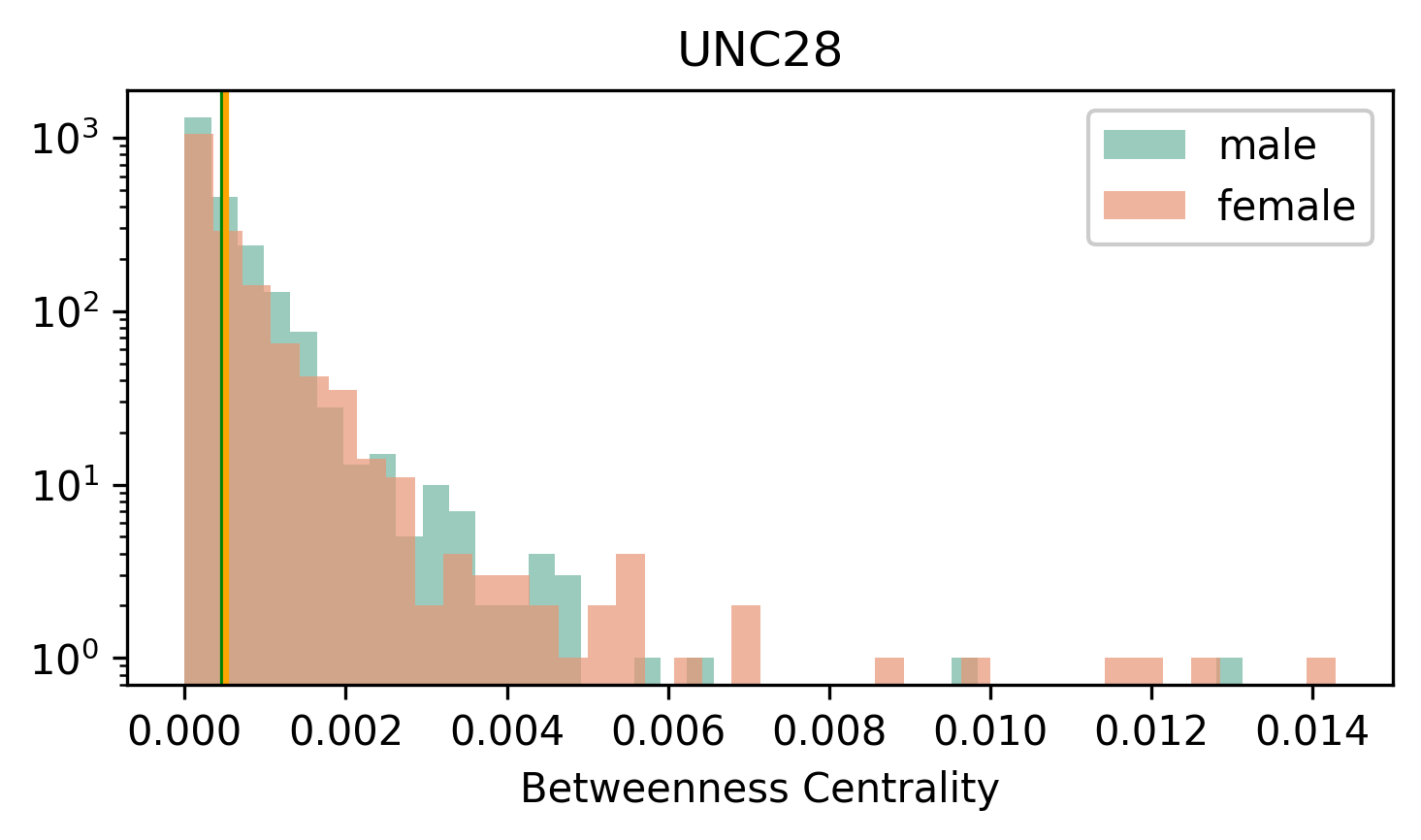}
    \end{subfigure}
    \hfill
    \begin{subfigure}{0.33\textwidth}
    \centering
    \includegraphics[width=\linewidth]{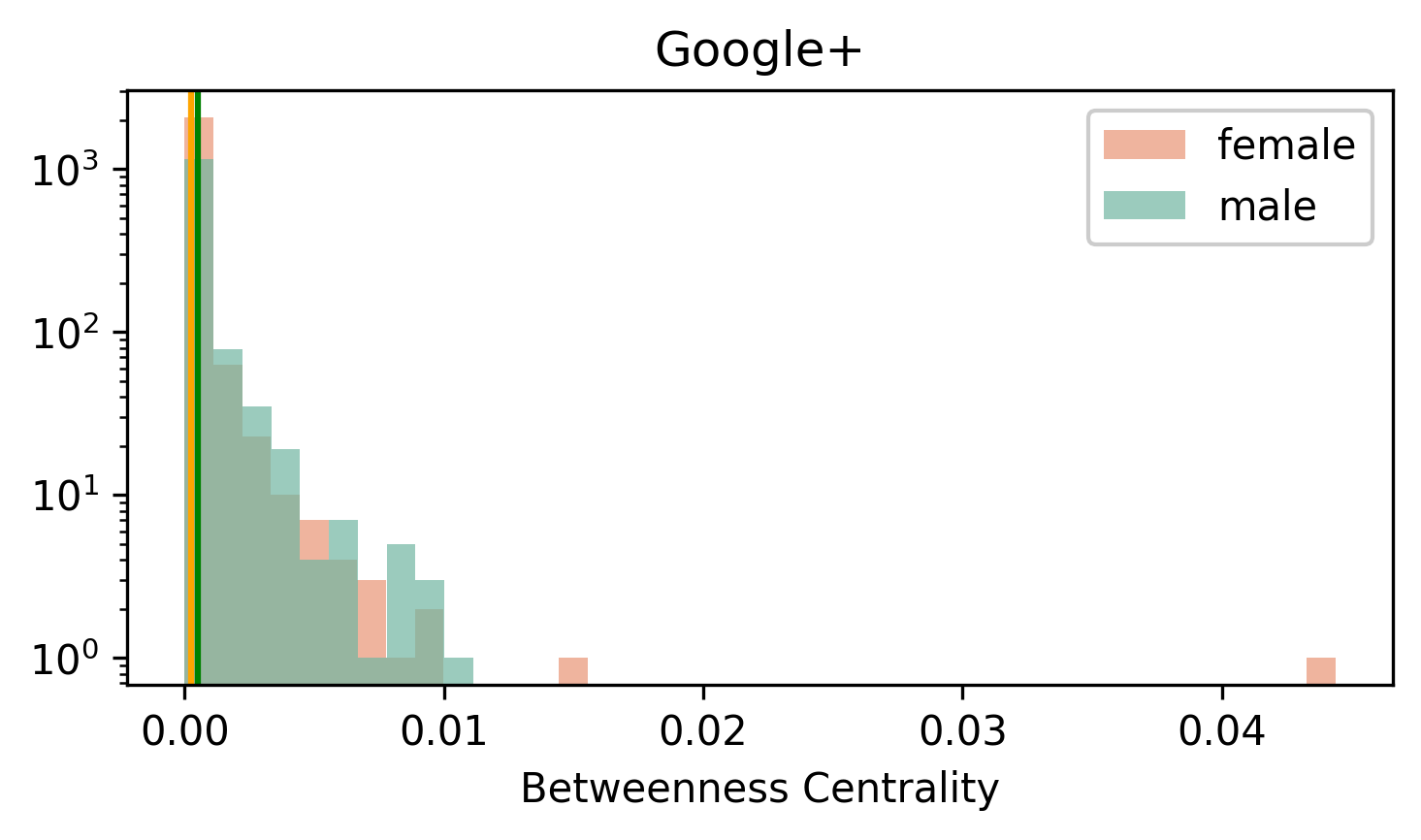}
    \end{subfigure}
    \caption{Per-group centrality metrics in the original datasets.}
    \label{fig:data-centrality-metrics}
\end{figure*}

 \begin{table}[ht]
    \centering
    \begin{tabular}{r|ccc}
    \toprule
    Dataset          & Facebook & UNC28 & Google+     \\
    \midrule
    $|V|$       & 1,034  & 3,985  & 3,508      \\
    $|\mathcal{E}|$ & 26,749 & 65,287 & 253,930  \\
    Density & 0.05 & 0.008 & 0.04  \\
    \midrule
    \# Males &  685 & 2307 & 1312 \\
    \# Females & 349 & 1678 & 2196 \\
    $h_{edge}$ & 0.58 & 0.55 & 0.51 \\
    \bottomrule
    \end{tabular}
    \caption{Dataset Statistics}
    \label{tab:data-stats}
\end{table}

\new{\paragraph{Number of male and female nodes} Table \ref{tab:data-stats} shows a breakdown of the number of nodes, edges, graph density and number of males and females in each of the studied datasets. The Facebook and UNC28 datasets have a majority of males whereas the Google+ dataset has a majority of females. Interestingly, the Google+ dataset has the highest levels of disparity in group social capital (see Table \ref{tab:expfairmetrics}) which means that, although there are fewer males in the network, they are better positioned than females with respect to their access to information and thus social capital.}

\new{\paragraph{Edge homophily} We also report the edge homophily~\citep{zhu2020homophily} for the three datasets. Edge homophily, denoted as $h_{edge} \in [0,1]$, represents the percentage of edges connecting nodes with the same protected attribute. The larger the group homophily, the more internally connected the groups are, with fewer connections between different groups. The obtained values of edge homophily suggest that the groups are well interconnected, and thus, there is no significant community polarization between males and females. In consequence, the differences between groups in access and control of information flow, as shown in Table~\ref{tab:expfairmetrics}, stem from disparities in roles and network positions, rather than from a polarized social structure.}

\new{\paragraph{Degree Distribution} Figure \ref{fig:data-degree-dist} shows the degree distribution for the three different datasets grouped by protected attribute. As seen in the Figure, the number of high degree nodes is larger for males than females in the Facebook and Google+ datasets. Note that although there are fewer males than females in the Google+ dataset, the majority of nodes with high degree correspond to males, which explains their larger group social capital.}

\new{\paragraph{Centrality Metrics} Figure \ref{fig:data-centrality-metrics} depicts the per-group closeness and betweenness centrality metrics~\citep{barabasi2013network} for the three datasets, providing an insight into the structural properties of the networks. However these classical metrics are based on geodesic distances and are therefore limited in their ability to capture the entire network topology. For example, the geodesic distance is a poor estimator of long-range connections. In contrast, the proposed metrics, which are built on random walks and consider the entire network, offer a more complete view of information flow within the network (see \ref{tab:expfairmetrics}). Nevertheless, even with classical metrics, we observe notable differences between groups, particularly in the Google+ dataset, where males exhibit larger closeness centrality despite being a minority.}

\section{Additional Experiments}

All the experiments were done using a workstation with 16GB RAM; and processor 11th Gen Intel(R) Core(TM) i7, 3.00GHz, 2995 Mhz, 4 Core(s), 8 Logical Processor(s).

\subsection{Group Social Capital Metrics}
\label{app:exp:gorupmetrics}

\begin{table*}[t]
\centering
\begin{subtable}[ht]{0.32\textwidth}
\centering
\caption{Facebook (50) Female}
\begin{small}
\begin{tabular}{lrrr}
\toprule
 & $R_{tot} \downarrow$ & $\mathcal{R}_{diam} \downarrow$ & $\mathsf{B_R} $ \\
$G$  & 221.4 & 2.29 & 1.927 \\
\midrule
Random & 211.5 & 2.26 & 1.927 \\
SDRF & 220.7 & 2.28 & 1.928 \\
FOSR & 202.1 & 2.15 & 1.926 \\
DW & 199.4 & 1.87 & 1.929 \\
Cos & 190.4 & 1.64 & 1.918 \\
ERG & \textbf{138.7} & \textbf{0.43} & \textbf{1.933} \\
\bottomrule
\end{tabular}
\end{small}
\end{subtable}
\hfill
\begin{subtable}[ht]{0.32\textwidth}
\centering
\caption{UNC28 (5,000) Female}
\begin{small}
\begin{tabular}{lrrr}
\toprule
 & $R_{tot} \downarrow$ & $\mathcal{R}_{diam} \downarrow$ & $\mathsf{B_R}$ \\
$G$  & 608.6 & 2.11 & 1.994 \\
\midrule
Random & 435.0 & 1.23 & 1.992 \\
SDRF & 599.1 & 2.11 & 1.996 \\
FOSR & 509.9 & 1.33 & 1.990 \\
DW & 583.2 & 1.81 & 1.997 \\
Cos & 429.6 & 0.42 & 1.940 \\
ERG & \textbf{316.8} & \textbf{0.10} & \textbf{2.001} \\
\bottomrule
\end{tabular}
\end{small}
\end{subtable}
\hfill
\begin{subtable}[ht]{0.32\textwidth}
\centering
\caption{Google+ (5,000) Female}
\begin{small}
\begin{tabular}{lrrr}
\toprule
 & $R_{tot} \downarrow$ & $\mathcal{R}_{diam} \downarrow$ & $\mathsf{B_R}$ \\
$G$ & 564.1 & 1.31 & 1.807 \\
\midrule
Random & 305.1 & 1.07 & 1.824 \\
SDRF & 561.1 & 1.31 & 1.805 \\
FOSR & 498.9 & 1.25 & 1.811 \\
DW & 558.7 & 1.31 & 1.806 \\
Cos & 230.9 & 0.29 & 1.827 \\
ERG & \textbf{145.5} & \textbf{0.07} & \textbf{1.889} \\
\bottomrule
\end{tabular}
\end{small}
\end{subtable}
\vskip 0.1in
\begin{subtable}[ht]{0.32\textwidth}
\centering
\caption{Facebook (50) Male}
\begin{small}
\begin{tabular}{lrrr}
\toprule
 & $R_{tot} \downarrow$ & $\mathcal{R}_{diam} \downarrow$ & $\mathsf{B_R}$ \\
$G$ & 179.8 & 2.25 & 2.034 \\
\midrule
Random & 172.8 & 2.22 & 2.035 \\
SDRF & 179.1 & 2.24 & 2.034 \\
FOSR & 167.6 & 2.13 & 2.035\\
DW & 163.1 & 1.83 & 2.033 \\
Cos & 161.7 & 1.61 & 2.039 \\
ERG & \textbf{128.5} & \textbf{0.41} & \textbf{2.031} \\
\bottomrule
\end{tabular}
\end{small}
\end{subtable}
\hfill
\begin{subtable}[ht]{0.32\textwidth}
\centering
\caption{UNC28 (5,000) Male}
\begin{small}
\begin{tabular}{lrrr}
\toprule
 & $R_{tot} \downarrow$ & $\mathcal{R}_{diam} \downarrow$ & $\mathsf{B_R}$ \\
$G$ & 586.3 & 2.11 & 2.004 \\
\midrule
Random & 415.2 & 1.23 & 2.005 \\
SDRF & 576.9 & 2.10 & 2.002 \\
FOSR & 490.2 & 1.33 & 2.007 \\
DW & 561.0 & 1.81 & 2.001 \\
Cos & 410.6 & 0.41 & 2.043 \\
ERG & \textbf{308.0} & \textbf{0.09} & \textbf{1.998} \\
\bottomrule
\end{tabular}
\end{small}
\end{subtable}
\hfill
\begin{subtable}[ht]{0.3\textwidth}
\centering
\caption{Google+ (5,000) Male}
\begin{small}
\begin{tabular}{lrrr}
\toprule
 & $R_{tot} \downarrow$ & $\mathcal{R}_{diam} \downarrow$ & $\mathsf{B_R}$ \\
$G$ & 287.7 & 1.24 & 2.321 \\
\midrule
Random & 175.7 & 1.03 & 2.293 \\
SDRF & 285.2 & 1.23  & 2.325 \\
FOSR & 258.1 & 1.17 & 2.315 \\
DW & 284.6 & 1.24 & 2.323 \\
Cos & 144.1 & 0.27 & 2.288 \\
ERG & \textbf{108.5} & \textbf{0.06} & \textbf{2.185}\\
\bottomrule
\end{tabular}
\end{small}
\end{subtable}
\caption{Group Social Capital after Edge Augmentation. The best values are highlighted in bold. Note how edge augmentation via \ERG{}{} is able to not only increase the social capital of the disadvantaged group (females) but also of the rest of the groups (males).}
\label{tab:groupimprovement}
\end{table*}

\begin{figure*}[ht]
    \centering
    \includegraphics[width=\linewidth]{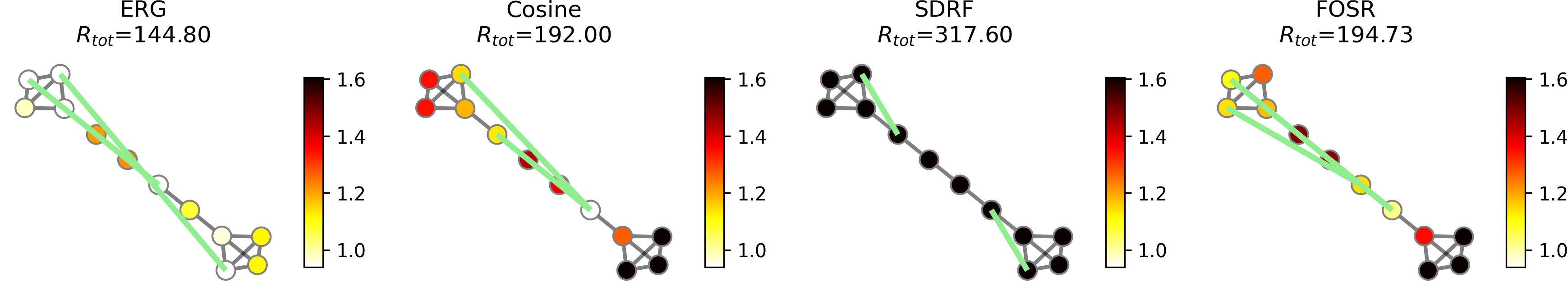}
    \caption{Example of the added links by the different algorithms on a Barbell synthetic graph. Nodes are colored by $\Rg(u)$. More examples in Fig~\ref{fig:rew2}.}
    \label{fig:rew}
\end{figure*}

The experiments presented in the main paper illustrate how edge augmentation via \ERG{} is able to (1) significantly mitigate the structural group unfairness (Table~\ref{tab:expfairmetrics}); and (2) increase the social capital for all the groups in the graph (Figure~\ref{fig:paretomain}). Table~\ref{tab:groupimprovement} depicts the group social capital for each group after each intervention. Note that the optimal value of $\Bt$ is to get every group's control equal to $2-2/|\mathcal{V}|$. %

\begin{figure*}[ht]
    \centering
     \begin{subfigure}{0.24\textwidth}
    \centering
    \includegraphics[width=\linewidth]{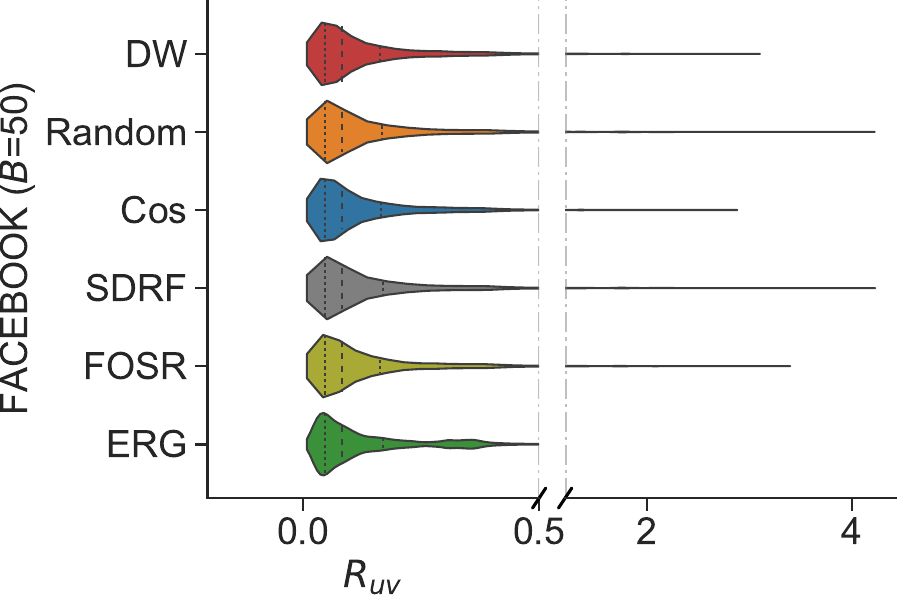}
    \end{subfigure}
    \hfill
    \centering
     \begin{subfigure}{0.24\textwidth}
    \centering
    \includegraphics[width=\linewidth]{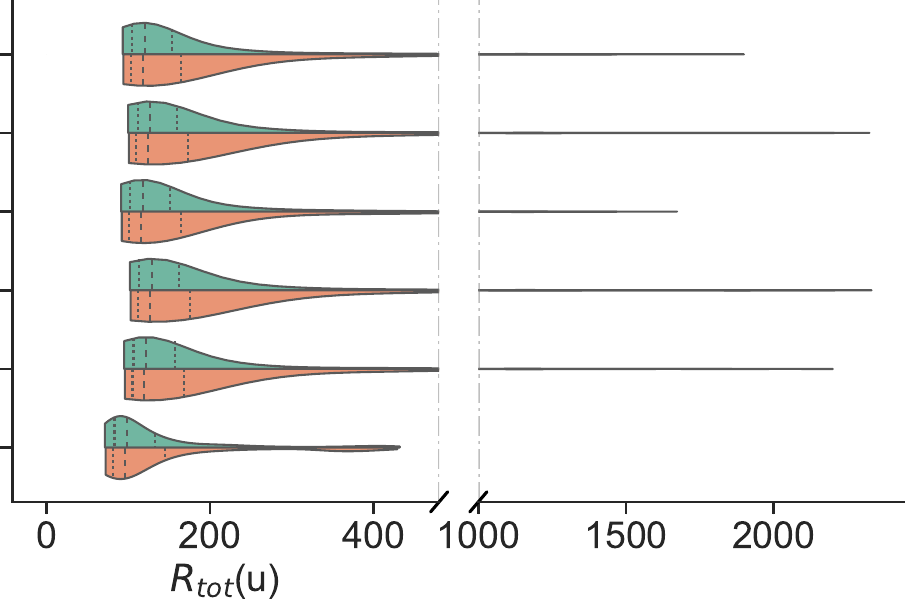}
    \end{subfigure}
    \hfill
    \centering
     \begin{subfigure}{0.24\textwidth}
    \centering
    \includegraphics[width=\linewidth]{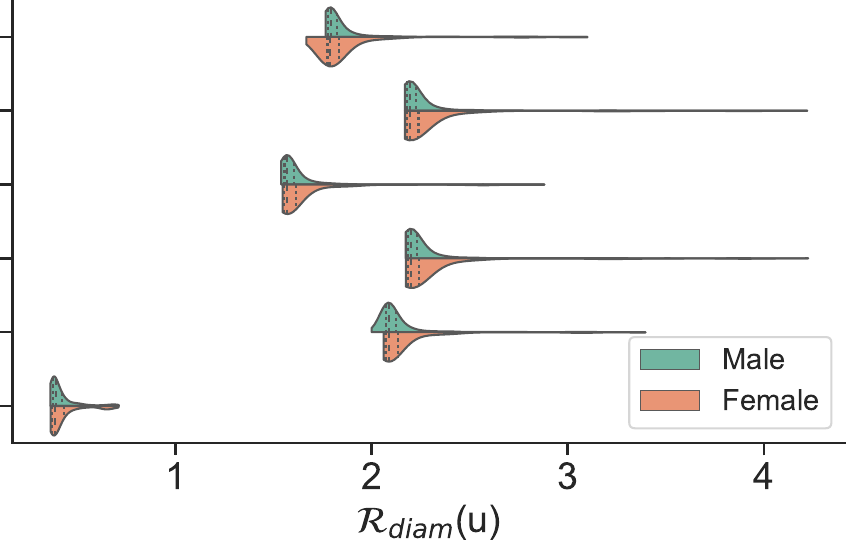}
    \end{subfigure}
    \hfill
    \centering
     \begin{subfigure}{0.24\textwidth}
    \centering
    \includegraphics[width=\linewidth]{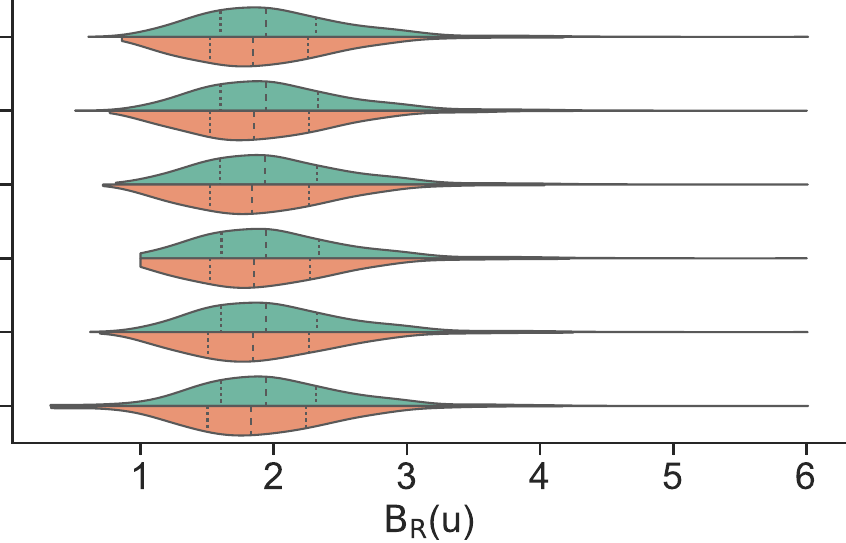}
    \end{subfigure}

     \begin{subfigure}{0.24\textwidth}
    \centering
    \includegraphics[width=\linewidth]{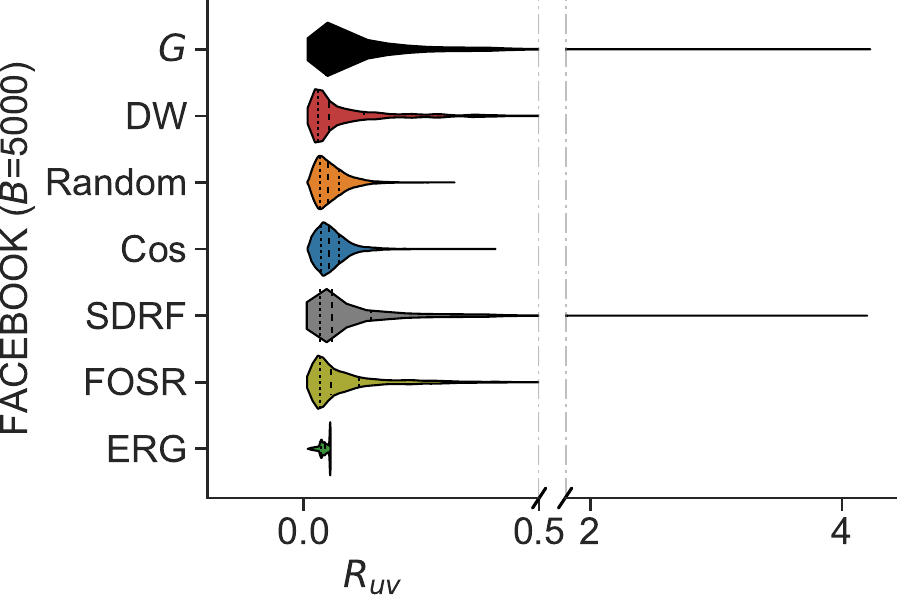}
    \end{subfigure}
    \hfill
    \centering
     \begin{subfigure}{0.24\textwidth}
    \centering
    \includegraphics[width=\linewidth]{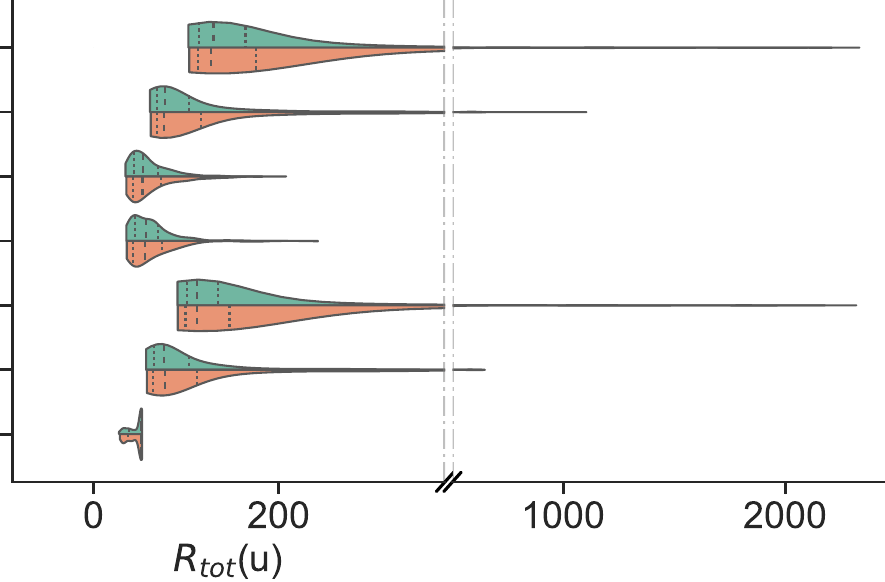}
    \end{subfigure}
    \hfill
    \centering
     \begin{subfigure}{0.24\textwidth}
    \centering
    \includegraphics[width=\linewidth]{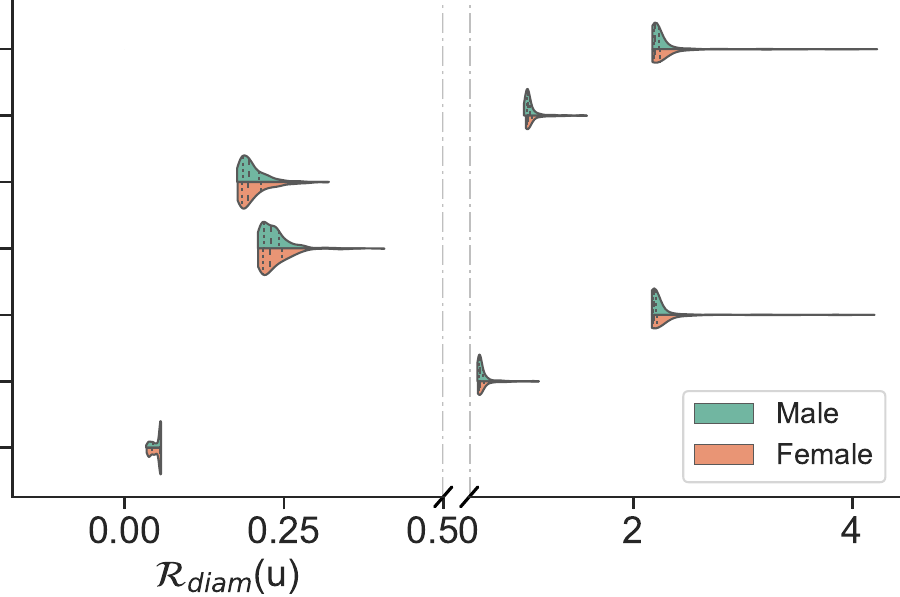}
    \end{subfigure}
    \hfill
    \centering
     \begin{subfigure}{0.24\textwidth}
    \centering
    \includegraphics[width=\linewidth]{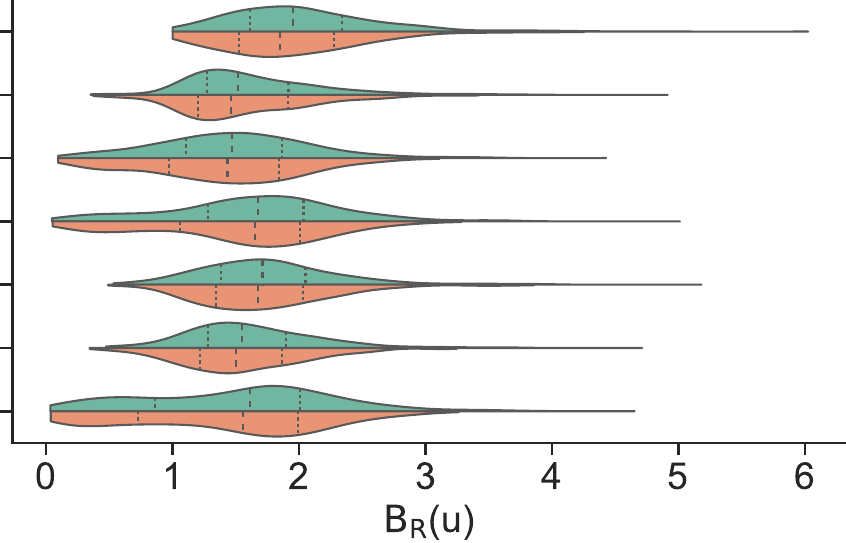}
    \end{subfigure}
    \hfill
    \caption{Distribution of $R_{uv}$ and proposed social capital metrics on the Facebook dataset and after different graph interventions, 50 links in the top row and 5,000 in the bottom row. Columns show (left) distribution of all $R_uv$ distances, (center-left) distribution of all $\Rg(u)$, (center-right) distribution of all $\Rd(u)$, (right) distribution of all $\Bt(u)$. The distributions for the node metrics are shown for the two groups according to the protected attribute (gender).}
    \label{fig:violinfb5000}
\end{figure*}

\subsection{Distribution of Social Capital by Group}

For completeness, Figure~\ref{fig:violinfb5000} illustrates the distributions of all effective resistances $R_{uv}$ and the node's social capital metrics --$\Rg(u)$, $\Rd(u)$ and $\Bt(u)$-- in the graph before and after the edge augmentation interventions for each of the groups.

The plots correspond to two different edge augmentation experiments on the Facebook dataset, with budgets of B=50 and B=5,000 new edges. Edge augmentation via \ERG{} is able to drastically reduce all effective resistances of the graph, unlike the other methods and even for the small budget.
Note how there is still a long tail in the distribution of effective resistances after edge augmentation with the baseline methods, which illustrates that there are still nodes that struggle to exchange information even after the intervention.

Regarding the node social capital metrics, edge augmentation via \ERG{} reduces all $\Rg(u)$ and $\Rd(u)$, which explains why $\Rg(S_i)$ and $\Rg(S_i)$ is improved for all groups in the graph, and particularly for the disadvantaged group. Thus, $\Delta\Rg$ and $\Delta\Rd$ are significantly reduced. %

\subsection{Evolution of Group Social Capital During the Interventions}
\label{app:evolutionfigs}
We provide additional results regarding the evolution of group social capital metrics during the graph's intervention, as initially shown in Figure~\ref{fig:googleresevolution}. The goal is to analyze the effectiveness of the edge augmentation methods on both small ($B = 50$ edges) and large ($B= 5,000$ edges) budget scenarios. Figures~\ref{fig:fb50evolution}and~\ref{fig:fb5000evolution} show the evolution of both the group social capital metrics for each group and the structural group unfairness metrics on the Facebook dataset.

We observe in Figure~\ref{fig:fb50evolution} how edge augmentation via \ERG{} is able to significantly improve the group social capital for all groups and reduce all structural unfairness metrics. In contrast, the baselines fail to do so. %

\begin{figure*}[ht]
    \begin{subfigure}{0.36\textwidth}
    \centering
    \includegraphics[width=\linewidth]{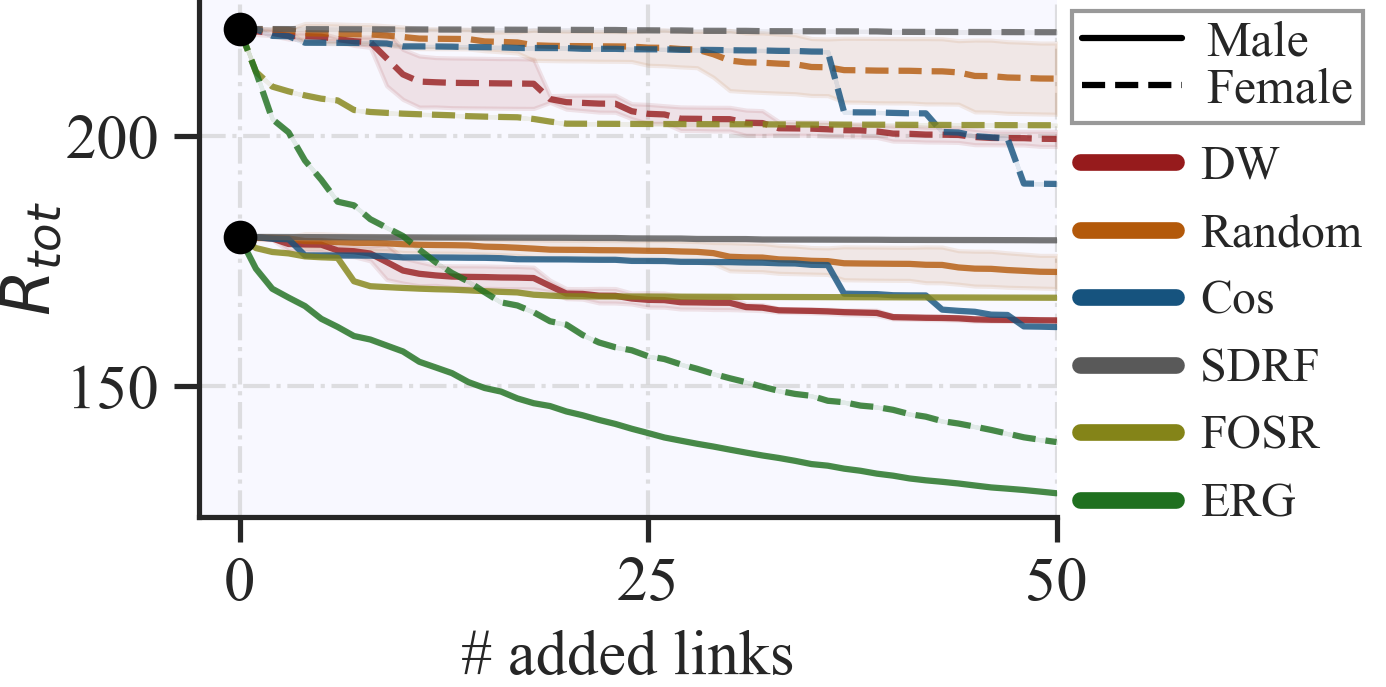}
    \end{subfigure}
    \hspace{.1in}
    \begin{subfigure}{0.27\textwidth}
    \centering
    \includegraphics[width=\linewidth]{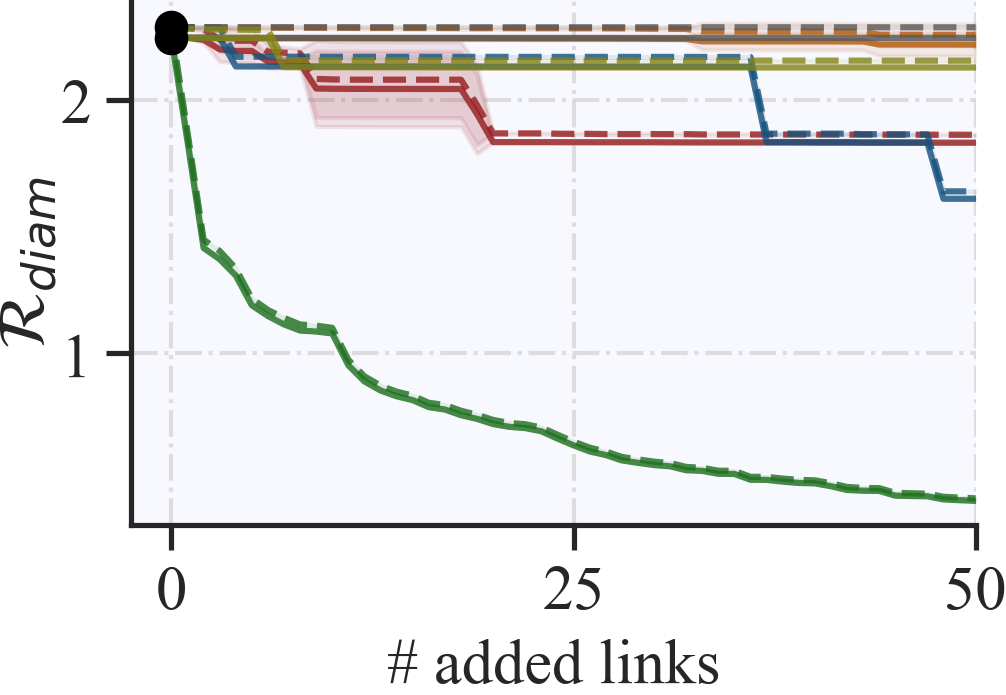}
    \end{subfigure}
    \hfill
    \begin{subfigure}{0.27\textwidth}
    \centering
    \includegraphics[width=\linewidth]{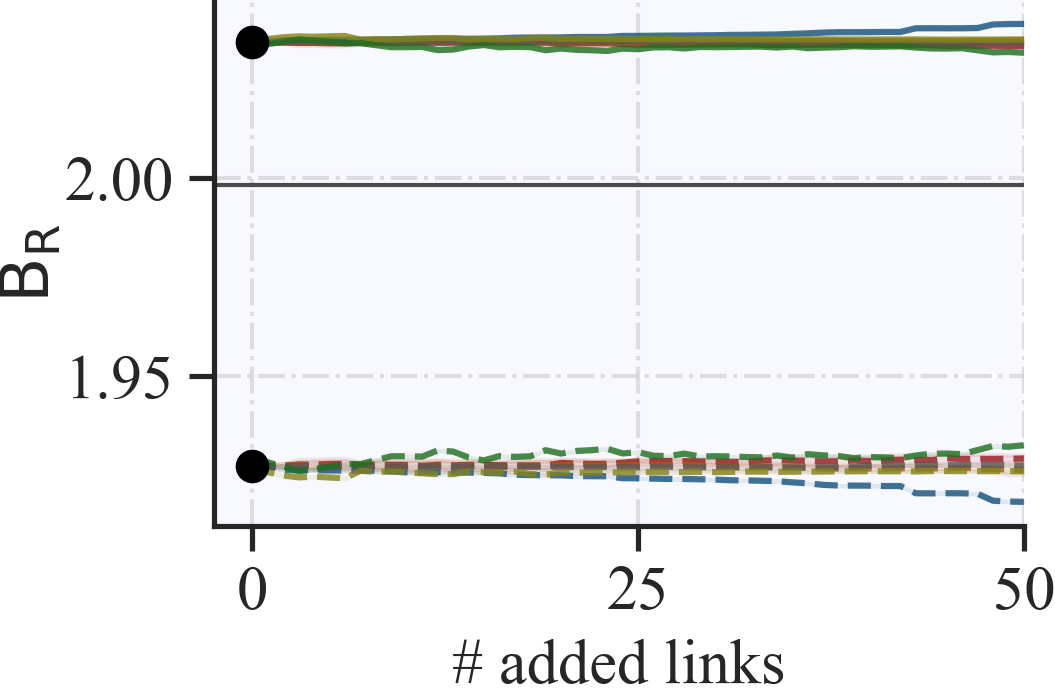}
    \end{subfigure}
    \begin{subfigure}{0.36\textwidth}
    \centering
    \includegraphics[width=\linewidth]{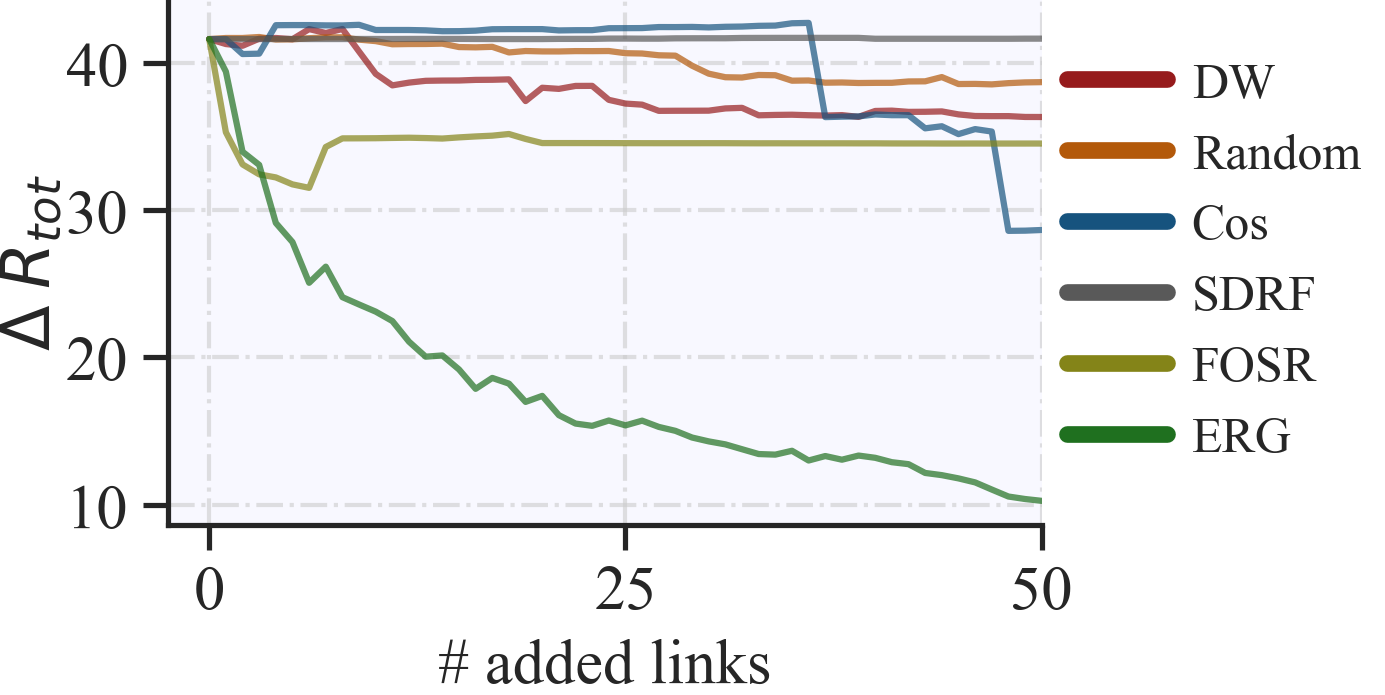}
    \end{subfigure}
    \hspace{.1in}
    \begin{subfigure}{0.27\textwidth}
    \centering
    \includegraphics[width=\linewidth]{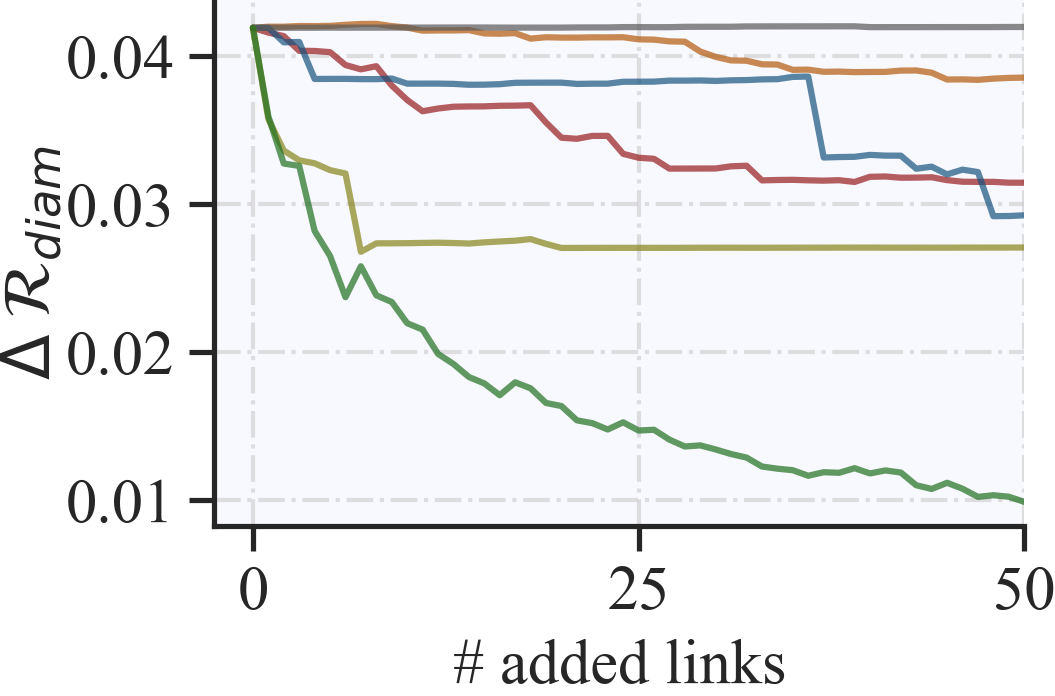}
    \end{subfigure}
    \hfill
    \begin{subfigure}{0.27\textwidth}
    \centering
    \includegraphics[width=\linewidth]{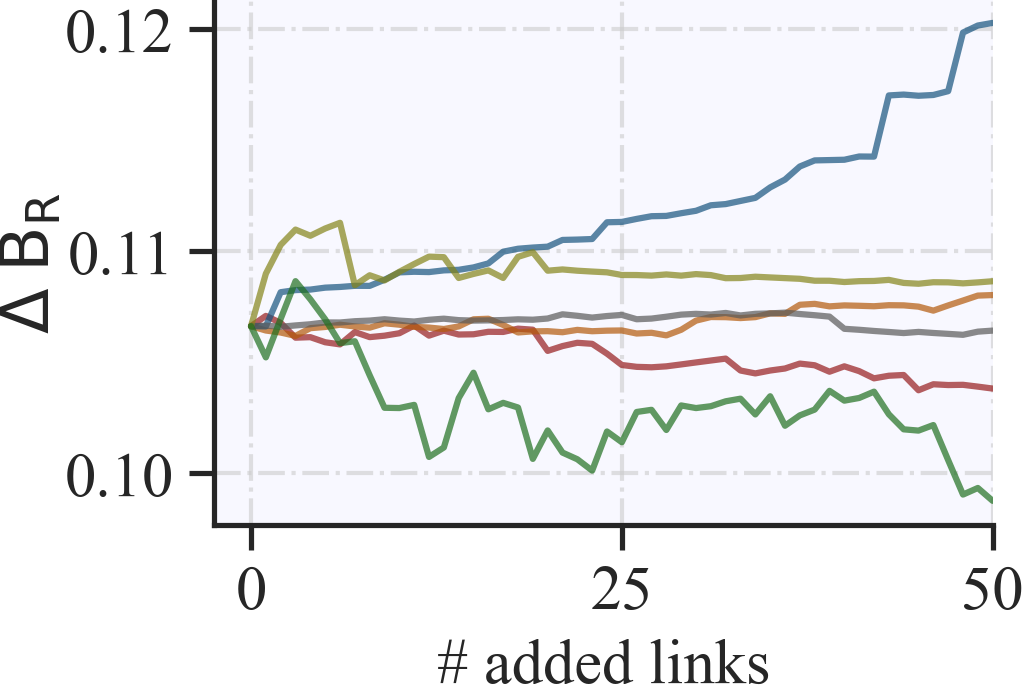}
    \end{subfigure}
    \caption{Evolution of group social capital and disparity metrics for both groups as the number of added edges increases on the Facebook dataset with a budget of 50 new edges.}
    \label{fig:fb50evolution}
\end{figure*}

Figure~\ref{fig:fb5000evolution} depicts the evolution of group social capital and disparity metrics when adding 5,000 edges to the Facebook dataset. The more edges we add to a graph, the denser the graph becomes and therefore the better the information flow. Thus, one can expect that after a large number of added edges, all methods behave similarly. 

However, we observe some differences. First,  the convergence to minimal isolation and diameter disparities is significantly faster when adding edges via \ERG{} than any other method. Second, the decrease in $\Rg(S_i)$ is very significant for both groups (males and females) even after just adding a small number of edges by means of \ERG{}. Third, regarding group control and control disparity, edge augmentation via \ERG{} systematically reduces $\Delta\Bt$ while converging each group control to the optimal $\Bt(S_i)=2-2/|\mathcal{V}|$.

We also show in Figure~\ref{fig:unc1000evolution} the evolution of $\Bt(S_i)$ and $\Delta\Bt$ on the UNC28 dataset with 1,000 edge additions along with the distribution of node's control $\Bt(u)$ after the intervention to showcase an scenario where \ERG{} is able to reach the optimal control disparity in the graph, allocating the same amount of control for each group, $\Bt(S_i)\approx 2-2/|\mathcal{V}| \:\forall\: i\in SA\rightarrow \Delta\Bt\approx 0$.

Last but not least, the different baselines do not reach a better behavior than the random method, neither on the group social capital metrics nor in terms of structural unfairness. After 5,000 edge additions, the random method significantly reduces the unfairness metrics $\Delta\Rg$ and $\Delta\Rd$ while also achieving decent rated of $\Rg(S_i)$ and $\Rd(S_i)$ since the budget is high enough to improve the information flow with no strategy. However, all baselines struggle to optimize $\Delta\Bt(S_i)$ and $\Bt(S_i)$. None of them is able to improve, and the cosine similarity approach even increased the Control Disparity.

\begin{figure*}[ht]
    \begin{subfigure}{0.36\textwidth}
    \centering
    \includegraphics[width=\linewidth]{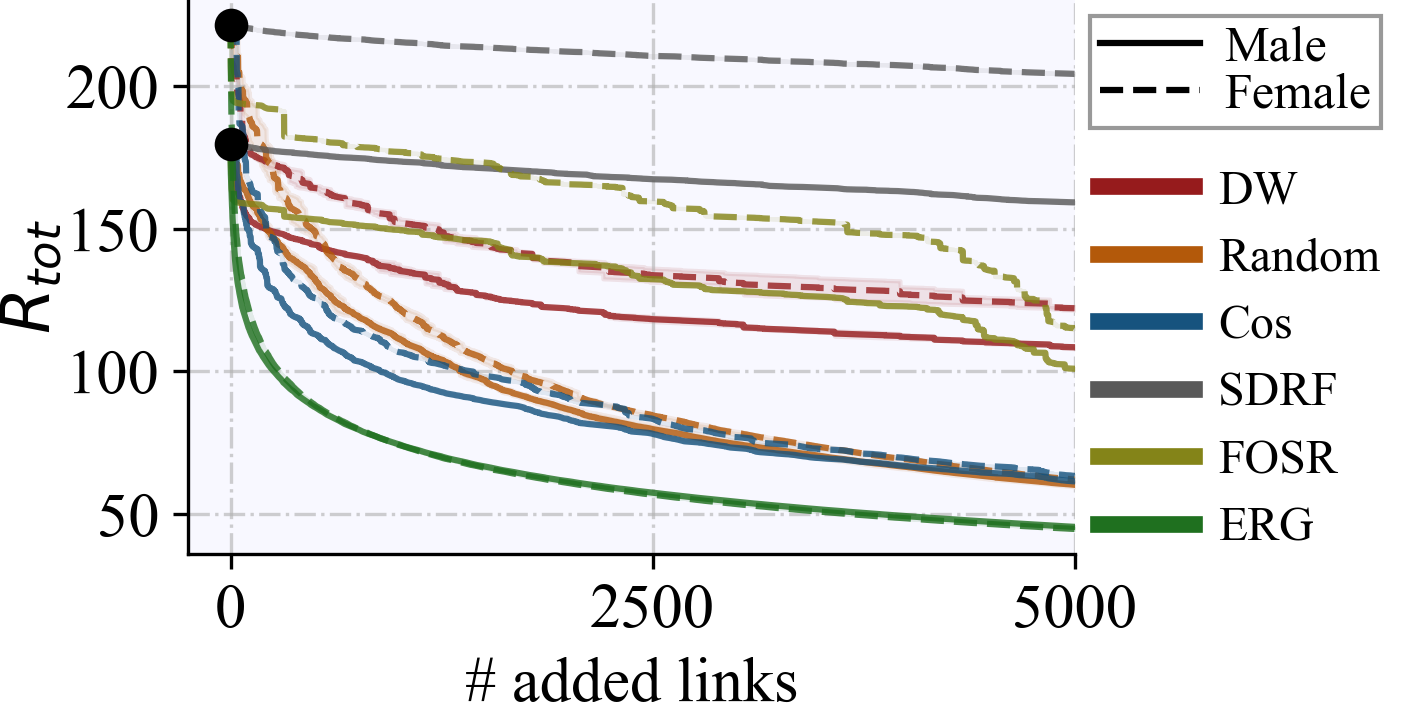}
    \end{subfigure}
    \hspace{.2in}
    \begin{subfigure}{0.27\textwidth}
    \centering
    \includegraphics[width=\linewidth]{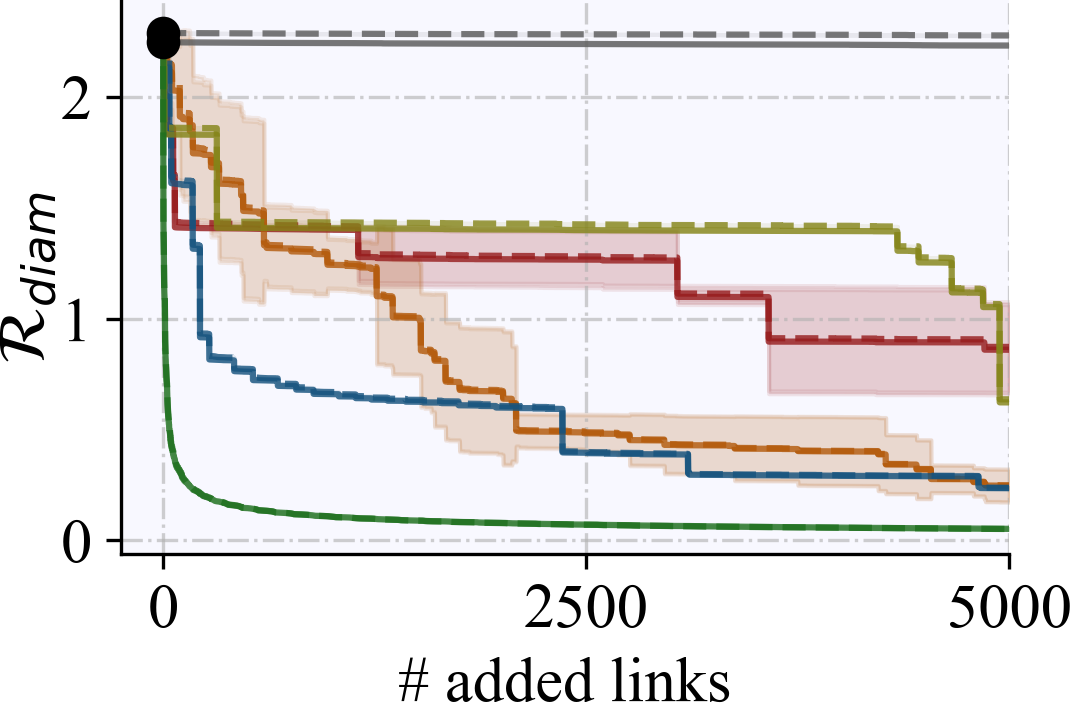}
    \end{subfigure}
    \hfill
    \begin{subfigure}{0.27\textwidth}
    \centering
    \includegraphics[width=\linewidth]{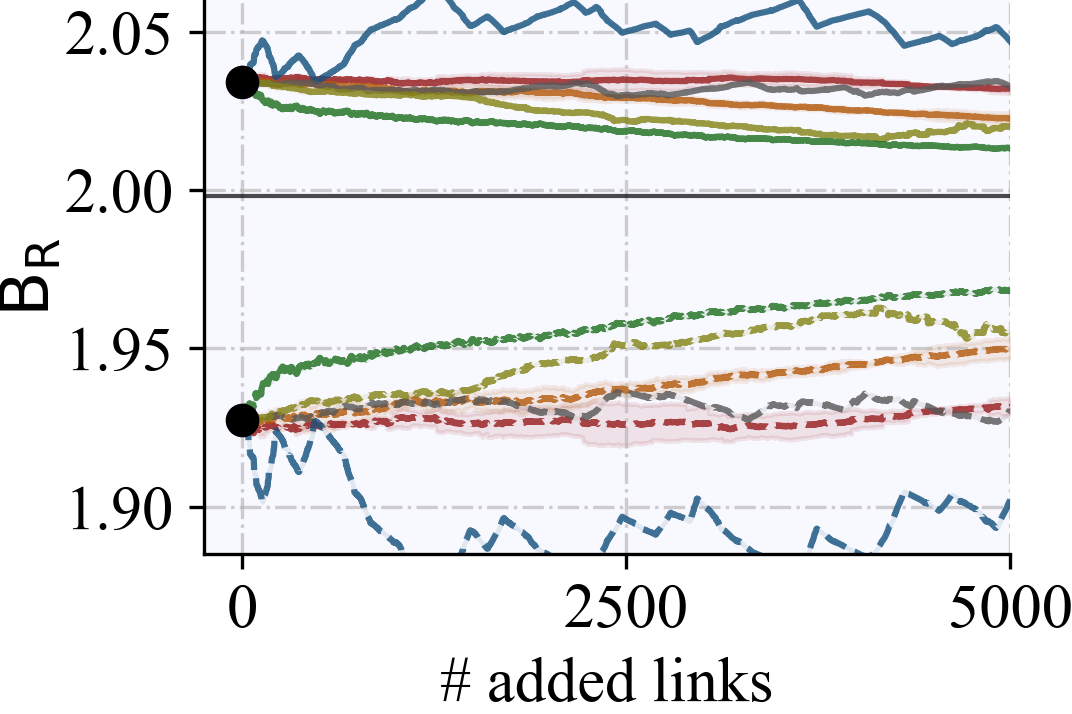}
    \end{subfigure}
    \begin{subfigure}{0.36\textwidth}
    \centering
    \includegraphics[width=\linewidth]{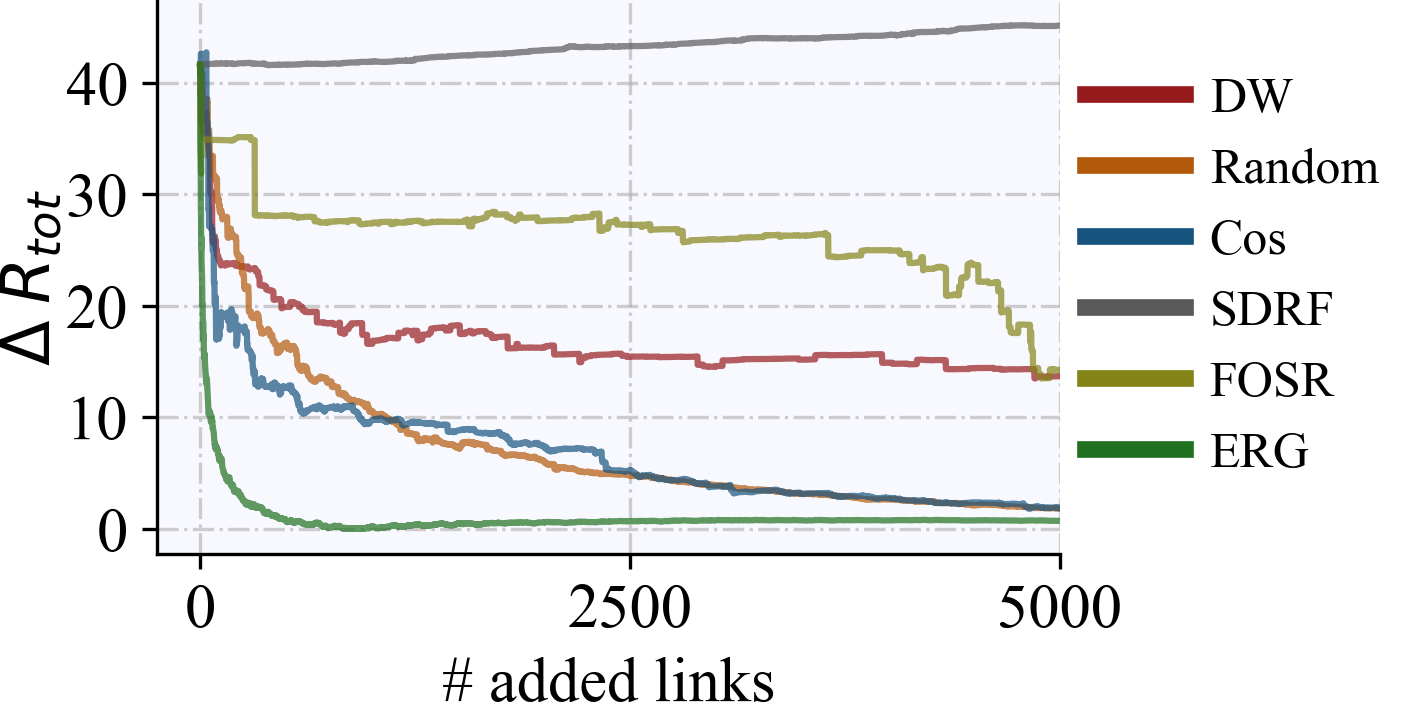}
    \end{subfigure}
    \hspace{.2in}
    \begin{subfigure}{0.27\textwidth}
    \centering
    \includegraphics[width=\linewidth]{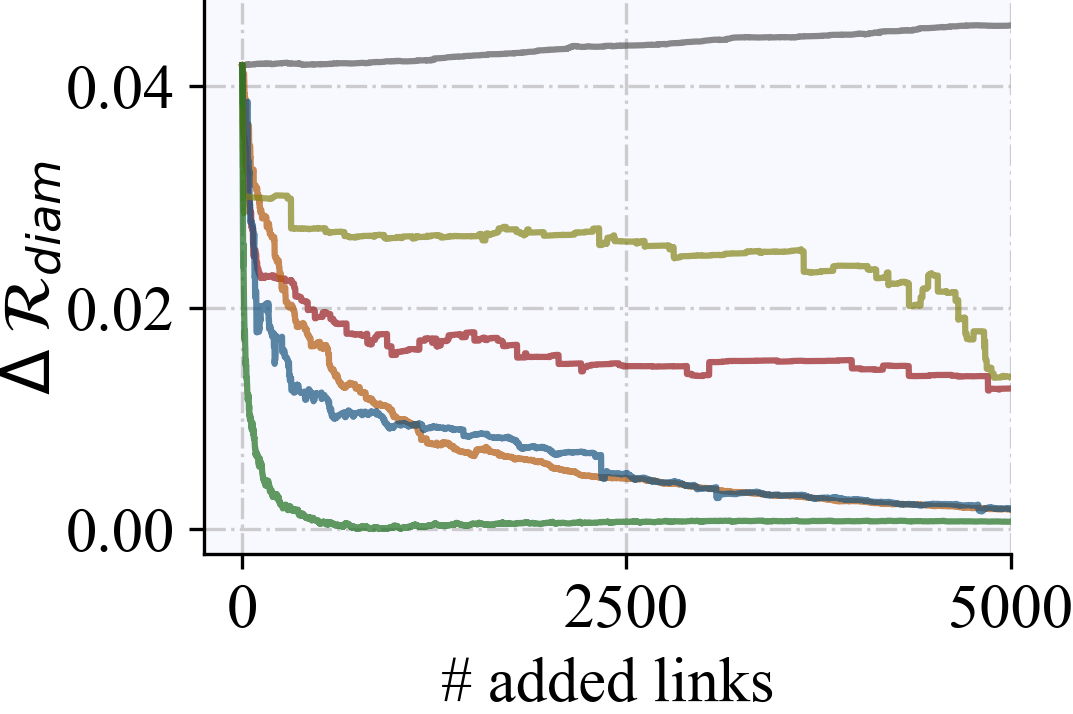}
    \end{subfigure}
    \hfill
    \begin{subfigure}{0.27\textwidth}
    \centering
    \includegraphics[width=\linewidth]{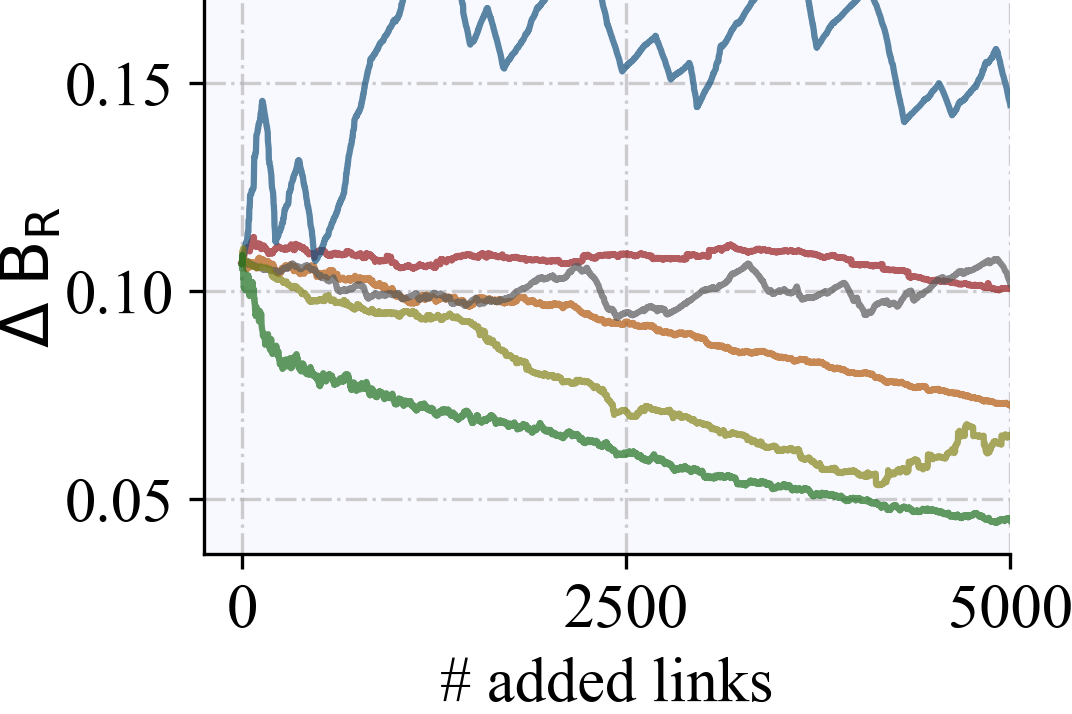}
    \end{subfigure}
    \caption{Evolution of group social capital metrics for both groups and fairness metrics as the number of added links increases, on Facebook dataset after adding 5,000 edges. Edge augmentation via \ERG{} exhibits a faster rate of convergence to the optimal scenario than the baselines.}
    \label{fig:fb5000evolution}
\end{figure*}

\begin{figure*}[ht]
    \begin{subfigure}{0.37\textwidth}
    \centering
    \includegraphics[width=\linewidth]{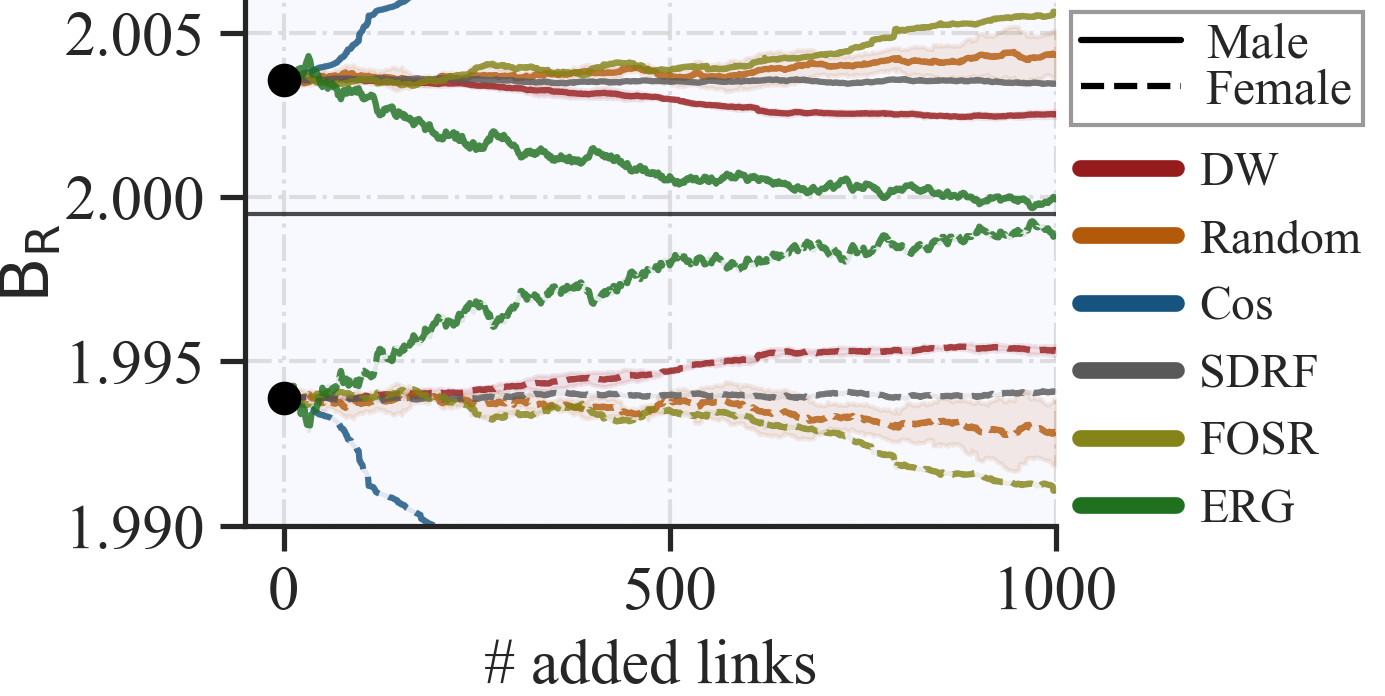}
    \end{subfigure}
    \hfill
    \begin{subfigure}{0.28\textwidth}
    \centering
    \includegraphics[width=\linewidth]{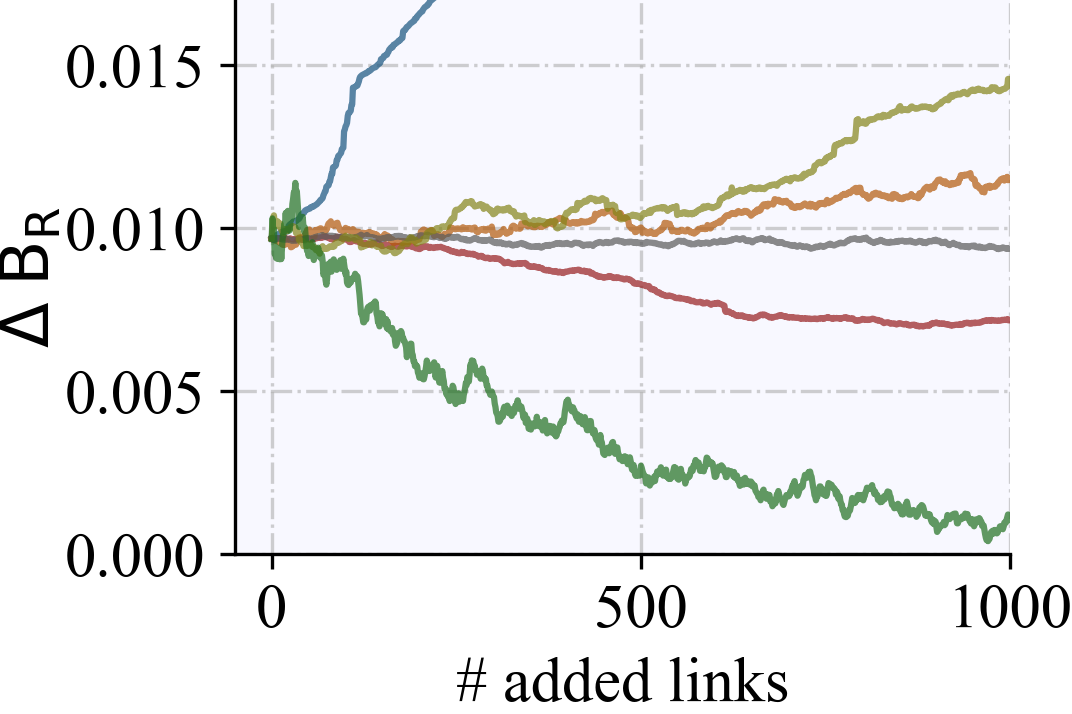}
    \end{subfigure}
    \hfill
    \begin{subfigure}{0.32\textwidth}
    \centering
    \includegraphics[width=\linewidth]{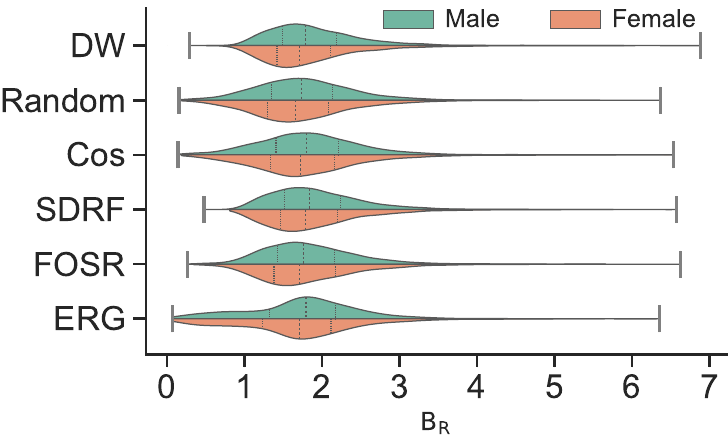}
    \end{subfigure}
    \caption{Behavior of node, group control and control disparity ($\Bt(u), \Bt(S_i), \Delta\Bt$). Illustration of the evolution of the group control (left, $\Bt(S_i)$) and control disparity (middle, $\Delta\Bt$) through an edge augmentation with $B = 1,000$ edges on the UNC dataset. Right-most figure: distribution of the nodes' control ($\Bt(u)$) for each group. Note how edge augmentation via \ERG{} yields a control for all groups approaching their optimal value of $2-2/|\mathcal{V}|$ by reducing the control of the privileged group (males) while increasing the control of the disadvantaged group (females).}
    \label{fig:unc1000evolution}
\end{figure*}

\subsection{Additional Edge Augmentation Examples}
\label{app:sec:rewiredgraphs}

\begin{figure*}[ht]
\centering
    \begin{subfigure}{\textwidth}
    \centering
    \includegraphics[width=\linewidth]{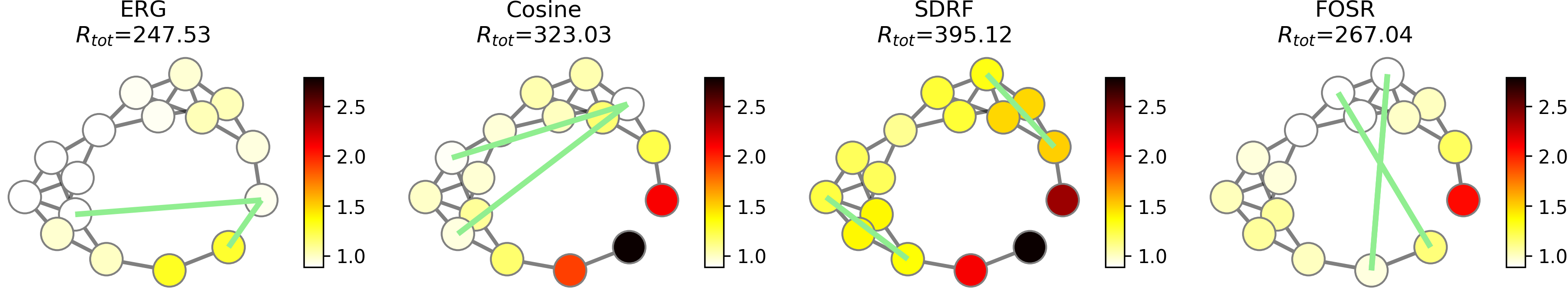}
    \caption{Connected Barbell Graph}
    \end{subfigure}
    
    \begin{subfigure}{\textwidth}
    \centering
    \includegraphics[width=\linewidth]{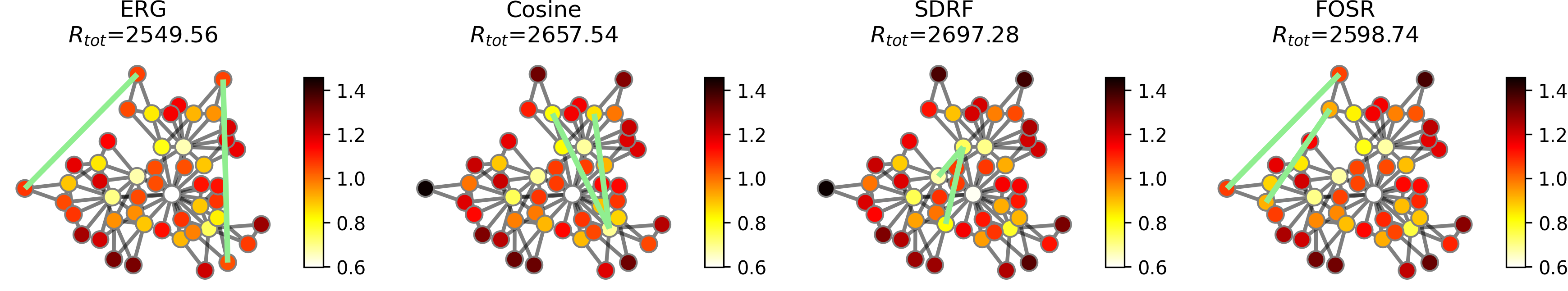}
    \caption{PCG(n=50, m=2, p=0.95)}
    \end{subfigure}
    
    \begin{subfigure}{\textwidth}
    \centering
    \includegraphics[width=\linewidth]{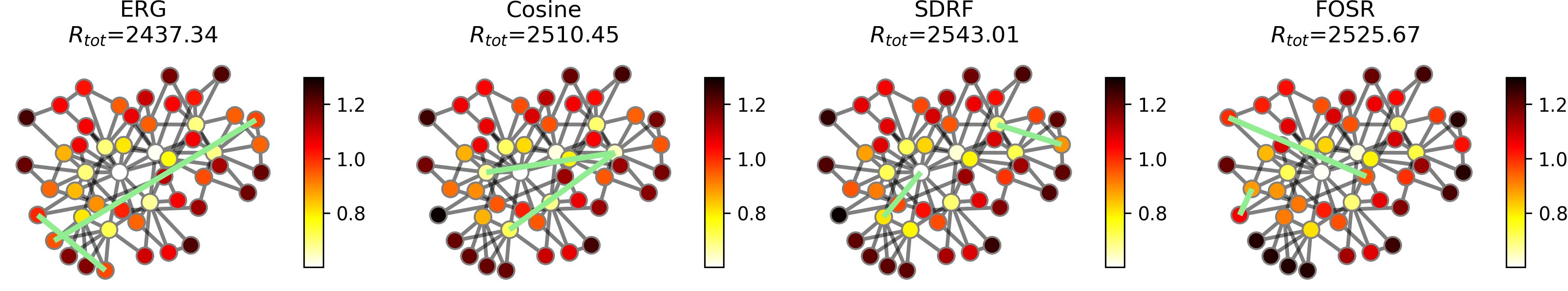}
    \caption{PCG(n=50, m=2, p=0.8)}
    \end{subfigure}

    \begin{subfigure}{\textwidth}
    \centering
    \includegraphics[width=\linewidth]{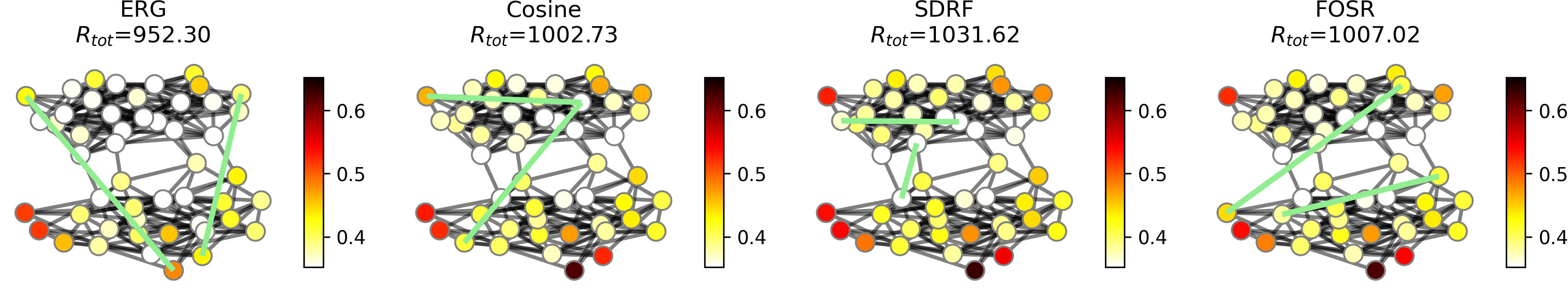}
    \caption{SBM(intra=0.5, inter=0.1)}
    \end{subfigure}
    \caption{Examples of the added links by the \texttt{ERG-Link}, Cosine, SDRF and FOSR algorithms on synthetic graphs. Nodes are colored by $\Rg(u)$. Edge augmentation via \texttt{ERG-Link} consistently yields the largest improvement in information flow in the graph and hence the smallest values of $R_{tot}$.}
    \label{fig:rew2}
\end{figure*}

For completeness, we qualitatively illustrate on Figures~\ref{fig:rew} and~\ref{fig:rew2} the behavior of four edge augmentation methods on five synthetic graphs. The Figures depict the synthetic graphs and the first two edges added by each algorithm. They also show quantitatively how the methods are able to improve the overall flow of information in the graph, defined as $\Rg=\sum_{(u,v)\in \mathcal{E} R_{uv}}$. This analysis not only provides insight into the performance of the edge augmentation methods, but also demonstrates their adaptability to different graph topologies.

The Figures correspond to graphs from five synthetic datasets, each designed to represent different graph characteristics: (1) a Barbell graph with Barbell size of 4 and path size of 5; (2) a power-law clustered graph with 50 nodes, 2 random new edges per each new node, and a 0.95 probability of closing a triangle; (3) the same graph, but with a 0.8 probability of closing a triangle, meaning that the bottleneck and clustered structures are weaker; (4) an SBM model with intra-cluster probability of 0.4 and inter-cluster probability of 0.01; and (5) a hand-crafted path graph with communities in the middle to simulate a different type of bottleneck. The Barbell graph is shown in Figure~\ref{fig:rew} and the rest of the graphs in Figure~\ref{fig:rew2}.

The Figures illustrate the results for all algorithms described in Section~\ref{sec:experiments} except for DeepWalk because its behavior closely resembles randomness. This similarity arises from the impracticality of performing extensive training after each edge addition, given the computational limitations of the proposed task. Performing extensive training after each edge addition would be computationally infeasible within the scope of our study.

Observe how \ERG{} always reduces $\Rg$ the most by adding the edges with the maximum distance. Other approaches, such as FOSR, do not exhibit a consistent performance across different graph topologies. In fact, graphs with complex structures and/or less defined bottlenecks exhibit a higher degree of difficulty for this method. Edge augmentation via Cosine similarity suffers from the same disadvantage. SDRF adds edges around the graph's bottleneck to avoid over-squeezing, but this strategy does not necessarily translate into an improvement of the graph's information flow.

\section{Computation Time}
\label{app:computation}

Table~\ref{tab:computationtime} depicts the number of edge additions per second that each algorithm is able to perform. The larger the number, the faster the method. For instance, in the conducted experiment on the Facebook dataset with $B = 5,000$ edges, edge augmentation via \ERG{} required approximately 3 minutes and 30 seconds; via Cosine Similarity (Cos), approximately 6 minutes; and by means of DeepWalk (DW), over 3 hours. In the case of the Google+ dataset, edge augmentation via \ERG{} required 36 minutes; via Cosine Similarity (Cos), 72 minutes; and by means of DeepWalk (DW), over 17 hours.

\begin{table}[ht]
    \centering
    \begin{tabular}{r|ccc}
    \toprule
    Dataset          & Facebook & Google+    & UNC28 \\
    $|V|$            & 1,034    & 3,508     & 3,985 \\
    $|\mathcal{E}|$ & 26,749 & 253,930 & 65,287 \\
    Density & 0.05 & 0.04 & 0.008 \\
    \midrule
     SDRF            & 1.00 & 0.25  & 0.46\\
     FOSR            & 166.6 & 125.0 & 83.3 \\
     DeepWalk        & 0.39 &  0.08 & 0.07\\
     Cosine          & 13.9 &  1.15 & 1.39\\
     ERG             & 23.8 &  2.31 & 1.66 \\
    \bottomrule
    \end{tabular}
    \caption{Added edges per second on the three datasets}
    \label{tab:computationtime}
\end{table}

\begin{table*}[!ht]
\centering
\begin{tabularx}{\textwidth}{p{0.1\textwidth}Xp{0.3\textwidth}}%
\toprule
\textbf{Symbol} & \textbf{Description} & \textbf{Definition}\\ 
\midrule
$G=\mathcal{(V,E)}$     & Graph = (Nodes, Edges) \\
$n=|\mathcal{V}|$                 & Number of nodes \\
$\mathbf{A}$            & Adjacency matrix: $\mathbf{A} \in \mathbb{R}^{n\times n}$ & $A_{u,v}=1$ if $(u,v)\in \mathcal{E}$ and $0$ otherwise\\
$v$ or $u$              & Node $v \in \mathcal{V}$ or $u \in\mathcal{V}$\\
$e$                     & Edge $e \in \mathcal{E}$ \\
$d_v$                   & Degree of node, i.e., number of neighbors $v$ & $d_v = \sum_{u\in V} A_{v,u}$\\
$\mathbf{D}$            & Degree diagonal matrix where $d_v$ in $D_{vv}$ & $\mathbf{D} = \text{diag}(d_0,\ldots, d_{\mathcal{V}})$ \\
$\text{vol}(G)$         & Sum of the degrees of the graph & $\text{vol}(G) = \sum_{u\in V} d_u = 2|\mathcal{E}|= \text{Tr}[\textbf{D}]$ \\
$\mathcal{N}(u)$        & Neighbors of $u$ & $\mathcal{N}(u) = \{v: (u,v)\in \mathcal{E}\}$\\
$S_i$                     & Subset of nodes & $S\subseteq V$\\
\midrule
$SA$ & Set of sensitive attributes &  $SA=\{sa_1, sa_2, \dots, sa_{|SA|}\}$\\ 
$SA(v)$ & Value of the sensitive attribute of the node $v$\\
$sa_i$ & Specific value of a sensitive attribute, e.g., $sa_i$=\textit{female}. & \\
$S_i$ & Set of nodes defined by their sensitive attribute & $S_i =  \{v \in V | SA(v) = sa_i \}$ \\
$S_d$ & Set of nodes defined by their sensitive attribute with the highest level of isolation $\Rg(S_i)$ & \\
\midrule
$\mathbf{L}$            & Graph Laplacian & $\mathbf{L=D-A=\Phi \Lambda \Phi}^\top$ \\
$\mathbf{\Lambda}$               & Eigenvalue matrix of $\mathbf{L}$ \\
$\Phi$                  & Matrix of eigenvectors of $\mathbf{L}$\\
$\lambda_i$             & The $i$-th smallest eigenvalue of $\mathbf{L}$ \\
$\mathbf{f}_i$          & Eigenvector associated with the $i$-th smallest eigenvalue of $\mathbf{L}$\\
$\mathbf{L}^+$          & The pseudo-inverse of $\mathbf{L}$ & $\mathbf{L}^+= \sum_{i>1}  \lambda_i^{-1}\phi \phi^\top$\\
$h_G$                   & Cheeger constant & Eq.~\ref{eq:cheeger}\\
\midrule
$\mathbf{e}_u$          & Unit vector with unit value at $u$ and 0 elsewhere \\
$R_{uv}$                & Effective resistance between nodes $u$ and $v$ & $R_{uv} = (\mathbf{e}_u-\mathbf{e}_v)\mathbf{L}^+(\mathbf{e}_u-\mathbf{e}_v)$\\
$\mathbf{R}$ & Effective resistance matrix where the $i,j$ entry corresponds to $R_{ij}$& $\mathbf{R}=\mathbf{1} \text{diag}(\mathbf{L}^+)^\top + \text{diag}(\mathbf{L}^+)\mathbf{1}^\top- 2\mathbf{L}^+$\\
$\mathbf{Z}$ & Commute Time Embedding matrix & $\mathbf{Z} = \sqrt{vol(G)}\mathbf{\Lambda^{-1/2}}\mathbf{\Phi}^\top$\\
$\mathbf{z}_u$          & Commute times embedding of node $Z_{u,:}$  \\
$\CT(u,v)$             & Commute time & $\CT(u,v)=\text{vol}(G)R_{u.v}$ \\
\midrule
$\Rg$               & Total Effective Resistance of $G$ & $\Rg =\frac{1}{2} \mathbf{1}^\top \mathbf{R1}$\\
$\Rd$               & Resistance Diameter of $G$ & $\Rd = \max_{u,v\in \mathcal{V}} R_{u,v}$\\
$\Rg(u)$            & Node Isolation or Total Effective Resistance & $\Rg(u)=\sum_{v\in V} R_{uv}$\\
$\Rd(u)$            & Node Resistance Diameter & $\Rd(u) = \max_{v\in V} R_{uv}$\\
$\Bt(u)$            & Node Control or Resistance Betweenness & $\Bt(u) = \sum_{v \in \mathcal{N}(u)} R_{uv}$\\
$\Rg(S_i)$            & Group Isolation or Total Effective Resistance & $\Rg(S_i) = |S|^{-1} \sum_{u\in S} \Rg(u)$\\
$\Rd(S_i)$            & Group Resistance Diameter & $\Rd(S) = |S|^{-1}\sum_{u \in S} \Rd(u)$\\
$\Bt(S_i)$            & Group Control or average Betweenness & $\Bt(S) = |S|^{-1}\sum_{u \in S} \Bt(u)$\\
\bottomrule
\end{tabularx}
\caption{Table of Notation}
\label{tab:notation}
\end{table*}

\end{document}